\DeclareMathOperator*{\argmin}{arg\,min}
\newcolumntype{P}[1]{>{\centering\arraybackslash}p{#1}}
\newcommand{\andd}{\bigwedge}
\newcommand{\greed}{\mathcal{G}_{S}}
\newcommand{\majorby}{\preceq}
\newcommand{\opt}{\textsc{OPT}_{S}}
\newcommand{\mat}[1]{\mathbf{#1}}
\newcommand{\coupleProb}{\textsc{MEC}}%
\newcommand{\entropyProb}{\textsc{Compute-Entropy}}
\newcommand{\profile}{\textsc{Profile}}
\newcommand{\iprof}{\textsc{Inv-Prof}}
\newcommand{\sketch}{\textsc{Sketch}}
\newcommand{\isketch}{\textsc{Inv-Sketch}}
\newcommand{\majorpro}{\textsc{Major-Profile}}
\newcommand{\rmass}{\textsc{Rem-Mass}}
\newcommand{\mass}{\textsc{Mass}}
\newcommand{\rmasssim}{\textsc{Rem-Mass}^{\textsc{simple}}}
\newcommand{\rmassadv}{\textsc{Rem-Mass}^{\textsc{advanced}}}
\newcommand{\Fc}{F_{\textrm{cost}}}
\newcommand{\fc}{f_{\textrm{cost}}}
\newcommand{\fu}{f_{\textrm{unit}}}
\renewcommand{\lg}{\log}
\newtheorem{theorem}{Theorem}[section]
\newtheorem{lemma}[theorem]{Lemma}
\newtheorem{definition}[theorem]{Definition}
\newtheorem{corollary}[theorem]{Corollary}
\newtheorem{remark}{Remark}
\newcommand{\BP}{\mathbf P}
\newcommand{\p}{\mathbf p}
\newcommand{\G}{\mathcal G}
\newcommand{\vphi}{\varphi}
\newcommand{\paren}[1]{\left( #1 \right)}
\begin{document}

\title{
Minimum-Entropy Coupling Approximation Guarantees \\Beyond the Majorization Barrier
}

\date{}

\author{
  Spencer Compton \\
  Stanford University \\
  MIT-IBM Watson AI Lab \\
  comptons@stanford.edu
  \and
  Dmitriy Katz\\
  MIT-IBM Watson AI Lab \\
  IBM Research\\
  dkatzrog@us.ibm.com
  \and
  Benjamin Qi\\
  MIT\\
  bqi343@gmail.com
  \and
  Kristjan Greenewald\\
  MIT-IBM Watson AI Lab \\
  IBM Research\\
  kristjan.h.greenewald@ibm.com
  \and
  Murat Kocaoglu\\
  Purdue University\\
  mkocaoglu@purdue.edu
}

\maketitle

\begin{abstract}
   Given a set of discrete probability distributions, the minimum entropy coupling is the minimum entropy joint distribution that has the input distributions as its marginals. This has immediate relevance to tasks such as entropic causal inference for causal graph discovery and bounding mutual information between variables that we observe separately. Since finding the minimum entropy coupling is NP-Hard, various works have studied approximation algorithms. The work of \cite{compton2022tighter} shows that the greedy coupling algorithm of \cite{kocaoglu2017entropic} is always within $\log_2(e)\approx1.44$ bits of the optimal coupling. 
   Moreover, they show that it is impossible to obtain a better approximation guarantee using the majorization lower bound that all prior works have used: thus establishing a \emph{majorization barrier}.
   In this work, we break the majorization barrier by designing a stronger lower bound that we call the \emph{profile method}. Using this profile method, we are able to show that the greedy algorithm is always within $\log_2(e)/e \approx 0.53$ bits of optimal for coupling two distributions (previous best-known bound is within 1 bit), and within $\frac{1 + \log_2(e)}{2}\approx 1.22$ bits for coupling any number of distributions (previous best-known bound is within 1.44 bits). We also examine a generalization of the minimum entropy coupling problem: \emph{Concave Minimum-Cost Couplings}. We are able to obtain similar guarantees for this generalization in terms of the concave cost function. 
   Additionally, we make progress on the open problem of \cite{kovavcevic2015entropy} regarding NP membership of the minimum entropy coupling problem by
   showing that any hardness of minimum entropy coupling beyond NP comes from the difficulty of computing arithmetic in the complexity class NP.
   Finally, we present exponential-time algorithms for computing the \emph{exactly optimal} solution. We experimentally observe that our new profile method lower bound is not only helpful for analyzing the greedy approximation algorithm, but also for improving the speed of our new backtracking-based exact algorithm. 
\end{abstract}

\section{Introduction}
Entropy, particularly in the context of its maximization, has found wide application in statistics and machine learning, e.g., in the maximum entropy principle for estimation \cite{levine1979maximum,nigam1999using}, entropic optimal transport \cite{cuturi2013sinkhorn}, and, more generally, entropic regularization \cite{grandvalet2006entropy,niu2014information,haarnoja2018soft}. 

While much more challenging to optimize due to the concavity of entropy, there recently has been increasing interest in the \emph{minimization} of entropy, particularly in the context of the {minimum entropy coupling} (MEC) problem.
Recent applications of the MEC include the entropic causal inference framework \cite{kocaoglu2017entropic,javidian2021quantum,compton2022entropic}; communications \cite{sokota2022communicating,anonymous2023perfectly}; random number generation (discussed in \cite{li2021efficient}); functional representation (discussed in \cite{cicalese2019minimum}); and dimensionality reduction \cite{vidyasagar2012metric, cicalese2016approximating}. 

The minimum entropy coupling problem seeks to answer the question: Given a set $S$ of $m$ marginal (discrete-valued) distributions, each with at most $n$ states, what is the minimum-entropy joint distribution (i.e. coupling) that explains the marginals? While simple to state and appealing from an information-theoretic viewpoint, algorithmic optimization over a concave cost (entropy) is often difficult (e.g. \cite{cardinal2008tight}). In particular, computing the minimum entropy coupling is NP-hard \cite{kovavcevic2015entropy}. Despite this challenge, the minimum entropy coupling problem continues to attract interest and has seen an increasing number of applications.

Note that the entropy of the coupling is related to the mutual information between the marginal variables, through the identity $I(X; Y) = H(X) + H(Y) - H(X,Y)$. In particular, the minimum entropy coupling is the coupling that \emph{maximizes} the mutual information, i.e. it can be viewed as the \emph{maximum mutual information} coupling. The perspective of maximizing mutual information has been very influential in machine learning \cite{viola1997alignment,torkkola2003feature,tschannen2019mutual,sun2020infograph}. In discrete-valued settings, this coupling-based upper bound on mutual information can be of interest, as it provides a tighter bound than the traditional bound by the entropy of the marginal distributions $\min(H(X),H(Y))$. Accordingly, the minimum entropy coupling enables one to upper bound the mutual information between variables that are observed separately. Additionally, couplings have immediate relevance to optimal transport (OT), where the set of all couplings is called the transport polytope. As remarked by \cite{cicalese2019minimum}, the minimum entropy coupling is the minimum entropy element in the transport polytope.

The entropic causal inference framework proposed in  \cite{kocaoglu2017entropic}, and further developed in \cite{kocaoglu2017isit,compton2020entropic,javidian2021quantum,compton2022entropic} addresses the problem of orienting causal graphs when only observational data is available. In such a regime, the data is not sufficient to orient the graph and additional assumptions are required. Existing assumption frameworks (e.g. additive noise models) were not amenable to settings such as those with categorical variables, hence \cite{kocaoglu2017entropic} introduced an empirically and intuitively motivated (e.g. Occam's razor) hypothesis that the true causal direction between a pair of categorical variables often has the smallest exogenous noise entropy associated with its generative model. Finding this exogenous noise entropy from the observed joint distribution was shown to correspond to finding the minimum entropy coupling of a (possibly large) set of distributions. The later work of \cite{compton2020entropic} showed an identifiability result that if the true causal direction has low entropy exogeneous noise, then the reverse (anticausal) direction will have large entropy exogenous noise with high probability. An extension to general multi-variable causal directed acyclic graphs was presented in \cite{compton2022entropic}. In all of these works, solving or approximating the minimum entropy coupling problem is key to applying the entropic causal framework, and in practice the greedy algorithm of \cite{kocaoglu2017entropic} is used to approximate the minimum entropy coupling. Improved guarantees on the accuracy of the greedy algorithm for many distributions such as our \cref{thm:1.22}, or improved algorithmic approaches, are therefore useful for improving the applicability and trustworthiness of entropic causality.

The recent work of \cite{sokota2022communicating} studies Markov coding games: a generalization of source coding and referential games. At a high level, the crux of their approach is to combine reinforcement learning algorithms with minimum entropy coupling algorithms, in an effort to create an agent that can communicate well through Markov decision process trajectories. As an example, they design an agent that can efficiently communicate images just through its actions in the video game Pong and still play the game well. This agent extensively uses the coupling algorithm of \cite{cicalese2019minimum} to couple pairs of distributions. The work of \cite{anonymous2023perfectly} also uses minimum entropy coupling for communication, in the context of steganography: securely encoding secret information concealed in seemingly-regular text. They show that under an information-theoretic model of steganography, a procedure is secure if and only if it corresponds to a coupling. Moreover, the minimum entropy coupling corresponds to the maximally efficient secure procedure. They demonstrate strong performance for the minimum entropy coupling-based approach for perfectly secure steganography in experiments using GPT-2 \cite{radford2019language} and WaveRNN \cite{kalchbrenner2018efficient} as communication channels. In doing so, they extensively utilize the greedy coupling algorithm of \cite{kocaoglu2017entropic} to couple pairs of distributions. Our result of \cref{thm:0.53} directly improves the approximation guarantees for coupling pairs of distributions: a crucial algorithm for both works.

The work of \cite{vidyasagar2012metric} directly computes the minimum entropy coupling (by a different name) of two distributions in their metric for dimension-reduction of stochastic processes. Accordingly, our result of \cref{thm:0.53} directly improves the best-known approximation guarantee for efficiently computing this metric. The work of \cite{cicalese2016approximating} likewise uses minimum entropy coupling for dimension reduction of probability distributions.

A variety of additional applications of minimum entropy couplings are discussed in \cite{cicalese2019minimum,li2021efficient}.

\paragraph{Our contributions.} Many recent works have focused on proving approximation guarantees for the minimum entropy coupling problem. A unifying property of \cite{cicalese2019minimum,rossi2019greedy,li2021efficient,compton2022tighter} is that they all show their approximation guarantee in relation to the same lower bound: \emph{the majorization lower bound}. Moreover, \cite{compton2022tighter} shows that better guarantees for general $m$ in relation to the majorization lower bound are impossible: thus establishing the majorization barrier.

In this work, we move away from the majorization strategy and introduce a new method for lower bounding the coupling entropy, which we call the ``profile method.'' This approach allows us to obtain stronger bounds in both the case of coupling two distributions and for coupling an arbitrary number of distributions, as is shown in \cref{table:guarantees}.

We note that for all applications, the task of choosing what coupling algorithm to use was previously unclear. Prior to our work, there were three distinct algorithms that all were shown to have a 1 bit additive approximation guarantee for coupling two distributions \cite{cicalese2019minimum,rossi2019greedy,li2021efficient}. In this sense, there was no theoretical rationale for using one algorithm over another. For example, the recent work of \cite{sokota2022communicating} used the coupling algorithm of \cite{cicalese2019minimum}, while the work of \cite{anonymous2023perfectly} used the coupling algorithm of \cite{kocaoglu2017entropic}. Our work provides clarity by improving the approximation guarantee for the greedy coupling of \cite{kocaoglu2017entropic} and providing counter-examples that show the algorithms of \cite{cicalese2019minimum,li2021efficient} do not match its guarantees (see \cref{section:gaps}).

We also observe that our profile method is not limited to minimum entropy coupling, as we extend our approach to general concave minimum-cost couplings. 

In addition to new (highly efficient to compute) lower bounds on the coupling, we also examine the problem of exact computation of the minimum entropy coupling problem. First, we make progress on the open problem of \cite{kovavcevic2015entropy} regarding NP membership of the minimum entropy coupling problem by providing an NP-reduction from minimum entropy coupling to the problem of simply calculating the Shannon entropy of a distribution. In essence, this shows that any hardness of minimum entropy coupling beyond NP comes from the difficulty of computing arithmetic in the complexity class NP. 

We then prove a structural result that enables faster (but still exponential-time) algorithms for exactly coupling two distributions. Namely, we propose a dynamic programming and backtracking algorithm that are significantly faster than the naive baseline. We observe that our profile method lower bound mentioned above is not only helpful for analysis, but also for speeding up the backtracking algorithm.

We summarize the contributions of our paper as follows:
\begin{enumerate}
    \item A new profile method for lower bounding the entropy of the optimal coupling.
    \item Approximation guarantee for coupling two distributions: $H(\greed) \le H(\opt) + \frac{\log_2(e)}{e} \approx H(\opt) + 0.53$, where $\greed$ is the greedy coupling.
    \item Approximation guarantee for coupling an arbitrary number of distributions: $H(\greed) \le H(\opt) + \frac{1 + \log_2(e)}{2} \approx H(\opt) + 1.22$.
    \item An NP-reduction from minimum entropy coupling to the problem of simply calculating the Shannon entropy of a distribution. 
    \item %
    Experimentally faster (still exponential-time) algorithms for exactly coupling two distributions.
    \item Experimental results showing that the profile method lower bound is not only helpful for analysis, but also for speeding up the backtracking algorithm.
\end{enumerate}
\begin{table}[htbp] 
\centering
\caption{Best-Known Additive Approximation Guarantees} \label{table:guarantees}
\scriptsize
\begin{tabular}{|P{0.5in}|P{1.6in}|P{1.4in}|}
\hline
\textbf{} & \textbf{Prior Work}& \textbf{This Work} \\
\hline
{$m=2$}& $1$ \cite{cicalese2019minimum,rossi2019greedy,li2021efficient} & \vspace{-0.05cm} ${\frac{1}{e}\log_2(e) \approx 0.53}$ \\
\hline
{General $m$}& \vspace{-0.05cm} $\log_2(e) \approx 1.44$ \cite{compton2022tighter} & \vspace{-0.05cm} ${\frac{1}{2}(\log_2(e) + 1)\approx 1.22}$ \\
\hline
\multicolumn{3}{l}{\cref{section:small-m} contains a more complete table with additional results for small $m$.}
\end{tabular}
\label{tab1}
\end{table}

\section{Related Work and Background}\label{section:related-background}

\textit{Notation:} $\ln$ and $\lg$ denote $\log_e$ and $\log_2$, respectively. $H$ denotes Shannon entropy.  Throughout the paper, assume the states of any probability distribution $p$ are sorted in non-increasing order of size; that is, $p(1)\ge p(2) \ge \dots \ge p(|p|)$. We say that a probability distribution is \textit{possibly partial} if the sum of its sizes may be less than one. $S$ denotes a set of (possibly partial) probability distributions, all with the same total mass $\mass(S)$. $\opt$ denotes an optimal coupling of $S$. 

\textit{Couplings:} A coupling $\mathcal{C}$ is a joint distribution over $m$ input distributions $p_1, \dots, p_m$ such that the marginals of $\mathcal{C}$ are the input distributions. In particular, each state of a coupling $\mathcal{C}$ maps to exactly one state for each of the input distributions, where $\mathcal{C}(i_1, i_2, \dots, i_{m})$ maps to $p_1(i_1), p_2(i_2), \dots, p_m(i_m)$. Thus, the marginal condition means that the total mass of a coupling's states mapping to a particular state $i_k$ of an input distribution $p_k$ must be equal to the mass of $p_k(i_k)$, or equivalently $\mathcal{C}(\cdot, \dots, i_k, \cdot) = p_k(i_k)$ for all $k,i_k$. For simplicity of presentation, we will denote every input distribution $p_i$ as having $n$ states $p_i(1),\dots,p_i(n)$. However, we emphasize that all our results hold for coupling distributions with different numbers of states; we can simply ``pad'' the smaller distributions by appending states with probability 0.

\textit{Greedy Coupling:} Our work will analyze the greedy coupling algorithm of \cite{kocaoglu2017entropic}, formally described in \cref{alg:greedy}. In words, the algorithm repeatedly makes a greedy choice for the next state of the coupling, where it creates a state that maps to the maximally large state in each distribution, and has size equal to the minimum of these maximums.
\cite{kocaoglu2017isit} showed this algorithm is locally optimal, \cite{rossi2019greedy} showed it finds a coupling within 1 bit of the optimal entropy for coupling two distributions, and \cite{compton2022tighter} showed it is within $\log(e) \approx 1.44$ bits for coupling arbitrary numbers of distributions. We use $\greed$ to denote the sequence of sizes of states the greedy algorithm produces. It is known that $|\greed| \le nm - (m-1)$ and $\greed$ is non-increasing.

\begin{algorithm}[ht!]
\begin{small}
    \caption{Greedy Coupling \cite{kocaoglu2017entropic}}
   \label{alg:greedy}
\begin{algorithmic}[1]
    \State {\bfseries Input:} Marginal distributions of $m$ variables each with $n$ states, in matrix form $\mat{M} = [p_1^T;p_2^T;...;p_m^T]$.
    \State $\greed = [ \hspace{0.1in}]$
    \State Sort each row of $\mat{M}$ in non-increasing order.
    \State Find minimum of maximum of each row: $r\leftarrow \min_i(p_i(1))$
     \While  {$r>0$} 
     \State Append $r$ as the next state of $\greed$.
     \State Update maximum of each row: $p_i(1)\leftarrow p_i(1)-r, \forall i$
     \State Sort each row of $\mat{M}$ in non-increasing order.
     \State $r\leftarrow \min_i(p_i(1))$
    \EndWhile
    \State \Return $\greed$.
\end{algorithmic}
\end{small}
\end{algorithm}

\textit{Majorization:}
A distribution $p$ is majorized by a distribution $q$ (denoted by $p \majorby q$) if and only if $\sum_{j=1}^i p(j) \le \sum_{j=1}^i q(j)$ for all $i \in \{1,\dots,|q|\}$ \cite{marshall1979inequalities}. It is known that any valid coupling must be majorized by all distributions in $S$, meaning $\opt \majorby p,  \forall p \in S$ \cite{cicalese2019minimum}. Consider the partial ordering induced by majorization. Let $\andd S$ denote the ``maximal'' distribution in this partial ordering such that $\andd S \majorby p, \, \forall p \in S$, specifically, $\andd S(i) = \left({\min_{p \in S} \sum_{j=1}^{i} p(j)}\right) - \andd S(i-1)$. It can be shown that if $p \majorby q$ then $H(q) \le H(p)$ (this holds for any concave function), hence by definition $H(\opt) \ge H(\andd S)$. We thus have $H(\andd S)$ as a lower bound to $H(\opt)$, which is itself upper bounded by the greedy algorithm. This fact can be used to provide an approximation guarantee for the greedy algorithm, by bounding the gap between $H(\andd S)$ and the entropy returned by the greedy algorithm. \cite{compton2022tighter} shows that this gap is at most $\log_2(e) \approx 1.44$ bits, and shows that this $\log_2(e)$ gap can be achieved for general $m$. Hence the bound is tight in the sense that no better approximation guarantee is possible for greedy coupling in terms of this $H(\andd S)$ gap for general $m$.

\section{Novel Lower Bound: the Profile Method}\label{section:novel-lower-bound}

As mentioned above, prior works show guarantees for minimum entropy coupling in relation to the majorization lower bound $H(\andd S)$. Our work will introduce a stronger lower bound that will enable us to show stronger approximation guarantees that break the majorization barrier.

\begin{figure*}[ht!]
\begin{center}
\subfigure[{Constraints of $p_1$ ([0.5,0.4,0.1])}]{\label{fig:profile-A}\includegraphics[width=0.45\textwidth]{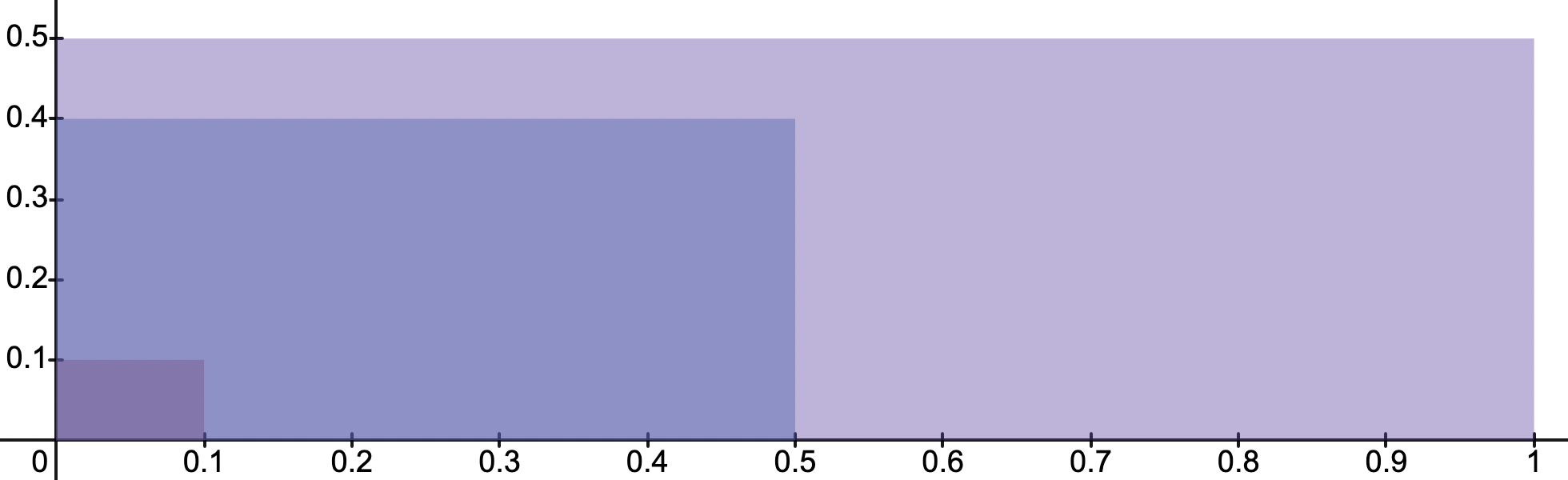}}\hspace{0.3in}
\subfigure[Binding constraints of $p_1$]{\label{fig:profile-B}\includegraphics[width=0.45\textwidth]{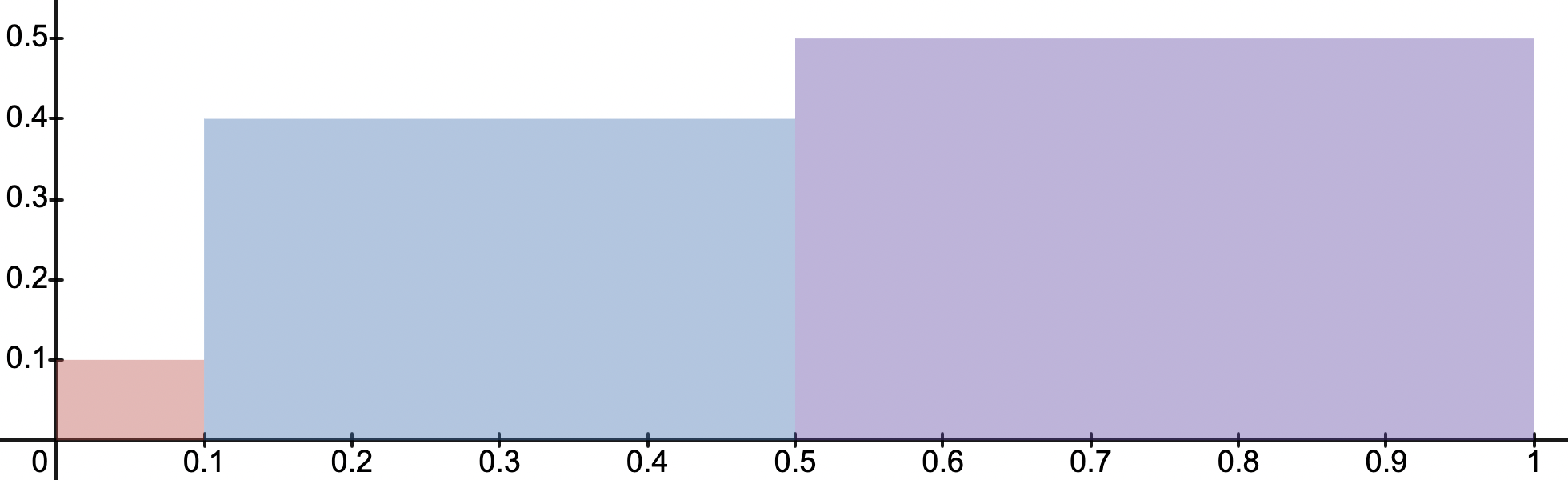}}\\ %
\subfigure[{Sketches of $p_1$ (red), $p_2$ ([0.6,0.2,0.2], blue), and their profile (black) }]{\label{fig:profile-C}\includegraphics[width=0.45\textwidth]{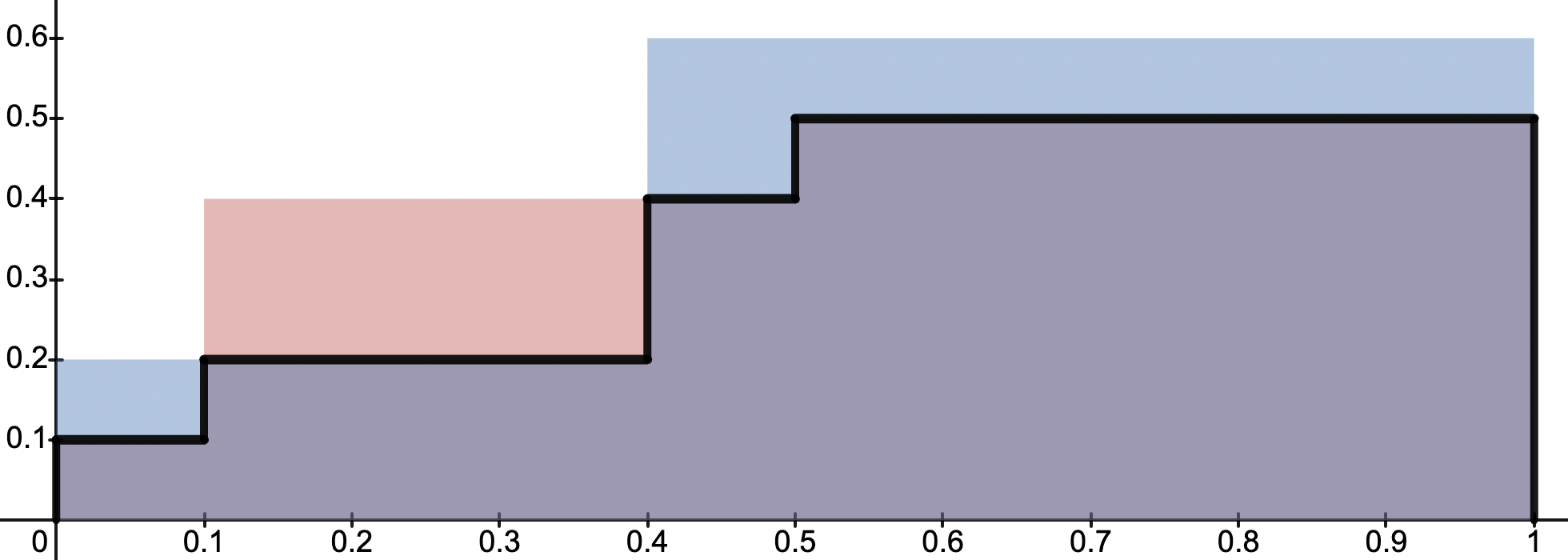}}\hspace{0.3in}
\subfigure[{Sketches of $p_1$ (red), $p_2$ (blue), their profile (black), and a valid coupling ([0.5,0.2,0.2,0.1], orange)}]{\label{fig:profile-D}\includegraphics[width=0.45\textwidth]{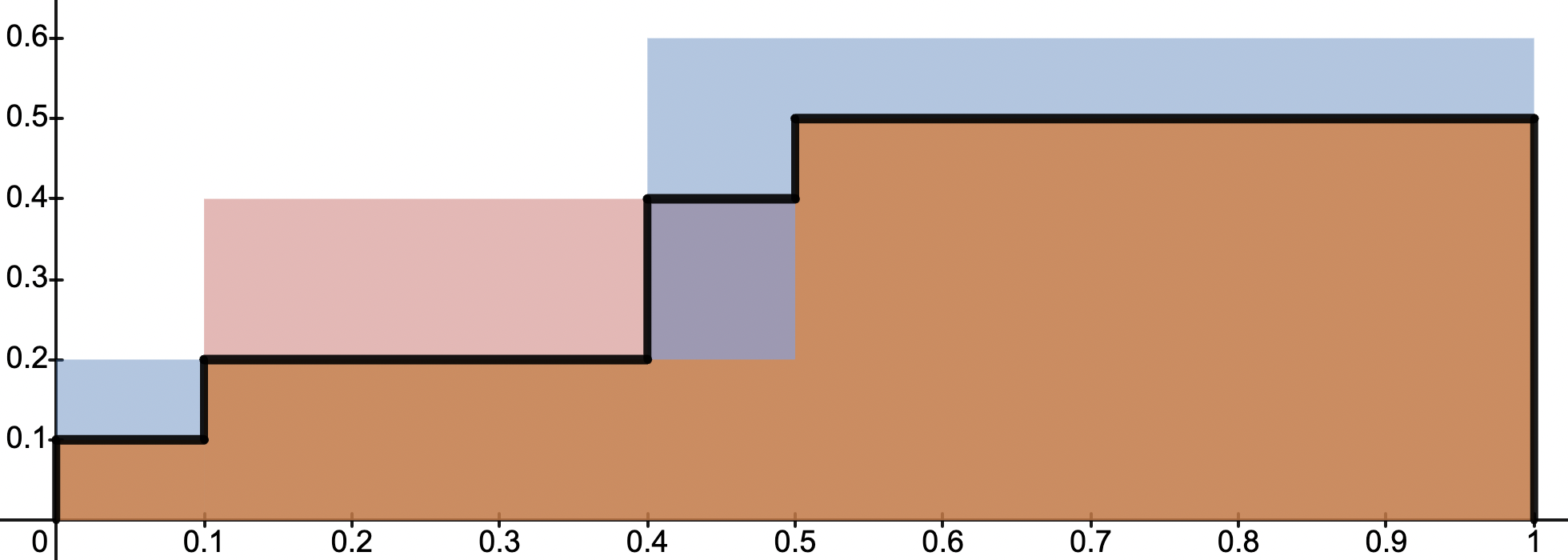}}%
\caption{Visualizations of the profile method.}
\label{fig:profile-pics}
\end{center}
\end{figure*}

\paragraph{Motivation.} In designing a new lower bound, we begin from first principles and examine what properties a valid coupling must obey. For example, consider a distribution $p_1 = [0.5, 0.4, 0.1]$. We would like to analyze necessary conditions for a valid coupling including $p_1$. Any valid coupling must have states that map to the smallest state of $p_1$ of size $0.1$. More concretely, these coupling states will have total size $0.1$ and each such state must be of size $\le 0.1$. As such, the coupling must have at least $0.1$ mass that comes from states of size $\le 0.1$. A similar statement is true from the coupling needing to have states that map to the state of $p_1$ of size 0.4: at least $0.1+0.4=0.5$ mass must come from states of size $\le 0.4$. There are accordingly three such necessary constraints for a valid coupling:
\begin{enumerate}
    \item At least $0.1$ mass that comes from states of size $\le 0.1$.
    \item At least $0.5$ mass that comes from states of size $\le 0.4$.
    \item At least $1.0$ mass that comes from states of size $\le 0.5$.
\end{enumerate}
In \Cref{fig:profile-A} we visualize these constraints with the former quantity corresponding to the $x$ dimension, and the latter quantity as the $y$ dimension. It is fitting that these constraints are left-aligned to maximize their overlap, as this corresponds to the constraints being minimally binding.  If we clean this visualization by eliminating parts of constraints that are not binding, we obtain \Cref{fig:profile-B}. Notably, this visualization provides a \emph{sketch} of the distribution $p_1$, where the sketch is a visualization in which the $i$-th state corresponds to a box of width and height $p_1(i)$. In terms of this visualization, \emph{our necessary conditions are exactly enforcing that the sketch of any valid coupling must not exceed the sketch of $p_1$}. We generalize this motivation by considering the corresponding constraints and visualization in the presence of another distribution $p_2 = [0.6, 0.2, 0.2]$. Combining the constraints imposed by coupling $p_1$ and $p_2$, they enforce that the sketch of any valid coupling must not exceed the lower-envelope of the input distributions' sketches. We will call this lower-envelope the \emph{profile} of the input distributions. In \Cref{fig:profile-C}, these constraints are visualized along with the profile of the sketches for $p_1$ and $p_2$. As an example, one valid coupling for $p_1$ and $p_2$ has the distribution $[0.5, 0.2, 0.2, 0.1]$. As visualized in \Cref{fig:profile-D}, the sketch of the coupling never exceeds the profile of $p_1$ and $p_2$, as is required by any valid coupling. 

More generally, the profile method will work to obtain a lower bound for the optimal coupling by relating this profile-related constraint on the sketch representation of any valid coupling to the required entropy for any valid coupling. In particular, we will observe how for a point on the profile of height $y$, this will correspond to some mass of the coupling that must come from a state of size $\le y$, and thus must be contributing information $\ge \log(\frac{1}{y})$. This will enable us to integrate over the profile after such a logarithmic transformation to obtain a lower bound for the optimal coupling. Note that the existing lower bound of $\andd S$ does not utilize the constraints we outline in this motivation, and it is by using the information provided by these constraints that the profile method is able to obtain a stronger bound.

\paragraph{The profile method.}
We now more concretely describe the profile method, starting with its sketch visualization. For a (possibly partial) distribution $p$, consider visualizing it in a way similar to a sketch, where the states are sorted from left to right in non-decreasing order of size, and a state is represented by a square box with width and height $p(i)$. We formally define it as follows:

\begin{restatable}[]{definition}{sketchdef}
\label{def:sketch}
$\sketch_p(x)$ is defined for $x \in (0,\mass(p)]$ such that $\sketch_p(x) = p(i)$ for $x \in (\sum_{j>i}p(j), \sum_{j\ge i}p(j)]$.
\end{restatable}

For a particular example where $S$ contains two distributions $p_1 = [0.6,0.2,0.2]$ and $p_2 = [0.5,0.4,0.1]$, we illustrate their sketches in \Cref{fig:profile-C}. As previously explained, the constraints of coupling necessitate that the sketch of any valid coupling does not exceed the sketch of any input distribution. Another way in which sketches are insightful is that we can relate the entropy of a distribution to its sketch. As a point on a distribution's sketch of height $y$ corresponds to mass contributing information $\lg(\frac{1}{y})$, the entropy of a distribution is exactly obtained by integrating over its sketch after applying a logarithmic transformation:
\begin{remark} \label{remark:integral-p}
$H(p)=\int_0^{\mass(p)} \lg \left( \frac{1}{\sketch_p(x)} \right) \, dx$.
\end{remark}

Recall how the sketch of any valid coupling must not exceed the sketch of any input distribution. Equivalently, we can say the sketch of any valid coupling must not exceed the lower-envelope of the input distributions' sketches. We refer to this lower-envelope as \emph{the profile}:
\begin{restatable}[]{definition}{profiledef}
\label{def:profile}
$\profile_S(x) \triangleq \min\limits_{p \in S} \sketch_p(x)$.
\end{restatable}
An example of the profile is illustrated in \Cref{fig:profile-C}. Note how (unlike $\andd S$) the profile does not necessarily correspond to a partial distribution nor does it necessarily consist of squares. Intuitively, it is a much more flexible object that will allow us to obtain a better lower bound for the optimal coupling. Similar to our observation in \cref{remark:integral-p} that the integral of a distribution's sketch after a log-transformation is equal to its entropy, we define an analogous quantity for the profile:
\begin{restatable}[]{definition}{profileintdef}
\label{definition:profile-int} Let the profile ``entropy'' be
$H(\profile_S) \triangleq \int_0^{\mass(S)} \lg \left( \frac{1}{\profile_S(x)} \right)  dx$.
\end{restatable}
As the sketch of any valid coupling must not exceed the profile, every point on the profile of height $y$ can be bijectively mapped to mass in the valid coupling from a state of size $\le y$, and thus contributing $\ge \lg(\frac{1}{y})$ entropy. This finally implies our novel lower bound for the optimal minimum entropy coupling (full proof in \cref{section:profile-lb-proof}):
\begin{restatable}[]{theorem}{profilelb} \label{theorem:profile-lb}
$H(\opt) \ge H(\profile_S)$.
\end{restatable}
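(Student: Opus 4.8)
The plan is to make the intuition in the motivation rigorous by exhibiting, for any valid coupling $\mathcal{C}$, a measure-preserving (almost-everywhere) rearrangement that shows $\sketch_{\mathcal{C}}(x) \le \profile_S(x)$ pointwise after reindexing. Concretely, first I would fix a valid coupling $\mathcal{C}$ with state sizes $c_1 \ge c_2 \ge \dots$ summing to $\mass(S)$, and note that it suffices to prove $H(\mathcal{C}) \ge H(\profile_S)$ for an arbitrary valid $\mathcal{C}$, since $\opt$ is one such coupling. Using \cref{remark:integral-p}, $H(\mathcal{C}) = \int_0^{\mass(S)} \lg(1/\sketch_{\mathcal{C}}(x))\,dx$, so the whole theorem reduces to the pointwise inequality $\sketch_{\mathcal{C}}(x) \le \profile_S(x)$ for all $x \in (0,\mass(S)]$, because $\lg(1/\cdot)$ is monotone decreasing. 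So the real content is: the sketch of any valid coupling never exceeds the sketch of any individual input distribution $p \in S$, hence never exceeds their lower envelope $\profile_S$.

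Next I would prove $\sketch_{\mathcal{C}}(x) \le \sketch_p(x)$ for a single fixed $p \in S$. The key combinatorial fact is the one highlighted in the motivation: for each state index $i$ of $p$, every coupling state that maps to one of the states $p(i), p(i+1), \dots, p(n)$ (the states of size $\le p(i)$) has size at most $p(i)$, and the total mass of such coupling states is exactly $\sum_{j \ge i} p(j)$ by the marginal constraint. In the sketch of $\mathcal{C}$, the coordinate $x$ ranges over cumulative mass with states sorted in non-decreasing order of size, so $\sketch_{\mathcal{C}}(x) \le t$ exactly when $x$ is at most the total mass of all coupling states of size $\le t$. Setting $t = p(i)$, that total mass is at least $\sum_{j\ge i}p(j)$ (it includes all states mapping to $p(i),\dots,p(n)$, and possibly more), while $\sketch_p(x) \le p(i)$ precisely for $x \le \sum_{j\ge i} p(j)$. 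Combining these two facts for the particular index $i$ with $\sketch_p(x) = p(i)$ gives $\sketch_{\mathcal{C}}(x) \le p(i) = \sketch_p(x)$. Taking the minimum over $p \in S$ then yields $\sketch_{\mathcal{C}}(x) \le \profile_S(x)$, and integrating the log-transform finishes the proof.

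The main obstacle I anticipate is purely bookkeeping: making the "sorted in non-decreasing order of size" indexing precise enough that the cumulative-mass argument is airtight, especially handling ties (several states of equal size) and the fact that $\mathcal{C}$ may have many more states than $p$. The cleanest way around this is to avoid talking about individual states and instead work with the "mass-at-size-$\le t$" functions directly: define, for any (possibly partial) distribution $q$, the function $G_q(t) = \sum_{i : q(i) \le t} q(i)$, observe that $\sketch_q(x) \le t \iff x \le G_q(t)$ for all $t$ that are attained sizes (and extend by right-continuity / monotonicity otherwise), and then the whole argument is the single inequality $G_{\mathcal{C}}(t) \ge G_p(t)$ for every threshold $t$ and every $p \in S$, which is immediate from the marginal constraint because every coupling state mapping to a $p$-state of size $\le t$ itself has size $\le t$. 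This reformulation sidesteps all the tie-breaking issues and reduces \cref{theorem:profile-lb} to one clean monotonicity statement plus \cref{remark:integral-p}.
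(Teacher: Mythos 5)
Your proposal is correct and is essentially the paper's own argument: your function $G_q(t)$ is exactly the paper's inverse sketch $\isketch_q(y)$, and your central inequality $G_{\mathcal{C}}(t)\ge G_p(t)$ for all thresholds $t$ and all $p\in S$ is precisely the paper's key lemma that $\isketch_{\opt}(y)\ge\iprof_S(y)$. The only differences are presentational: you justify that inequality directly from the marginal constraint (rather than the paper's argument that $\opt$ is obtained from $p$ by repeatedly splitting states, which can only increase $\isketch$), and you return to the sketch domain via the pointwise bound $\sketch_{\mathcal{C}}(x)\le\profile_S(x)$ and monotonicity of $\lg(1/\cdot)$, whereas the paper instead defines an entropy for inverse profiles and proves its consistency with $H(\profile_S)$ by a change of variables and integration by parts.
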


\section{Better Greedy Coupling Approximation Guarantees}\label{section:coupling-guarantees}
In this section, we will use this new lower bound gained from the profile method to obtain better approximation guarantees for the greedy coupling algorithm. 
\subsection{Coupling Two Distributions} \label{section:two-couple}

For coupling just $m=2$ distributions, the surprising state of affairs is that there are three distinct algorithms that all attain a 1 bit additive approximation guarantee as shown in \cite{cicalese2019minimum,rossi2019greedy,li2021efficient}. It is then natural to wonder if this is the best possible approximation guarantee. Further, we can construct instances where $H(\opt) \approx H(\andd S) + 0.66$ (see \cref{section:gaps}), meaning it is impossible to get a better approximation guarantee than 0.66 bits with respect to the previously used lower bound of $H(\andd S)$. Yet, with a proof relating to the profile method lower bound, we break the majorization barrier and show that the greedy coupling algorithm of \cite{kocaoglu2017entropic} is always additively within $\approx 0.53$ bits of the optimum:

\begin{restatable}[]{theorem}{twoCouple}
\label{thm:0.53}

For $m=2$, $H(\G_S)\le H(\profile_S)+\frac{\lg e}{e} \approx H(\profile_S)+0.53$.

\end{restatable}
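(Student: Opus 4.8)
The plan is to work in the ``sketch'' picture throughout. Since $\greed$ is itself a (possibly partial) distribution of total mass $\mass(S)$, \cref{remark:integral-p} applied to $\greed$ together with \cref{definition:profile-int} gives
\[
H(\greed)-H(\profile_S)=\int_0^{\mass(S)}\lg\frac{\profile_S(x)}{\sketch_{\greed}(x)}\,dx ,
\]
and the integrand is nonnegative, since $\greed$ is a valid coupling and hence (by the argument behind \cref{theorem:profile-lb}) satisfies $\sketch_{\greed}(x)\le\profile_S(x)$ for all $x$. Let $g_1\ge g_2\ge\cdots$ be the greedy state sizes, $R_0=\mass(S)$, $R_k=\sum_{j>k}g_j$, and $I_k=(R_k,R_{k-1}]$; then $\sketch_{\greed}\equiv g_k$ on $I_k$, and, $\profile_S$ being nondecreasing, $\profile_S\le P_k:=\profile_S(R_{k-1})$ on $I_k$. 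Since $\int_{I_k}\lg\tfrac{\profile_S(x)}{g_k}\,dx\le g_k\lg(P_k/g_k)$, it suffices to show $\sum_k g_k\lg(P_k/g_k)\le\tfrac{\lg e}{e}$; that is, to control how far the profile rises above the greedy size $g_k$ near position $R_{k-1}$.

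The core of the proof is a structural analysis of \cref{alg:greedy} projected onto a single input distribution. At step $k$, let $p\in S$ be an input whose current remaining maximum equals $g_k$ (one exists, as $g_k$ is the minimum of the two current maxima); all of $p$'s remaining pieces are then $\le g_k$. Because the algorithm always shaves the current maximum, it consumes the original pieces of $p$ in weakly decreasing order, destroying all but at most one partially-shaved fragment at any moment; this identifies the original piece of $p$ occupying position $R_{k-1}$ of $\sketch_p$, and hence bounds $P_k\le\sketch_p(R_{k-1})$ by that piece's size. The excess $g_k\lg(P_k/g_k)$ at $I_k$ is then charged to the earlier (hence at least as large) greedy pieces that were spent shaving that same original piece down to the current maximum: each unit of ``profile room'' above $g_k$ is paid for by a distinct earlier piece $g_{k'}$ with $g_{k'}\ge g_k$, so the total logarithmically-weighted excess over all $k$ cannot accumulate. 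Formalizing this charging to yield the exact inequality needed is the crux.

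Granting such a charging lemma, bounding $\sum_k g_k\lg(P_k/g_k)$ reduces to a calculus optimization: maximize the weighted excess over nonincreasing nonnegative sequences with $\sum_k g_k=\mass(S)$, subject to $g_k\le P_k\le\mass(S)$, $P_1=g_1$, and the charging constraints. A smoothing/exchange argument collapses the extremal instance to one governed by a single free parameter $t$ --- the ratio, at the critical step, between the greedy size and the available profile value --- where the objective equals $t\lg\tfrac{1}{t}$; maximizing over $t\in(0,1]$ gives $\tfrac{\lg e}{e}$ at $t=1/e$, which is exactly the claimed constant.

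The main obstacle is the charging lemma. The naive pointwise bound $\profile_S(x)\le\profile_S(\mass(S))=g_1$ is far too weak (it only yields $H(\profile_S)\ge\mass(S)\lg\tfrac{1}{g_1}$), so one must genuinely follow the single leftover fragment through the greedy run and argue that whenever the profile rises well above the current greedy size, the algorithm has already produced several comparably-large pieces ``above'' that point. A secondary difficulty is that the optimization is global rather than per-interval: each term $g_k\lg(P_k/g_k)$ could individually be as large as $g_k\lg\tfrac{1}{g_k}$, so the total-mass, monotonicity, and charging constraints must all be used together to rule out every interval being bad simultaneously.
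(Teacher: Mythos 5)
Your setup is fine --- writing $H(\greed)-H(\profile_S)=\int_0^{\mass(S)}\lg\bigl(\profile_S(x)/\sketch_{\greed}(x)\bigr)\,dx$ and relaxing each interval $I_k$ to $g_k\lg(P_k/g_k)$ is valid --- but the proof has a genuine gap: the ``charging lemma'' that you yourself identify as the crux is never stated precisely, let alone proved, and the structural claim it rests on is false. The greedy algorithm does \emph{not} consume the original states of a single marginal in weakly decreasing order with at most one partially-shaved fragment alive at a time: take $p=[0.5,0.4,0.1]$ and $q=[0.3,0.25,0.25,0.2]$; after the first two greedy steps (pieces $0.3$ and $0.25$, both forced by $q$), the remaining fragments of $p$ are $\{0.2,0.15,0.1\}$, so \emph{two} original states of $p$ are simultaneously partially shaved, and this happens precisely at a step where $p$ is the marginal attaining the minimum. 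Only the sequence of produced piece sizes is non-increasing; within one marginal the shaving pattern can interleave. Consequently the mechanism by which you propose to pay for the excess $g_k\lg(P_k/g_k)$ with ``distinct earlier pieces'' is unsound, and the subsequent ``smoothing/exchange argument'' that is supposed to collapse the global optimization to the single-parameter maximum $\max_{0<t\le 1} t\lg(1/t)=\lg(e)/e$ is pure assertion. Note also that your intermediate target $\sum_k g_k\lg(P_k/g_k)\le \lg(e)/e$ is itself unverified: replacing $\profile_S$ by its value at the right endpoint of each interval is a lossy relaxation of a quantity whose true supremum is only known to lie between roughly $0.46$ and $0.53$, so it is not even clear the relaxed inequality holds.

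For contrast, the paper avoids a global charging argument entirely. It tracks the monovariant $M^t=H(\profile_{S^t})+H(\greed^t)$, where the profile is recomputed on the \emph{remaining} (partial) distributions $S^t$ after each greedy step, and shows the per-step increase is at most $\frac{\lg e}{e}\,\greed(t+1)$: with $p^t(1)\le q^t(1)$, the only place the inverse profile can rise is an interval of width $q^t(1)-p^t(1)$, giving an error term $(q^t(1)-p^t(1))\lg\frac{p^t(1)}{q^t(1)-p^t(1)}$ (zero when $q^t(1)\ge 2p^t(1)$), which is bounded by $p^t(1)\cdot\max_{0<r<1} r\lg(1/r)$. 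Your appearance of the constant $\lg(e)/e$ via $r=1/e$ is the right intuition, but the accounting there is per step and proportional to the overlap width $q^t(1)-p^t(1)$, not to the full greedy width $g_k$ against a fixed original profile; that finer accounting is exactly what your write-up is missing. As it stands, the proposal reduces the theorem to two unproven claims that carry all of its quantitative content.
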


\paragraph{Proof intuition for \cref{thm:0.53}.}
Let $S^t$ denote the distributions of $S$ after the $t$-th state of the greedy coupling. Similarly, $\greed^t$ represents the vector $\greed(1),\dots,\greed(t)$. We will consider the evolution of the non-decreasing monovariant $M^t\triangleq H(\profile_{S^t})+H(\greed^t)$. Initially, nothing is coupled and we have $M^0=H(\profile_{S^0}) + H(\greed^t) = H(\profile_S) + 0$. In the end, everything is coupled and we have $M^{|\greed|}=H(\profile_{S^{|\greed|}}) + H(\greed^{|\greed|}) = 0 + H(\greed)$. We will show that the profile method implies 
\begin{equation}
M^{t+1}\le M^{t}+\frac{\lg e}{e}\greed(t+1).\label{ineq:m=2-monovariant}
\end{equation}
\cref{ineq:m=2-monovariant} would suffice to prove the conclusion because then
\begin{align*}
H(\greed)=M^{|\greed |}
&\le M^0 + \frac{\lg e}{e}\sum_{t=1}^{|\greed|}\greed(t)\\
&= H(\profile_S) + \frac{\lg e}{e}.
\end{align*}
We now outline the derivation of \cref{ineq:m=2-monovariant}. Consider the two distributions in $S^t$ to be denoted by $p^t$ and $q^t$ (sorted to be in non-increasing order). Without loss of generality, suppose $p^t(1) \le q^t(1)$. This means our greedy algorithm will select $\greed(t+1) = p^t(1)$, the state $p^t(1)$ will be fully coupled, and $q^t(1)$ will afterwards have size $q^t(1)-p^t(1)$. With some calculation, we can bound the increase in our monovariant $M^{t+1}-M^{t}$ by $(q^t(1)-p^t(1)) \cdot \max\paren{0,\lg(\frac{1}{q^t(1)-p^t(1)}) - \lg(\frac{1}{p^t(1)})}$.  We can in turn bound this expression by
\begin{align}
& p^t(1)\cdot\!\!\! \max_{p^t(1)< q^t(1)< 2p^t(1)}\!\left [\!\frac{q^t(1)-p^t(1)}{p^t(1)}\lg\!\paren{\!\frac{p^t(1)}{q^t(1)-p^t(1)}\!}\!\right ]\nonumber\\
& = p^t(1)\cdot \max_{0< r< 1}\left[r\lg \frac{1}{r}\right] = p^t(1)\cdot \frac{\lg e}{e}, \label{eq:final-0.53}
\end{align}

where the maximum is attained at $r=e^{-1}$. As $\greed(t+1) = p^t(1)$, \cref{eq:final-0.53} implies \cref{ineq:m=2-monovariant}, thus concluding our proof outline. Hence, the profile method enables us to naturally analyze how the monovariant of the profile and greedy evolves, and attain an approximation guarantee breaking the majorization barrier. The full proof is in \cref{section:0.53-proof}.

\textit{Remark.} This proof method can also show guarantees for small $m$ that are better than the guarantee for general $m$ in \cref{thm:1.22}. More details are included in \cref{section:small-m}.

\subsection{Coupling an Arbitrary Number of Distributions} \label{section:many-couple}

For coupling many distributions, the result of \cite{compton2022tighter} showed that the greedy coupling algorithm of \cite{kocaoglu2017entropic} is within $\approx 1.44$ bits of the optimal and that it is impossible to get a better guarantee with respect to the $H(\andd S)$ lower bound. Here we use the profile lower bound to break the majorization barrier and show the greedy coupling is always within $\approx 1.22$ bits of the optimal:

\begin{restatable}[]{theorem}{manyCouple}
\label{thm:1.22}

$H(\G_S)\le H(\profile_S)+\frac{1+\lg e}{2} \approx H(\profile_S)+1.22$.
\end{restatable}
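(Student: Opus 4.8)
The plan is to mimic the monovariant strategy used for \cref{thm:0.53}, but with a bound on the per-step increase that degrades gracefully as the number of distributions grows. As before, let $S^t$ denote the multiset of (partial) distributions remaining after the greedy algorithm has carved off its first $t$ states, and let $\greed^t = \greed(1),\dots,\greed(t)$. We track the monovariant $M^t \triangleq H(\profile_{S^t}) + H(\greed^t)$, which starts at $M^0 = H(\profile_S)$ and ends at $M^{|\greed|} = H(\greed)$ (since $\profile_{S^{|\greed|}}$ has zero mass). It therefore suffices to prove a per-step inequality of the form $M^{t+1} \le M^t + \tfrac{1+\lg e}{2}\,\greed(t+1)$, since summing over $t$ and using $\sum_t \greed(t) = \mass(S) = 1$ yields the theorem. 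The key difference from the $m=2$ case is that now multiple distributions' leading states may shrink (not just one), so the bookkeeping on how $\profile_{S^t}$ changes is more delicate.

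First I would analyze a single greedy step. Write $r = \greed(t+1) = \min_i p_i^t(1)$, where the $p_i^t$ are the current distributions sorted non-increasingly. Only the distributions achieving values in $[r, 2r)$ at their leading state can have their leading state's sketch-box drop below $r$ after subtracting $r$; for each such distribution the leading box goes from some value $a \in [r,2r)$ to $a - r \in [0, r)$. The change $H(\profile_{S^{t+1}}) - H(\profile_{S^t})$ is governed by how much the \emph{minimum} over distributions (the profile) is lowered on the relevant $x$-interval, and by the loss of $r$ units of mass at the far-left end. The crucial observation is that the mass newly ``exposed'' near the bottom of the profile is at most $r$ in total (it comes from the single box of width $r$ that greedy removes), and on that mass the profile height is between $0$ and something controlled by $r$. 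So the increase in $H(\profile)$ is at most an integral of $\lg(1/y)$ over a region of total width $\le r$ where $y$ ranges in a controlled way, while the increase $H(\greed^{t+1}) - H(\greed^t) = r\lg(1/r)$ is the other contribution. Combining and normalizing by $r$ (setting the free ratios to lie in $(0,1)$) reduces the per-step bound to maximizing an explicit one- or two-variable function; I expect the worst case to be a clean product/sum of terms like $s\lg(1/s)$ plus a linear term, whose maximum is $\tfrac{1+\lg e}{2}$ — the ``$1$'' coming from a box that contributes a full unit of width-times-height-one worth of structure and the ``$\lg e$'' being the familiar $\max_s s\lg(1/s) = \lg e/e$ scaled up, or more plausibly from splitting the bound into a ``majorization-like'' part contributing $1$ and a ``profile-correction'' part contributing $\lg e$, averaged.

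The main obstacle, as in the two-distribution case but more so, is establishing the tight per-step bound $M^{t+1} - M^t \le \tfrac{1+\lg e}{2}\,r$. This requires (i) a careful identification of exactly which $x$-intervals of the profile change when we subtract $r$ from several leading boxes simultaneously, keeping in mind that after subtraction the distributions get re-sorted, so a formerly-leading box may move rightward in the sketch; (ii) an argument that the ``new'' low portion of the profile has total width at most $r$ and height bounded in terms of $r$; and (iii) solving the resulting constrained optimization to get the constant $\tfrac{1+\lg e}{2}$ rather than something weaker. I would handle (i)–(ii) by working directly with \cref{def:profile} and \cref{definition:profile-int} and splitting the integral $\int \lg(1/\profile_{S^t})$ into the part over $x \in (0, r]$ (which changes because a full unit of mass worth of the leftmost boxes is consumed) and the part over $x > r$ (which changes only where some distribution's re-sorted leading box now dips below the old profile); each piece should be bounded by an expression of the form $r \cdot \big(c_1 + c_2 \lg(1/r') \big)$ for appropriate sub-ratios $r'$, and then (iii) is a finite-dimensional calculus exercise. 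Comparing against the known tight example from \cite{compton2022tighter} (where the $H(\andd S)$ gap is exactly $\lg e$) should serve as a sanity check that the profile lower bound genuinely improves the constant down to $\tfrac{1+\lg e}{2}$, and that no step of the argument is lossy in a way that would prevent reaching it.
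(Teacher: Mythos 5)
Your plan hinges on the per-step inequality $M^{t+1}-M^t\le \frac{1+\lg e}{2}\,\greed(t+1)$, and that inequality is not attainable by a purely local analysis for general $m$. When the greedy step subtracts $r=\greed(t+1)$ from \emph{every} distribution's leading state, the profile can rise on several disjoint intervals simultaneously, one for each distribution whose leading mass lies in $[r,2r)$. Carrying out exactly the bookkeeping you describe (this is done in the paper's small-$m$ appendix) gives a per-step increase of
$\greed(t+1)\cdot\max_{0<d_2<\dots<d_{m+1}=1}\sum_{i=2}^{m}d_i\lg(d_{i+1}/d_i)$,
which grows with $m$ and tends to $\lg e\approx 1.44$ as $m\to\infty$ (it already exceeds $\frac{1+\lg e}{2}$ around $m=12$). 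So your intuition that only ``at most $r$ mass worth'' of new low profile is exposed, with height controlled by $r$, is what breaks: with many marginals the exposed region is a staircase whose contribution approaches the full $\int_0^1\lg(1/x)\,dx=\lg e$ per unit of greedy mass. The monovariant route therefore only reproduces the old $\lg e$ bound in the limit; it cannot deliver $\frac{1+\lg e}{2}$ step by step.

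The paper's proof of \cref{thm:1.22} is instead an amortized, global argument. It upper bounds $H(\greed)$ by $\int_0^1 \lg(1/C_S(x))\,dx$ where $C_S(x)$ lower bounds the next greedy state when $x$ mass remains, and constructs $C_S$ from the remaining-mass function $\rmassadv_S$ of \cref{equation:advanced-rem-mass}. The crucial ingredient, which no single-step analysis sees, is cross-step monotonicity: greedy state sizes are non-increasing, so any state $p(j)>y$ that still survives when the next greedy state has size $\le y$ must already have had at least $y$ removed from it, capping its remaining mass at $\min(y,p(j)-y)$. This caps $\rmassadv_S(y)$ at roughly $\frac{\isketch_p(y)+\isketch_p(2y)}{2}$ plus a tail term, and integrating $\rmassadv_S'(y)\lg(1/y)$ against $\iprof_S$ yields $H(\profile_S)+\frac{1+\lg e}{2}$. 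If you want to salvage your approach, you would need to replace the per-step bound by some amortization across steps that encodes this same monotonicity information; without it, the worst single step already costs $\approx\lg e$ per unit mass and the target constant is out of reach.
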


\paragraph{Proof intuition for \cref{thm:1.22}.} For this proof, we will find it helpful to upper bound the total remaining mass to be coupled as a function of the size of the greedy's next state. More concretely, we will define a function $\rmass_S(y)$ that corresponds to an upper bound on the total remaining mass to be coupled if the greedy coupling is about to create a state of size $\le y$. For any non-decreasing function $\rmass_S$ satisfying this condition, the following quantity upper bounds the cost of the greedy coupling solution:
\begin{equation} \label{equation:rmass-greed-bound}
    H(\greed) \!\le\! \int_0^1 \!\rmass'_S(y) \!\cdot\! \lg\left(\frac{1}{y}\right)  dy.
\end{equation}
This holds because, for any $y$, the increase in our remaining mass upper bound denoted by $\rmass'_S(y)$ corresponds to mass that must come from greedy states of size $\ge y$. Adversarially, we say this mass contributes information $\lg(1/y)$. It remains to be answered how one can construct a $\rmass_S$ that upper bounds the remaining mass given the next greedy state is of size $\le y$. Our most intuitive bound is:
\begin{align} \label{equation:simple-rem-mass}
&\rmasssim_{S}(y) \triangleq  \\\nonumber
&\max\limits_{p \in S}\rmasssim_{p}(y) \triangleq \max_{p\in S} \sum_{j=1}^n \min(p(j),y).
\end{align}
We design the bound $\rmasssim_{p'}(y)$ for a particular distribution $p'$ such that if the largest remaining state in $p'$ is $\le y$, then $\rmasssim_{p'}(y)$ will be a valid upper bound on the total remaining mass in $p'$. Recall that there must be such a $p'$, given how the greedy is defined in \cref{alg:greedy}. So, consider a $p'$ where its largest remaining state has size $\le y$. Then the amount of mass remaining in any particular state $j$ of $p'$ is $\le \min(p'(j),y)$, hence the bound given in \cref{equation:simple-rem-mass}. As an example, if the largest remaining state of $p_2$ (specified in \Cref{fig:profile-C}) is at most $0.3$, then the remaining mass in $p_2$ is at most $\rmasssim_{p_2}(0.3) = \min(0.5, 0.3) + \min(0.4,0.3) + \min(0.1,0.3) = 0.7$. Through further non-trivial analysis, one can relate the bound of \cref{equation:rmass-greed-bound} when using $\rmasssim$ defined in \cref{equation:simple-rem-mass} to show that $H(\opt) \le H(\profile_S) + \lg(e) \approx H(\profile_S) + 1.44$, matching the guarantee of \cite{compton2022tighter}.

However, we can improve upon $\rmasssim$. In our example with $p_2$, we stated that if the largest remaining state of $p_2$ is at most $0.3$, then the remaining mass in its state $p_2(1)$ was $\le \min(0.5,0.3) = 0.3$. While this is true, it is not tight. If the largest remaining state in $p_2$ is $r_{\textrm{max}}$ where $r_{\textrm{max}} < 0.5$, then given the monotonicity in the size of greedy states, there must have been a greedy state that removed at least $r_{\textrm{max}}$ mass from $0.5$. More formally, we can bound that the remaining mass in $0.5$ is at most $0.5 - r_{\textrm{max}}$. Combining this with the prior upper bound that its remaining mass is $\le r_{\textrm{max}}$, we can show how the remaining mass in $p_2(1)$ is at most $\max_{0 \le r_{\textrm{max}} \le 0.3} \min(r_{\textrm{max}},0.5-r_{\textrm{max}}) = 0.25$. We can generalize this reasoning as follows:
\begin{align} \label{equation:advanced-rem-mass}
& \rmassadv_{S}(y) \nonumber \\
& \triangleq \max\limits_{p \in S}\rmassadv_{p}(y) \nonumber \\ 
& \triangleq \max\limits_{p \in S}\sum_{j=1}^n \begin{cases}
y & y\le \frac{p(j)}{2} \\
p(j)/2 & \frac{p(j)}{2}<y<p(j) \\
p(j) & p(j) \le y \\
\end{cases}
\end{align}
In our example, this attains $\rmassadv_{p_2}(0.3) = 0.25 + 0.2 + 0.1 = 0.55$, improving upon $\rmasssim_{p_2}(0.3) = 0.7$. Through further non-trivial analysis, one can relate the bound of \cref{equation:rmass-greed-bound} when using $\rmassadv$ defined in \cref{equation:advanced-rem-mass} to show that $H(\opt) \le H(\profile_S) + \frac{1 + \lg(e)}{2} \approx H(\profile_S) + 1.22$, attaining a new best guarantee for coupling many distributions and breaking the majorization barrier. We defer the full proof to \cref{section:proof-1.22}.

\section{Computing Exactly Optimal Solutions}\label{section:exactly-optimal}

While earlier sections discuss approximating the minimum entropy coupling, this section focuses on approaches to compute an \emph{exact} solution. This is known to be generally intractable as it is NP-Hard. It was posed as an open problem in \cite{kovavcevic2015entropy} whether the problem is in NP. In viewing the minimum entropy coupling as an optimization problem with linear constraints, there are exponentially many ($n^m$) variables. Perhaps surprisingly, we show there is always an optimal coupling that uses relatively few states:

\begin{restatable}[]{lemma}{smallsuplemma}
\label{lemma:small-sup}

There always exists an optimal solution to \coupleProb \space with support size of at most $nm - (m-1)$.

\end{restatable}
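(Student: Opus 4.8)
The plan is to start from an arbitrary optimal coupling $\mathcal{C}^\star$ and show that if its support has more than $nm-(m-1)$ states, then we can modify it—without increasing the entropy—to reduce the number of states, contradicting minimality (or at least reaching an optimal solution with the desired bound). The natural tool is a linear-programming / polytope argument: the set of all couplings of $S$, viewed as vectors indexed by the (at most) $n^m$ joint states, is a polytope defined by the marginal constraints, and we want to argue that the optimal can be taken to lie on a low-dimensional face. However, because the objective (entropy) is strictly concave rather than linear, we are \emph{minimizing} a concave function, so the optimum is attained at a \emph{vertex} of the transport polytope. The classical fact about transport polytopes (the Birkhoff–von Neumann–type / network-flow structure) is that a vertex of the polytope $\{\mathcal{C} \ge 0 : \text{marginals fixed}\}$ has support size at most (number of constraints) $-$ (redundancies among them). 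There are $nm$ marginal equality constraints ($n$ states for each of $m$ distributions), but they are not independent: the total-mass constraint is shared, giving $m-1$ dependencies, so the support of a vertex is at most $nm-(m-1)$. This is exactly the claimed bound, and it is the same bound that governs $|\greed|$ as noted earlier in the excerpt.

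Concretely, I would proceed as follows. First, set up the transport polytope $\Pi(S)$ as $\{\mathcal{C} \in \R_{\ge 0}^{[n]^m} : \mathcal{C}(\cdot,\dots,i_k,\dots,\cdot) = p_k(i_k)\ \forall k, i_k\}$, and observe it is nonempty, bounded, and hence a compact convex polytope. Second, invoke concavity of $H$: a concave function on a compact convex polytope attains its minimum at an extreme point, so there is an optimal coupling $\mathcal{C}^\star$ that is a vertex of $\Pi(S)$. Third, characterize vertices via a support/rank argument: at a vertex, the columns of the constraint matrix indexed by the support of $\mathcal{C}^\star$ must be linearly independent (standard basic-feasible-solution argument—if they were dependent, one could perturb $\mathcal{C}^\star$ along the kernel in both directions, staying feasible, contradicting extremality). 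Hence $|\mathrm{supp}(\mathcal{C}^\star)| \le \mathrm{rank}$ of the constraint matrix. Fourth, compute that rank: the $nm \times n^m$ incidence matrix of the marginal constraints has rank exactly $nm - (m-1)$, because the only linear dependencies among the $nm$ constraint rows come from the fact that, for each $k$, $\sum_{i_k} p_k(i_k) = \mass(S)$ is the same total, contributing $m-1$ independent relations and no others (this connectedness/rank claim is the routine part and can be argued by a spanning-tree argument on the natural bipartite-like structure, or cited from standard transportation-polytope theory).

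The main obstacle I anticipate is the rank computation in step four: one must verify that the marginal constraint matrix has rank \emph{exactly} $nm-(m-1)$ and not less, i.e., that the $m-1$ obvious dependencies are the \emph{only} ones. For the classical $m=2$ transportation polytope this is the well-known statement that the $n+n$ row/column constraint matrix of an $n\times n$ transportation problem has rank $2n-1$; for general $m$ one needs the analogous multi-index statement, which follows because the constraint hypergraph is connected. A clean way to finish is: restrict attention to the $n\cdot m$ "slots" (state $i$ of distribution $k$); a joint state $(i_1,\dots,i_m)$ with positive mass links slots across all $m$ distributions; since every distribution's mass must be positive somewhere, the induced structure is connected, and a standard argument shows a connected structure on $nm$ slots has constraint rank $nm - (\text{number of distributions} - 1) = nm-(m-1)$ once one accounts for the shared total mass. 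One caveat to address explicitly: if some $p_k$ has states of probability $0$ (from padding), those slots carry no mass and the effective bound only improves, so the stated bound still holds; I would note this so the "pad with zeros" convention from the preliminaries does not cause trouble.
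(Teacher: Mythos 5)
Your proposal is correct and takes essentially the same route as the paper: both reduce the lemma to the standard LP facts that a concave objective over the coupling polytope is minimized at a vertex, and that a vertex has at most as many nonzero entries as there are independent marginal equality constraints, namely $nm-(m-1)$ (the paper drops the redundant constraints up front and counts hyperplanes, while you keep all $nm$ rows and subtract the $m-1$ dependencies via the basic-feasible-solution argument). One small remark: your anticipated obstacle of showing the rank is \emph{exactly} $nm-(m-1)$ is unnecessary for the lemma, since a smaller rank would only strengthen the support bound; the easy upper bound on the rank suffices.
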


\begin{proof}
We can formulate minimum entropy coupling as an optimization problem over the $d\triangleq n^m$-dimensional polytope $\BP(i_1,\dots, i_m)$:
\begin{equation*}
\begin{aligned}
& \underset{\BP}{\text{minimize}}
& & \sum_{i=(i_1,\dots, i_m)}\BP(i)\lg(1/\BP(i)) \\
& \text{subject to}
& & \BP(i)\ge 0, \forall i ;\, \sum_i \BP(i)=1 \\
&&& \sum_{i:i_j=k}\BP(i)=\mathbf \p_j(k), \forall 1\le j\le m, 1\le k < n.
\end{aligned}
\end{equation*}
Observe that there are a total of $q\triangleq m(n-1)+1=nm-(m-1)$ equality constraints. We have omitted the constraints for $\p_j(k=n)$ because they are linear combinations of the constraint $\sum_i \BP(i)=1$ and the constraints for $\p_j(k<n)$.

Since the objective is concave, there exists a vertex of $\BP$ at which the objective is minimized. To finish, it suffices to show every vertex of this polytope has at most $q$ nonzero entries. This is true because every vertex is the unique point in the intersection of some $d$ constraint hyperplanes. At least $d-q$ of these hyperplanes must be of the form $\BP(i)=0$, so every vertex must have at least this number of zeros.
\end{proof}

This bound of $nm - (m-1)$ is particularly significant, as we know the number of states the greedy coupling algorithm uses is upper bounded by the same value (shown in \cite{kocaoglu2017entropic}). More generally, it is notable that there is always an optimal solution with support size of at most $nm - (m-1)$, because for almost all instances (in the measure 1 sense) \emph{there is no possible coupling that uses fewer than $nm - (m-1)$ states}. Furthermore, this enables an exponential-time algorithm (simply enumerating over all polytope vertices). This also makes progress on the open problem of \cite{kovavcevic2015entropy} regarding the NP membership of minimum entropy coupling (MEC), as one could use the constraints identifying the polytope vertex as a certificate. Given the polytope vertex, all that remains is the seemingly trivial task of computing the Shannon entropy of a discrete distribution:

\begin{restatable}[]{corollary}{npreduct}
\label{theorem:np-reduct}
$\coupleProb \le_{\textrm{NP}} \entropyProb$.
\end{restatable}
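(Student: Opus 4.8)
The plan is to exhibit an NP-reduction (in the sense of an NP-machine with an oracle for \entropyProb) that decides the natural decision version of \coupleProb, namely: given the marginals $\p_1,\dots,\p_m$ and a threshold $\tau$, decide whether there is a coupling $\BP$ with $H(\BP)\le\tau$. The key enabling fact is \cref{lemma:small-sup}: there is always an optimal coupling supported on at most $q = nm-(m-1)$ states, and moreover such an optimal coupling can be taken to be a \emph{vertex} of the constraint polytope $\BP$. A vertex is the unique solution of a full-rank linear system consisting of the $q$ marginal/normalization equalities together with $d-q$ of the hyperplanes $\BP(i)=0$. So the certificate is simply a choice of which $q$ coordinates $(i_1,\dots,i_m)$ are allowed to be nonzero (the ``support pattern'').

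First I would describe the nondeterministic guess: the NP-machine guesses a set $T$ of at most $q$ index tuples in $[n]^m$; since $q$ is polynomial in the input size, this is a polynomial-size guess. Second, the machine solves the linear system obtained by restricting $\BP$ to be supported on $T$ and imposing the $q$ marginal/normalization equalities; this is a polynomial-size rational linear system, solvable exactly in polynomial time (Gaussian elimination over $\mathbb{Q}$), yielding a candidate rational coupling $\BP_T$ (and the machine rejects this branch if the system is infeasible or the solution has a negative entry, i.e. is not a valid probability vector). Third — and this is the only place the oracle is used — the machine queries \entropyProb on the distribution $\BP_T$ (which has polynomially many rational entries, hence a valid instance of \entropyProb) to obtain $H(\BP_T)$, and accepts iff $H(\BP_T)\le\tau$. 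Completeness follows because, when a coupling of entropy $\le\tau$ exists, by concavity of entropy some vertex of $\BP$ achieves entropy no larger, and by \cref{lemma:small-sup} that vertex is supported on $\le q$ coordinates, so some guessed $T$ recovers it; soundness is immediate since every accepting branch produces an actual coupling witnessing $H(\BP)\le\tau$.

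One subtlety to address carefully is that $T$ may not by itself pin down a unique vertex (the guessed support could be ``too large'' relative to the rank, leaving a face rather than a vertex); the clean fix is to have the machine additionally guess the full set of $d$ tight hyperplanes (equivalently, guess that $\BP(i)=0$ for all $i\notin T$ and that this, with the equalities, has a unique solution), or simply to note that it suffices for \emph{some} guessed $T$ to coincide with the support of the optimal vertex, in which case the restricted system has that vertex as a solution; any spurious/degenerate branch either returns an invalid or higher-entropy point and does no harm. I would also remark that the reduction is stated as a Turing/Cook-style NP-reduction with a single oracle call, which is all that is needed to conclude that the only obstruction to $\coupleProb\in\mathrm{NP}$ is the (number-theoretic, not combinatorial) difficulty of certifying $H(\BP_T)\le\tau$ for an explicitly given rational distribution.

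The main obstacle I anticipate is not the combinatorial structure — \cref{lemma:small-sup} does essentially all the work there — but the precise bookkeeping around exact arithmetic and the definition of \entropyProb: one must be careful that the linear solve is done exactly (so $\BP_T$ is a genuine rational vector presented with polynomially many bits), and that the comparison $H(\BP_T)\le\tau$ is exactly what an oracle for ``computing the Shannon entropy of a discrete distribution'' is assumed to provide (e.g. as a promise/decision oracle comparing an entropy value to a rational threshold). Making the oracle interface and the exactness of the intermediate rational computation fully rigorous is the delicate part; everything else is a direct consequence of \cref{lemma:small-sup}.
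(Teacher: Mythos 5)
Your proposal is correct and follows essentially the same route as the paper: invoke \cref{lemma:small-sup} to restrict attention to a polytope vertex with at most $nm-(m-1)$ nonzero entries, certify it by the constraints/support identifying that vertex, recover it exactly as the solution of a polynomially-sized rational linear system with polynomially bounded bit complexity, and delegate the final entropy evaluation to the \entropyProb{} oracle. Your extra care about degenerate supports and exact rational arithmetic only fleshes out details the paper treats briefly.
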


For reasons involving complexity of finite precision arithmetic (e.g. \cite{kayal2012sum}), it is unknown whether $\entropyProb$ is in NP, despite the fact in practice this appears to be a simple task. We note that \cref{theorem:np-reduct} assumes all inputs are represented as rational numbers.

\subsection{Speeding Up Backtracking} \label{section:experiments}

Recent works such as \cite{sokota2022communicating,anonymous2023perfectly} have applied minimum entropy couplings for $m=2$ distributions in settings where finding better couplings has concrete benefits. It is thus natural to wonder how quickly we can exactly couple two distributions. As enumerating over the exponentially many polytope vertices is computationally expensive, we propose other algorithms that can more efficiently couple two distributions (although they are still exponential-time algorithms). In \cref{app:dp}, we introduce a new dynamic programming algorithm that runs in $O(9^n\cdot \text{poly}(n))$ time. This approach utilizes a new characterization of couplings as spanning trees over the distributions, similar to the perspective of \cref{lemma:monotone-backtracking}. In this section, we present a backtracking algorithm that recursively chooses the next state of the coupling and eliminates the smallest size marginal the state maps to. This is a generalization of how the greedy coupling algorithm always eliminates the state corresponding to the minimum over all remaining distributions' maximums. Dynamic programming and backtracking can be seen as two ways of leveraging our new spanning tree perspective. While the dynamic programming approach is relatively final, the backtracking approach is crucially improved by new lower bounds for coupling as they help prune its search space, meaning \emph{better lower bounds directly improve its speed}.

Our motivation is to enable the computation of higher-quality couplings than the polynomial-time greedy coupling in settings where enumerating over all polytope vertices is computationally infeasible. Backtracking is exactly optimal for $m=2$. Further details are provided in \cref{app:back}. 

We use the following lemma to prove the validity of our backtracking algorithm for $m=2$.

\begin{restatable}[]{lemma}{monotoneback}
\label{lemma:monotone-backtracking}

Consider the weighted undirected bipartite graph $G$ with vertex set $\{\ell_{1\dots n}, r_{1\dots n}\}$ and edge set $\{(\ell_i,r_j,\p(i,j))) \mid \p(i,j) > 0 \}$ induced by some minimum-entropy coupling $\p$. This graph is a forest, and any edge with the maximum weight must have a leaf of the forest as one of its endpoints. 

\end{restatable}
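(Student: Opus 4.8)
The plan is to establish the two assertions separately, each via an exchange argument exploiting the strict concavity of $f(x) = -x\lg x$ (so that $H(\p) = \sum_{i,j} f(\p(i,j))$, with the convention $f(0)=0$). For the forest claim I would argue by contradiction: suppose the support graph $G$ of the minimum-entropy coupling $\p$ contains a cycle. Since $G$ is bipartite, the cycle has even length, say $\ell_{i_1} - r_{j_1} - \ell_{i_2} - r_{j_2} - \cdots - r_{j_k} - \ell_{i_1}$, and I would consider the one-parameter family obtained by adding $\epsilon$ to the edges $(\ell_{i_t},r_{j_t})$ and subtracting $\epsilon$ from the edges $(\ell_{i_{t+1}},r_{j_t})$ around the cycle. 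Each cycle vertex is incident to exactly one ``$+\epsilon$'' and one ``$-\epsilon$'' cycle edge, so all marginals are preserved and this is a valid coupling for every $\epsilon$ in a nondegenerate interval $[\epsilon_-,\epsilon_+]$ with $0$ in its interior (nondegenerate since every cycle edge has strictly positive weight). The entropy along this family is a sum of strictly concave functions of $\epsilon$, hence strictly concave, so its value at the interior point $0$ strictly exceeds its value at one of the two endpoints; that endpoint is a coupling of strictly smaller entropy, contradicting optimality. Therefore $G$ is acyclic, i.e., a forest. (This is the same mechanism as the standard argument that optimal solutions of transportation-type linear programs are supported on forests.)

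For the maximum-weight-edge claim I would again argue by contradiction. Let $e=(\ell_i,r_j)$ be a maximum-weight edge with $w=\p(i,j)$, and suppose \emph{both} $\ell_i$ and $r_j$ have degree at least two. Then there are indices $j'\ne j$ and $i'\ne i$ with $a:=\p(i,j')>0$ and $b:=\p(i',j)>0$, and maximality of $w$ gives $a,b\le w$; set $c:=\p(i',j')\ge 0$. I would then apply the $2\times2$ rotation
\[
\p(i,j)\mapsto w+\delta,\quad \p(i',j')\mapsto c+\delta,\quad \p(i,j')\mapsto a-\delta,\quad \p(i',j)\mapsto b-\delta,
\]
which preserves every marginal and remains nonnegative for $\delta\in[0,\min(a,b)]$ (note it may introduce a brand-new support edge at $(\ell_{i'},r_{j'})$). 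Take $\delta=\min(a,b)$, and assume without loss of generality --- the entropy change is symmetric in $a$ and $b$ --- that $a\le b$, so $\delta=a$. The resulting change in entropy is
\[
\Big([f(w+a)-f(w)]-[f(b)-f(b-a)]\Big)\;+\;\Big([f(c+a)-f(c)]-[f(a)-f(0)]\Big).
\]
Since $f'$ is strictly decreasing on $(0,\infty)$, the increment of $f$ over an interval of given length is a strictly decreasing function of the interval's left endpoint. The first bracket compares the increments of $f$ over the length-$a$ intervals $[w,w+a]$ and $[b-a,b]$; since $b-a<b\le w$ it is strictly negative. The second bracket compares the increments over $[c,c+a]$ and $[0,a]$; since $c\ge 0$ it is nonpositive. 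Hence the rotation strictly decreases entropy, contradicting optimality, so $e$ must have a leaf endpoint.

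The main obstacle is the maximum-weight-edge claim. The tempting approach of pushing $\epsilon$ of mass along a path through $e$ does not work, since a path joining two leaves of the support forest is rigid --- its leaf marginals cannot absorb any change --- which forces one to use a $2\times2$ rotation that is permitted to \emph{create} a new support edge at $(\ell_{i'},r_{j'})$. The delicate points are then (i) choosing the rotation magnitude $\delta=\min(a,b)$ so that the smaller of $a,b$ is driven to $0$, and (ii) pairing the ``gain at $w$'' with the ``loss at $b$'' (rather than with the loss at $a$) so that both monotone-derivative comparisons point in the favorable direction regardless of the value of $c$; the strict inequality $b-a<w$ is exactly what makes the improvement strict. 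Everything else reduces to the elementary monotonicity of $f'$ and the bipartiteness of $G$.
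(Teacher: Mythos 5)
Your proposal is correct and matches the paper's proof in essentially every respect: the acyclicity claim is handled by the same cycle-canceling/concavity perturbation, and the maximum-weight-edge claim uses the very same $2\times 2$ rotation by $\delta=\min(a,b)$ that zeroes the smaller adjacent edge and possibly creates a new support edge at $(\ell_{i'},r_{j'})$. The only cosmetic difference is that the paper phrases the final four-term comparison via majorization of $(w+\delta,\,b-\delta)$ over $(w,b)$ plus the bound $f(c+\delta)-f(c)\le f(\delta)-f(0)$, whereas you phrase both comparisons via the monotone-decreasing increments of the concave function $f$, which is the same argument.
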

\begin{table*}[htbp] 
\centering
\caption{Average Runtime (in Seconds) for Exactly Coupling Two Distributions} \label{table:experiment}
\scriptsize
\begin{tabular}{|c|c|c|c|c|c|c|c|c|}
\hline
\textbf{}&\multicolumn{8}{|c|}{\pmb{$n$}} \\
\cline{2-9}
\textbf{Algorithm}& 4 & 5 & 6 & 7 & 8 & 9 & 10 & 11\\
\hline
{Naive Polytope Vertex Enumeration} & $0.002$ & $0.189$ & $33.79$ & $>120$ & $>120$ & $>120$ & $>120$ & $>120$ \\
\hline
{Backtracking [$0$]} & $0.0003$ & $0.004$ & $0.106$ & $3.576$ & $>120$ & $>120$ & $>120$ & $>120$ \\
\hline
{Backtracking [$H(\andd S)]$} & $0.0001$ & $0.001$ & $0.013$ & $0.256$ & $4.907$ & $>120$ & $>120$ & $>120$ \\
\hline
{Backtracking [$H(\profile_S)$]} & $0.0001$ & $0.0009$ & $0.009$ & $0.153$ & $2.206$ & $>120$ & $>120$ & $>120$ \\
\hline
{Backtracking [$H(\majorpro_S)$]} & $0.00006$ & $0.0004$ & $0.003$ & $0.035$ & $0.344$ & $5.398$ & $>120$ & $>120$ \\
\hline
{Dynamic Programming} & $0.00007$ & $0.0004$ & $0.002$ & $0.012$ & $0.093$ & $0.802$ & $9.769$ & $>120$ \\
\hline
\end{tabular}
\label{tab2}
\end{table*}
Our backtracking algorithm is guaranteed to find any coupling satisfying the lemma above, because it can repeatedly peel away the maximum weight edge of the forest. The proof of \cref{lemma:monotone-backtracking} is deferred to \cref{section:exactly-optimal-proof}. 

\paragraph{Combining lower bounds.} We now discuss a better lower bound we use to improve the speed of backtracking. Recall that the intuitions behind the lower bounds $H(\andd S)$ and $H(\profile_S)$ are somewhat different. Interestingly, we are able to combine them to get an even better lower bound than simply taking $\max (H(\andd S), H(\profile_S) )$: %
\begin{restatable}[]{definition}{majorprodef}
\label{def:majorpro}
$\majorpro_S \triangleq \argmin\limits_{d} H(d)$ $ \, s.t. \, \sketch_d(x) \le \profile_S(x) \, \forall x \in (0,\mass(S)]$
\end{restatable}
\begin{restatable}[]{theorem}{majorprothm}
\label{theorem:major-pro}
$H(\opt) \ge H(\majorpro_S) \ge H(\profile_S), H(\andd S)$
\end{restatable}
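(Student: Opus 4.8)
The plan is to establish the two inequalities separately. For $H(\majorpro_S) \ge H(\profile_S)$: I would argue that the feasible region of the optimization in \cref{def:majorpro} contains a distribution whose sketch-integral matches $H(\profile_S)$ — but since the profile need not be a distribution, this requires care. Instead, I would observe that for \emph{any} distribution $d$ with $\sketch_d(x) \le \profile_S(x)$ for all $x$, applying the logarithmic transformation and integrating gives $H(d) = \int_0^{\mass(S)} \lg(1/\sketch_d(x))\,dx \ge \int_0^{\mass(S)} \lg(1/\profile_S(x))\,dx = H(\profile_S)$, using monotonicity of $\lg(1/\cdot)$ together with \cref{remark:integral-p} and \cref{definition:profile-int}. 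Since $\majorpro_S$ is feasible, this yields $H(\majorpro_S) \ge H(\profile_S)$ immediately.

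For $H(\majorpro_S) \ge H(\andd S)$: the key is to show $\andd S$ is itself feasible for the optimization in \cref{def:majorpro}, i.e. that $\sketch_{\andd S}(x) \le \profile_S(x)$ for all $x \in (0, \mass(S)]$. This would give $H(\majorpro_S) \le H(\andd S)$ by optimality — wait, that is the wrong direction. So the argument must instead go the other way: I would show that \emph{every} distribution $d$ feasible for \cref{def:majorpro} satisfies $\andd S \majorby d$ (it is majorized by $d$), and then apply the fact from the Majorization paragraph that $p \majorby q \implies H(q) \le H(p)$ to conclude $H(\andd S) \le H(d)$, hence $H(\andd S) \le H(\majorpro_S)$. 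To see $\andd S \majorby d$: the sketch constraint $\sketch_d(x) \le \profile_S(x) = \min_{p\in S}\sketch_p(x)$ means $d$'s sketch lies below every input's sketch; since a distribution's sketch encodes its sorted prefix sums (the area under $\sketch_d$ up to the point where it drops to a given level corresponds to prefix sums of $d$), the sketch-domination $\sketch_d \le \sketch_p$ should translate into the prefix-sum inequalities $\sum_{j\le i} d(j) \le \sum_{j\le i} p(j)$ for all $i$ and all $p \in S$, which is exactly $d \majorby \andd S$... I mean $\andd S \majorby d$ by the definition of $\andd S$ as the majorization-maximal such distribution.

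For the outer inequality $H(\opt) \ge H(\majorpro_S)$: I would show that $\opt$ is feasible for \cref{def:majorpro}. By \cref{theorem:profile-lb}'s underlying reasoning (the sketch of any valid coupling does not exceed the profile), we have $\sketch_{\opt}(x) \le \profile_S(x)$ for all $x$, so $\opt$ is in the feasible set, hence $H(\opt) \ge \min_d H(d) = H(\majorpro_S)$ by definition of $\majorpro_S$ as the minimizer.

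The main obstacle I anticipate is the precise translation between the sketch-domination condition $\sketch_d(x) \le \sketch_p(x)$ and the prefix-sum (majorization) condition $\sum_{j\le i} d(j) \le \sum_{j\le i} p(j)$. These are not obviously equivalent: the sketch is a rearrangement-sensitive step function (states sorted \emph{increasing} left to right), so I would need a clean lemma — likely arguing via the "horizontal" reading of the sketch, where the total width at height $\ge y$ in $\sketch_d$ equals $\sum_j \max(0, \min(d(j), \cdot))$-type quantities — showing pointwise sketch domination implies, and is in fact characterized by, domination of these truncated-mass functions, which in turn is equivalent to majorization by a standard result (e.g. the correspondence between majorization and domination of $\sum \min(p(j), t)$ for all $t$). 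Handling the possibility that $d$ has a different number of states than the inputs, and that $\profile_S$ is not itself a valid sketch, are the secondary technical points to nail down.
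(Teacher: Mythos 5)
Your skeleton is essentially the paper's, and two of the three inequalities are handled correctly: $H(\opt)\ge H(\majorpro_S)$ via feasibility of $\opt$ (its sketch lies below the profile, as established en route to \cref{theorem:profile-lb}) together with minimality, and $H(\majorpro_S)\ge H(\profile_S)$ via pointwise comparison inside the log-integral. (The paper gets the first of these slightly differently, by showing every feasible distribution is majorized by $\majorpro_S$; also note the paper constructs $\majorpro_S$ explicitly by growing squares under the profile, which settles that the $\argmin$ in \cref{def:majorpro} is attained --- your variational argument implicitly assumes attainment, which a complete write-up should justify.) One notational slip: your derivation gives $\sum_{j\le i}d(j)\le\sum_{j\le i}p(j)$ for all $p\in S$ and all $i$, i.e.\ $d\majorby \andd S$ (with the paper's convention that $p\majorby q$ means $q$ has the larger prefix sums), not ``$\andd S\majorby d$''; your final conclusion $H(\andd S)\le H(d)$ is the one that follows from $d\majorby\andd S$, so the substance is right but the relation is written backwards.

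The genuine gap is the step you yourself flag as the main obstacle: that pointwise sketch domination $\sketch_d(x)\le\sketch_p(x)$ implies the prefix-sum inequalities. You leave this unproved, and the route you sketch --- that sketch domination ``is in fact characterized by'' domination of the truncated masses $\sum_j\min(\cdot,t)$, hence is equivalent to majorization --- is false in one direction: majorization does not imply sketch domination. For example, $d=[0.5,0.3,0.05,0.05,0.05,0.05]$ is majorized by $p=[0.75,0.05,0.05,0.05,0.05,0.05]$, yet $\sketch_d(x)=0.3>0.05=\sketch_p(x)$ for $x\in(0.2,0.25]$; this asymmetry is exactly why the remark at the end of \cref{section:major-pro-proof} notes that $H(\profile_S)\ge H(\andd S)$ can fail. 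The one-way implication you actually need is true, and the paper proves it (for $\majorpro_S$, but the argument works for any feasible $d$) by a crossing argument you should reproduce: if some prefix sum of $d$ exceeded that of some $p\in S$, take the crossing index $i$ with $\sum_{j\le i}p(j)\ge\sum_{j\le i}d(j)$ and $\sum_{j\le i+1}p(j)<\sum_{j\le i+1}d(j)$; then $d(i+1)>p(i+1)$, and any $x\in\bigl(\mass(S)-\sum_{j\le i+1}d(j),\ \mass(S)-\sum_{j\le i}p(j)\bigr]$ satisfies $\sketch_d(x)=d(i+1)>p(i+1)\ge\sketch_p(x)\ge\profile_S(x)$, contradicting feasibility. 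With that lemma in place (and the direction of $\majorby$ fixed), your argument closes and coincides in substance with the paper's proof.
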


In words, $\majorpro_S$ is the minimum entropy distribution whose sketch never exceeds the profile (as is shown to be a necessary condition for a valid coupling in the proof of \cref{theorem:profile-lb}). On the surface, this seems only related to the $H(\profile_S)$ lower bound and not $H(\andd S)$. Yet, one can show that this constraint actually enforces $\majorpro_S \majorby \andd S$. 
This means we have a lower bound that is always at least as good as both $H(\andd S)$ and $H(\profile_S)$. The full proof of \cref{theorem:major-pro} is deferred to \cref{section:major-pro-proof}. 

Note that $H(\profile_S)$ and $H(\majorpro_S)$ can both be efficiently computed in near-linear time,  as will be leveraged by our backtracking algorithm. Details about their computation are deferred to \cref{app:lower-algs}.

\paragraph{Experimental results.} In \cref{table:experiment} we show experiments comparing the average running time of coupling two distributions with different (provably optimal) algorithms. All algorithms we present have stronger performance than the naive baseline of enumerating over all polytope vertices. We also observe clear improvements in the speed of backtracking with better lower bounds than the previously best-known lower bound of $H(\andd S)$. It is particularly noteworthy that $\majorpro_S$ provides better performance than $\profile_S$, as our theoretical approximation guarantees in \cref{section:two-couple,section:many-couple} do not immediately yield any benefit for using $\majorpro_S$ in place of $\profile_S$. This seems to indicate that $\majorpro_S$ is a meaningfully stronger lower bound for many instances, and may be a helpful reference for future work on minimum entropy coupling approximation guarantees. We observe the best performance from our dynamic programming algorithm and backtracking algorithm with $\majorpro_S$. Due to the nature of its pruning, the backtracking has higher variance. Finally, we emphasize that any future work providing new lower bounds for minimum entropy coupling could immediately improve our backtracking method. Additional experiment details are provided in \cref{section:experiment-details}.

\section{Conclusion}
In this work, we showed advances in algorithms, lower bounds, and approximation guarantees for the minimum entropy coupling problem. 
In \cref{section:concave-costs}, we discuss how our methods generalize to any concave minimum-cost coupling. 
There are many avenues for future work, such as examining how previously discussed applications of minimum entropy coupling may concretely benefit from our results. 
For example, the algorithms for entropic causal inference used in \cite{kocaoglu2017entropic,compton2020entropic,compton2022entropic} do not use the construction of the coupling, just the value of its entropy. As our new lower bounds on the value are more efficient to compute than the greedy coupling and have the same theoretical guarantees, using these instead would directly improve the speed of these causal discovery algorithms. 
There are many remaining theoretical questions. Hardness of approximation remains an open problem. It is still unknown the best approximation guarantee that greedy coupling achieves, or whether there exists a polynomial-time algorithm with strictly better worst-case approximation guarantees than greedy coupling. %
Deriving an analysis that more smoothly interpolates best-known guarantees from $m=2$ to $m=\infty$ is also an open question. Another open question is whether one can obtain a better approximation guarantee with respect to $\majorpro_S$ than with $\profile_S$. Finally, in \cref{section:gaps} we provide some discussion on gaps between various relevant quantities, with counter-examples that were computationally discovered by local search. For example, we provide instances where the algorithms of \cite{cicalese2019minimum} and \cite{li2021efficient} have optimality gaps greater than $\log(e)/e \approx 0.53$ for coupling two distributions, whereas \cref{thm:0.53} shows this is impossible for greedy coupling. It is our hope that this information will help guide algorithm selection and %
be insightful for future conjectures.

\subsubsection*{Acknowledgements}

This work was supported in part by the National Defense Science \& Engineering Graduate (NDSEG) Fellowship Program, NSF Grant CAREER 2239375, and the MIT-IBM Watson AI Lab.

\bibliography{paperbib}
\bibliographystyle{apalike}

\flushcolsend

\clearpage

\onecolumn
 \vbox{%
    \hsize\textwidth
    \linewidth\hsize
    \vskip 0.1in
    \hrule height 4pt
  \vskip 0.25in
  \vskip -\parskip%
    \centering
    {\LARGE\bf Appendix \par}
    \vskip 0.29in
  \vskip -\parskip
  \hrule height 1pt
  \vskip 0.09in%
  }

\appendix
\section{Proofs for Section \ref{section:novel-lower-bound}: the Profile Method Lower Bound}

\subsection{Proof of \cref{theorem:profile-lb}} \label{section:profile-lb-proof}

\profilelb*

While in our proof outline in the main text we discussed \textbf{sketches} and \textbf{profiles}, our formal proof in this section is simplified by analyzing their ``inverses,'' which we define next.

\begin{definition}
For a (possibly partial) distribution $p_i$, define $\isketch_{p_i}(y)$ to be the non-decreasing function $\isketch_{p_i}\colon [0,1]\to [0,1]$ such that $\isketch_{p_i}(y)$ equals the sum of all probability masses of $p_i$ less than or equal to $y$. That is,
\begin{equation*}
\isketch_{p_i}(y)\triangleq \sum_{j=1}^{|p|}p_i(j)\cdot [p_i(j)\le y].
\end{equation*}
For a set of probability distributions $S$, we define $\iprof_S(y)$ as
\begin{equation*}
\iprof_S(y)\triangleq \max_{p_i\in S}\isketch_{p_i}(x). 
\end{equation*}
\end{definition}

We refer to $\isketch_p$ and $\iprof_S$ as ``inverse sketches'' and ``inverse profiles,'' respectively. Note that for any $S$, the graph of $\iprof_S$ may be obtained by transposing the graph of $\profile_S$, though they are not strictly inverses because neither function is strictly increasing.

We also extend the definition of entropy to inverse profiles.

\begin{definition}\label{def:inverse-entropy}

For an inverse profile (or inverse sketch) $h$, we define the entropy of $h$ to be:
\begin{equation}\label{eq:entropy}
H(h)\triangleq \int_{0}^1\frac{h(y)}{y \ln 2}\, dy.
\end{equation}

\end{definition}

\begin{lemma}\label{lemma:entropy-consistent}

The definition of entropy is consistent for profiles and inverse profiles (as well as sketches and inverse sketches); that is, $H(\profile_S)=H(\iprof_S)$.

\end{lemma}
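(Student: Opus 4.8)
The plan is to show both $H(\profile_S)$ and $H(\iprof_S)$ compute the same double integral, just with the order of integration swapped; equivalently, both equal the area under a logarithmic transformation of the region lying below the profile curve. Recall $H(\profile_S)=\int_0^{\mass(S)} \lg(1/\profile_S(x))\,dx$ and $H(\iprof_S)=\int_0^1 \frac{\iprof_S(y)}{y\ln 2}\,dy$. The key observation is that $\iprof_S$ is the ``transpose'' of $\profile_S$: for the region $R=\{(x,y): 0<x\le \mass(S),\ 0<y\le \profile_S(x)\}$, we have $\profile_S(x)=\lambda(\{y: (x,y)\in R\})$ and $\iprof_S(y)=\lambda(\{x: (x,y)\in R\})$, where $\lambda$ denotes Lebesgue measure on the relevant slice. (I would first verify this transpose relationship carefully, since it is the conceptual heart of the argument: $\profile_S$ is a non-increasing step function built from the sorted masses, and $\iprof_S$ records, for each threshold $y$, the total mass of states of size at most $y$ in the maximizing distribution — and one checks that "min of sketches" transposes to "max of inverse sketches" because taking the lower envelope in the $y$-direction corresponds to taking the larger horizontal extent.)

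Next I would rewrite each entropy as an integral over $R$. Since $\lg(1/\profile_S(x)) = \int_{\profile_S(x)}^{1} \frac{1}{y\ln 2}\,dy$ (as $\lg(1/t)=\int_t^1 \frac{dy}{y\ln2}$ for $t\in(0,1]$), we get
\begin{equation*}
H(\profile_S)=\int_0^{\mass(S)}\!\!\int_{\profile_S(x)}^{1}\frac{1}{y\ln 2}\,dy\,dx = \int\!\!\int_{\{(x,y):\,\profile_S(x)\le y\le 1\}} \frac{1}{y\ln 2}\,dy\,dx.
\end{equation*}
The region of integration is the complement (within the strip $0<x\le\mass(S)$, $0<y\le 1$) of $R$. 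On the other hand, for fixed $y$ the set of $x$ with $\profile_S(x)\le y$ has measure $\mass(S)-\iprof_S(y)$ by the transpose relationship, so Tonelli's theorem (the integrand is non-negative and measurable) gives
\begin{equation*}
H(\profile_S)=\int_0^1 \frac{\mass(S)-\iprof_S(y)}{y\ln 2}\,dy.
\end{equation*}
Comparing with $H(\iprof_S)=\int_0^1 \frac{\iprof_S(y)}{y\ln 2}\,dy$, the two differ by $\int_0^1 \frac{\mass(S)-2\iprof_S(y)}{y\ln 2}\,dy$, which is generally nonzero — so a naive swap does not immediately work, and I need to be more careful about the limits.

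The resolution, which I expect to be the main obstacle and which I would handle next, is that $\profile_S$ takes values only in $(0,1]$ but typically $\profile_S(x)<1$ only on part of the domain; more importantly the definition of $H(\iprof_S)$ already implicitly encodes the ``complement'' structure because $\iprof_S(y)/y$ grows from $0$ and the change of variables must track the step structure exactly. The clean way is to prove it directly on step functions: write $\profile_S$ as a finite non-increasing step function with breakpoints, say $\profile_S(x)=v_k$ for $x\in(a_{k-1},a_k]$ with $v_1>v_2>\cdots>v_K$ and $a_0=0$, $a_K=\mass(S)$. Then $H(\profile_S)=\sum_k (a_k-a_{k-1})\lg(1/v_k)$, and one computes $\iprof_S$ explicitly as the corresponding transposed step function, plugs into $\int_0^1 \iprof_S(y)/(y\ln2)\,dy$, splits the integral at the values $v_k$, and checks termwise (using $\int_{v_{k+1}}^{v_k}\frac{dy}{y\ln 2}=\lg(v_k/v_{k+1})$ and Abel summation) that the two sums agree. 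The bookkeeping is routine once the transpose picture is pinned down; the only subtlety is ensuring the maximization in $\iprof_S$ (over $p_i\in S$) genuinely coincides with the transpose of the minimization in $\profile_S$, which follows because both describe the same planar region $R=\bigcap_{p\in S}\{(x,y):0<y\le\sketch_p(x)\}$ viewed along the two axes. The identical argument applied to a single distribution gives $H(\sketch_p)=H(\isketch_p)$, and indeed shows $H(\sketch_p)=H(p)$ consistent with Remark~\ref{remark:integral-p}.
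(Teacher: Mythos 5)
Your overall strategy (view both quantities as integrals over the plane and swap the order of integration / verify on step functions) is sound, but the key relation you build it on is stated backwards, and this flaw propagates through the whole proposal. Under the paper's convention the states in a sketch are placed left to right in \emph{non-decreasing} order of size, so $\sketch_p$, and hence $\profile_S=\min_p\sketch_p$, is non-decreasing in $x$; moreover $\{x:\sketch_p(x)\le y\}=(0,\isketch_p(y)]$, because $\isketch_p(y)$ is precisely the total mass of the states of size at most $y$, which occupy the left portion of the sketch. Taking the minimum over $p$ in the profile turns these into a union of intervals, so the correct slice relation is $\lambda\{x:\profile_S(x)\le y\}=\max_{p\in S}\isketch_p(y)=\iprof_S(y)$. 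In other words, $\iprof_S(y)$ is the width of the \emph{sub-level} set of $\profile_S$, not the width of a horizontal slice of the region $R$ below the profile as you claim; your identity ``$\iprof_S(y)=\lambda\{x:(x,y)\in R\}$'' and the consequent ``$\lambda\{x:\profile_S(x)\le y\}=\mass(S)-\iprof_S(y)$'' are false. With the corrected relation your own Tonelli computation finishes the proof in one step: $H(\profile_S)=\int_0^{\mass(S)}\int_{\profile_S(x)}^1\frac{1}{y\ln 2}\,dy\,dx=\int_0^1\frac{1}{y\ln 2}\,\lambda\{x:\profile_S(x)\le y\}\,dy=\int_0^1\frac{\iprof_S(y)}{y\ln 2}\,dy=H(\iprof_S)$. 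The ``discrepancy'' $\int_0^1\frac{\mass(S)-2\iprof_S(y)}{y\ln 2}\,dy$ that led you to abandon the direct swap is purely an artifact of the flipped inequality; there is no obstruction to resolve.

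Because of this, the fallback you pivot to is not a proof as written: the termwise/Abel-summation verification is only described, not carried out, it models $\profile_S$ as a non-increasing step function (again the wrong monotonicity under the paper's convention), and its correctness hinges on ``pinning down the transpose picture''---exactly the relation you mis-stated. For comparison, the paper proves the lemma by the substitution $x=\iprof_S(y)$ followed by integration by parts, handling the jump points of $\iprof_S$ via Dirac deltas in a footnote; your Fubini/Tonelli route, once the slice relation is corrected, is a genuinely different and arguably cleaner argument (it needs no distributional derivatives), but in its current form the proposal both misidentifies the central geometric fact and leaves the substitute argument unexecuted.
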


\begin{proof}
Start with \Cref{definition:profile-int}. First, we substitute $x$ in place of $y$:
\begin{align*}
\profile_S(x)&=y,\\
x &= \iprof_S(y), \\
dx &=\iprof_S'(y)\,dy, \text{\footnotemark}
\end{align*}
which implies that
\begin{align*}
\int_0^{\mass(S)}\log\paren{\frac{1}{\profile_S(x)}}\, dx
&=\frac{1}{\ln 2}\int_0^{\mass(S)}\ln\paren{\frac{1}{\profile_S(x)}}\, dx,\\
&=\frac{1}{\ln 2}\int_0^1 \iprof_S'(y) \ln\paren{\frac{1}{y}}\, dy.
\end{align*}
Next we integrate by parts:
\begin{align*}
&=\frac{1}{\ln 2}\paren{\iprof_S(y) \ln\paren{\frac{1}{y}}\Big |_0^1+\int_0^1 \frac{\iprof_S(y)}{y}},\\
&= \frac{1}{\ln 2}\int_0^1 \frac{\iprof_S(y)}{y},
\end{align*}
as desired.
\end{proof}

\footnotetext{Technically, $\iprof_S'$ is not defined at the points where $\iprof_S$ is discontinuous. We can circumvent this issue by letting $\iprof_S'$ take on the value of the appropriate multiple of the Dirac delta function at such points.}

\begin{lemma}\label{lemma:hist-lb}

$\isketch_{OPT_S}(y)\ge \iprof_S(y)$ for all $y$. 

\end{lemma}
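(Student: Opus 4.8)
The plan is to establish the pointwise inequality by recalling what $\iprof_S$ is: by definition $\iprof_S(y) = \max_{p_i \in S} \isketch_{p_i}(y)$, so it suffices to show that for every single input distribution $p_i \in S$ and every $y \in [0,1]$, we have $\isketch_{OPT_S}(y) \ge \isketch_{p_i}(y)$. In words: the total mass of $OPT_S$ coming from states of size $\le y$ is at least the total mass of $p_i$ coming from states of size $\le y$. This is the key structural fact, and it is the formal counterpart of the informal claim in the motivation section that "the sketch of any valid coupling must not exceed the sketch of any input distribution."

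First I would fix $p_i \in S$ and a threshold $y$, and consider the marginalization map from states of the coupling $\mathcal{C} = OPT_S$ to states of $p_i$: each coupling state $\mathcal{C}(i_1,\dots,i_m)$ maps to the state $p_i(i_i)$, and the marginal constraint says that for each fixed state $k$ of $p_i$, the total mass of coupling states mapping to $k$ equals $p_i(k)$. Now partition the states of $p_i$ into the "small" ones (those with $p_i(k) \le y$) and the "large" ones. For each small state $k$, every coupling state mapping into it has size at most $p_i(k) \le y$, hence is itself a coupling state of size $\le y$; summing over all small states, these coupling states have total mass exactly $\sum_{k : p_i(k) \le y} p_i(k) = \isketch_{p_i}(y)$. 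Since all of these particular coupling states have size $\le y$, they are counted in $\isketch_{OPT_S}(y)$ (which sums over \emph{all} coupling states of size $\le y$), giving $\isketch_{OPT_S}(y) \ge \isketch_{p_i}(y)$.

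Taking the maximum over $p_i \in S$ then yields $\isketch_{OPT_S}(y) \ge \max_{p_i \in S}\isketch_{p_i}(y) = \iprof_S(y)$, which is the claim. The one subtlety to handle carefully is the bookkeeping of the partition: I should make sure that distinct small states $k$ contribute disjoint sets of coupling states (which holds because each coupling state maps to exactly one state of $p_i$), so that no coupling-state mass is double-counted, and also note that there may be additional coupling states of size $\le y$ that map into \emph{large} states of $p_i$ — these only help, since $\isketch_{OPT_S}$ counts them too, making the inequality potentially loose but still valid. I expect this disjointness/double-counting check to be the only real point requiring care; the rest is a direct unpacking of the coupling marginal condition and the definitions of $\isketch$.
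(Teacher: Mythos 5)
Your proof is correct, and it takes a somewhat different route from the paper's. You argue directly: fixing $p_i\in S$ and a threshold $y$, you use the marginal constraint to note that all coupling states mapping to a ``small'' state $k$ of $p_i$ (one with $p_i(k)\le y$) each have mass at most $p_i(k)\le y$ and have total mass exactly $p_i(k)$, and since distinct states of $p_i$ have disjoint preimages under the marginalization map, summing over small states gives $\isketch_{\opt}(y)\ge \isketch_{p_i}(y)$ in one step; taking the maximum over $p_i$ finishes. The paper instead observes that $\opt$ can be obtained from $p_i$ by a sequence of operations that split one state into two smaller ones, proves that each such split can only increase $\isketch$ pointwise, and concludes by iterating. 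Both arguments rest on the same structural fact -- that the coupling refines each marginal -- but yours is a one-shot counting argument that stays closer to the coupling definition, while the paper's splitting lemma isolates a pointwise monotonicity statement ($\isketch_{p'}\ge \isketch_p$ whenever $p'$ refines $p$) that is reusable beyond this lemma. Your noted subtlety (disjointness of the preimages, and that coupling states of size $\le y$ mapping to large states only help) is exactly the right bookkeeping and is handled correctly.
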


\begin{proof}

By definition of $\iprof_S$ it suffices to show that $\isketch_{\opt}(y)\ge \isketch_p(y)$ for all $p\in S$. Note that $\opt$ may be derived from $p$ by repeatedly performing the following operation: take some state in $p$ and split it into two smaller states, transforming $p$ into $p'$. 

Suppose that an operation splits a state of mass $m_1+m_2$ into two states of masses $m_1\le m_2$. Then it is easy to see that
\begin{itemize}
    \item $\isketch_{p'}(y)=\isketch_p(y)$ for $y<m_1$
    \item $\isketch_{p'}(y)>\isketch_p(y)$ for $m_1\le y<m_1+m_2$
    \item $\isketch_{p'}(y)=\isketch_p(y)$ for $m_1+m_2\le y$.
\end{itemize}
It follows that if $p'$ can be derived from $p$ via some sequence of operations, $\isketch_{p'}(y)\ge \isketch_p(y)$ for all $y$. Substituting $\opt$ in place of $p'$ yields the desired result.

\end{proof}

\paragraph{Proof of \cref{theorem:profile-lb}} We can now prove \cref{theorem:profile-lb}.
By consistency of our definitions of entropy, $H(\opt)=H(\isketch_{\opt})$ and $H(\profile_S)=H(\iprof_S)$. 
To finish, $H(\isketch_{\opt})\ge H(\iprof_S)$ follows by combining \cref{lemma:hist-lb} with \Cref{def:inverse-entropy}.

\section{Proof of Theorem \ref{thm:0.53}: Better Greedy Coupling Approximation Guarantee for Two Distributions} \label{section:0.53-proof}

\twoCouple*

It suffices to derive \cref{ineq:m=2-monovariant} in the main text. As in \cref{section:profile-lb-proof}, we find it more convenient to prove our results in terms of $H(\iprof_S)$ instead of $H(\profile_S)$ (which is allowed by \cref{lemma:entropy-consistent} in \cref{section:profile-lb-proof}).

After the greedy algorithm selects $\greed(t+1)=p^t(1)$, we have
\begin{itemize}
    \item $H(\greed^{t+1})=H(\greed^t)+p^t(1)\lg(1/p^t(1))$.
    
    \item $\isketch_{p^{t+1}}(x)=\begin{cases}
    \isketch_{p^{t}}(y) & y < p^t(1) \\
    \isketch_{p^{t}}(y)-p^t(1) & p^t(1) \le y \\
    \end{cases}$
    
    \item $\isketch_{q^{t+1}}(y)=\begin{cases}
    \isketch_{q^{t}}(y) & y < q^t(1)-p^t(1) \\
    \isketch_{q^{t}}(y)+q^t(1)-p^t(1) & q^t(1)-p^t(1) \le y < q^t(1) \\
    \isketch_{q^{t}}(y)-p^t(1) & q^t(1)\le y \\
    \end{cases}$
\end{itemize}

From the above equations, we conclude that:

\begin{equation}
\iprof_{S^{t+1}}(y)\:\begin{cases}
     =\iprof_{S^t}(y) & y < q^t(1)-p^t(1) \\
     \le \iprof_{S^t}(y)+q^t(1)-p^t(1) & q^t(1)-p^t(1)\le y < p^t(1) \\
     =\iprof_{S^t}(y)-p^t(1)=\mass(S^{t+1}) & p^t(1)\le y \\
    \end{cases},\label{eq:coupling-2}
\end{equation}

which in turn implies

\begin{align*}
H(\iprof_{S^{t+1}})&\le H(\iprof_{S^t})-p^t(1)\lg(1/p^t(1))+\begin{cases}
(q^t(1)-p^t(1))\int_{q^t(1)-p^t(1)}^{p^t(1)}\frac{1}{y\ln 2}\,dy & q^t(1)<2p^t(1) \\
0 & \text{otherwise}
\end{cases}\\
&\le H(\iprof_{S^t})-p^t(1)\lg(1/p^t(1))+\begin{cases}
\frac{q^t(1)-p^t(1)}{\ln 2}\cdot \ln \frac{p^t(1)}{q^t(1)-p^t(1)} & q^t(1)<2p^t(1) \\
0 & \text{otherwise}
\end{cases}.
\end{align*}

Combining this inequality with our expression for $H(\greed^{t+1})$ gives 
\begin{align}
M^{t+1}&\le M^t+p^t(1)\cdot \max_{p<q<2p}\frac{1}{\ln 2}\cdot \frac{q-p^t(1)}{p^t(1)}\ln \frac{p^t(1)}{q-p^t(1)}\nonumber\\
&= M^t+\greed(t+1)\cdot \frac{1}{\ln 2}\cdot \max_{0<r<1}\left[r\ln (1/r)\right].\label{align:almost-done}
\end{align}
As the function in brackets is concave, we can attempt to maximize it by setting its derivative to 0:
\begin{equation*}
0=\frac{d}{dr}(r\ln(1/r))=\ln(1/r)-1\implies r=e^{-1}.
\end{equation*}
Since $e^{-1} \in (0,1)$, this maximizes the function in brackets in the specified range.
Plugging this value of $r$ into \cref{align:almost-done} gives 
\begin{equation*}
M^{t+1}\le M^t+\greed(t+1)\cdot \frac{1}{e\ln 2}=M^t+\greed(t+1)\cdot \frac{\lg e}{e},
\end{equation*}

as desired.

\section{Proof of Theorem \ref{thm:1.22}: Better Greedy Coupling Approximation Guarantee for Arbitrarily Many Distributions} \label{section:proof-1.22}

\manyCouple*

We prove this theorem using the following three lemmas. The first step is to upper bound $H(\greed)$ by an integral.

\begin{lemma}\label{lemma:integral}

Let $C_S(x)\colon [0,1]\to [0,1]$ be any non-decreasing function such that if there is $x$ mass left while running the greedy algorithm on $S$, the next state created by the greedy algorithm has mass at least $C_S(x)$. Then $H(\greed)\le \int_0^1 \lg \frac{1}{C_S(x)} \, dx$.

\end{lemma}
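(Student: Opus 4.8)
The plan is to establish the inequality $H(\greed)\le \int_0^1 \lg\frac{1}{C_S(x)}\,dx$ by relating each state produced by the greedy algorithm to a slice of the integral on the right-hand side. First I would introduce the ``remaining mass'' coordinate: if $|\greed|=T$ and $\greed(1)\ge\greed(2)\ge\cdots\ge\greed(T)$ are the sizes of the states produced, then define $m_t\triangleq \sum_{s>t}\greed(s)$ to be the mass still uncoupled just before the $t$-th greedy step (so $m_0 = 1$, $m_T = 0$, and $m_{t-1}-m_t = \greed(t)$). The key observation is that at the moment the greedy algorithm is about to create its $t$-th state, there is exactly $m_{t-1}$ mass left; hence by the hypothesis on $C_S$, we have $\greed(t)\ge C_S(m_{t-1})$. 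Since $C_S$ is non-decreasing, $C_S(x)\le C_S(m_{t-1})\le \greed(t)$ for every $x\in(m_t, m_{t-1}]$, and therefore $\lg\frac{1}{C_S(x)}\ge \lg\frac{1}{\greed(t)}$ on that interval (the logarithm is decreasing).

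Next I would partition $(0,1]$ into the $T$ consecutive intervals $(m_t, m_{t-1}]$ for $t=1,\dots,T$, whose lengths are $m_{t-1}-m_t = \greed(t)$. Integrating the pointwise bound over each piece gives
\begin{align*}
\int_0^1 \lg\frac{1}{C_S(x)}\,dx
&= \sum_{t=1}^{T}\int_{m_t}^{m_{t-1}} \lg\frac{1}{C_S(x)}\,dx \\
&\ge \sum_{t=1}^{T} (m_{t-1}-m_t)\,\lg\frac{1}{\greed(t)} \\
&= \sum_{t=1}^{T} \greed(t)\lg\frac{1}{\greed(t)} = H(\greed),
\end{align*}
which is exactly the claimed inequality. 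One small technical point to handle is the case $C_S(x)=0$ on some initial subinterval: then the integrand is $+\infty$ there and the inequality holds trivially; alternatively one notes that $C_S(m_{t-1})\le \greed(t)$ with $\greed(t)>0$ for all $t\le T$ forces $C_S$ to be positive on $(0,1]$ up to the relevant points, so the integrals in question are finite where it matters.

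I do not expect any serious obstacle here; the lemma is essentially a rearrangement/monotonicity argument. The only point requiring a little care is making precise that ``there is $m_{t-1}$ mass left when the $t$-th state is about to be created'' — this follows because each greedy step removes exactly its chosen size $\greed(t)$ of mass from the total, and the total starts at $\mass(S)=1$; so after $t-1$ steps the remaining mass is $1-\sum_{s<t}\greed(s) = \sum_{s\ge t}\greed(s) = m_{t-1}$. Given that, the hypothesis on $C_S$ applies verbatim with $x = m_{t-1}$, and the rest is the monotonicity chain above. I would also remark that the direction of all inequalities is consistent because both $C_S$ and $x\mapsto \lg(1/x)$ are being used at their ``worst case'' (smallest $C_S$, hence largest $\lg(1/C_S)$) on each slice, matching the adversarial interpretation given in the main-text discussion around \cref{equation:rmass-greed-bound}.
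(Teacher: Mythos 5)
Your proof is correct and is essentially the same argument as the paper's: the intervals $(m_t,m_{t-1}]$ on which you compare $\lg\frac{1}{C_S(x)}$ to $\lg\frac{1}{\greed(t)}$ are exactly the intervals on which $\sketch_{\greed}$ is constant, so your slice-by-slice bound is the paper's pointwise inequality $C_S(x)\le \sketch_{\greed}(x)$ (obtained from the same monotonicity observation) followed by the integral representation $H(\greed)=\int_0^1\lg\frac{1}{\sketch_{\greed}(x)}\,dx$. The only difference is presentational: you re-derive that representation via the explicit partition rather than citing the sketch formalism.
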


\begin{proof}
First rewrite $H(\greed)=H(\sketch_{\greed})$. Next, we show that $C_S$ lower bounds $\sketch_{\greed}$. By definition of $C_S$, $C_S(x)\le \sketch_{\greed}(x)$ whenever $x=1-\sum_{j=1}^i\greed(j)$. Since $C_S$ is non-decreasing and $\sketch_{\greed}(x)$ is constant on the interval $(1-\sum_{j=1}^{i+1}\greed(j), 1-\sum_{j=1}^i\greed(j)]$, it follows that $C_S(x)\le \sketch_{\greed}(x)$ for all $x\in (0,1]$. The conclusion now follows from recalling the definition of $H(\sketch_{G_S})$.

\end{proof}

We next aim to design a $C_S$. First, recall our definition of $\rmassadv_{S}(y)$.

\begin{align} \label{equation:advanced-rem-mass-copy}
& \rmassadv_{S}(y) \nonumber \\
& \triangleq \max\limits_{p \in S}\rmassadv_{p}(y) \nonumber \\ 
& \triangleq \max\limits_{p \in S}\sum_{j=1}^n \begin{cases}
y & y\le \frac{p(j)}{2} \nonumber \\
p(j)/2 & \frac{p(j)}{2}<y<p(j) \nonumber \\
p(j) & p(j) \le y \nonumber \\
\end{cases}
\end{align}

Informally, recall that $\rmass_{S}(y)$ is designed to correspond to an upper bound on the total remaining mass to be coupled if the greedy coupling is about to create a state of size $\le y$. Now, we describe our choice of $C_S$.

\begin{lemma}\label{lemma:define-C}
Define $C_S(x)\triangleq \argmin_y\left[\rmassadv_S(y)\ge x\right].$ Then $C_S$ satisfies the preconditions of \Cref{lemma:integral}.
\end{lemma}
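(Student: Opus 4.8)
\textbf{Proof plan for \cref{lemma:define-C}.}
The plan is to verify the two preconditions of \cref{lemma:integral} for the proposed $C_S$: (i) that $C_S$ is non-decreasing on $[0,1]$, and (ii) that whenever the greedy algorithm has $x$ mass remaining, its next state has mass at least $C_S(x)$. Precondition (i) is essentially immediate from the definition $C_S(x) = \argmin_y[\rmassadv_S(y) \ge x]$: since $\rmassadv_S$ is non-decreasing in $y$ (each summand in \cref{equation:advanced-rem-mass} is a non-decreasing, continuous, piecewise-linear function of $y$), the set $\{y : \rmassadv_S(y)\ge x\}$ shrinks as $x$ grows, so its infimum is non-decreasing in $x$; one should also note $\rmassadv_S$ is continuous and satisfies $\rmassadv_S(0)=0$, $\rmassadv_S(1)=\mass(S)$, so the $\argmin$ is well-defined for all relevant $x$.

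The substantive part is precondition (ii). Here I would argue by contrapositive/directly: suppose the greedy algorithm currently faces the set $S^t$ with total remaining mass $x = \mass(S^t)$, and let $y^\star$ be the size of the next greedy state, i.e.\ $y^\star = \min_{p\in S^t} p(1)$. I must show $y^\star \ge C_S(x)$, which by definition of $C_S$ as the smallest $y$ with $\rmassadv_S(y)\ge x$ is equivalent to showing $\rmassadv_S(y^\star) \ge x = \mass(S^t)$. So the goal reduces to: \emph{whenever every remaining distribution $p\in S^t$ has largest state at most $y^\star$, the original total mass $\mass(S)=\mass(S^t)$ is at most $\rmassadv_S(y^\star)$.} Pick the distribution $p^\star \in S$ (the original, pre-coupling version) that attains $\max_{p\in S}\rmassadv_p(y^\star)$; it suffices to show $\mass(S^t) \le \rmassadv_{p}(y^\star)$ for \emph{every} $p\in S$, since $\mass$ is the same for all of them and we then take the max. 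Actually more carefully: I will show $\mass(S^t) \le \rmassadv_{p}(y^\star)$ for the \emph{particular} $p\in S$ whose remaining version $p^t\in S^t$ witnesses $y^\star = p^t(1)$ — that is the $p$ that got "pinched" most — and then $\rmassadv_S(y^\star) = \max_{p'\in S}\rmassadv_{p'}(y^\star) \ge \rmassadv_p(y^\star) \ge \mass(S^t)$. Wait, I need $\mass(S^t)\le \rmassadv_p(y^\star)$, but $\mass(S^t)$ is the remaining mass across \emph{all} of $S^t$, not just in $p^t$; since every distribution has the same total mass, $\mass(S^t) = \mass(p^t)$, so it does suffice to bound the remaining mass of a single representative distribution.

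So the crux is: fix a distribution $p\in S$, let $p^t$ be its state after $t$ greedy steps, and suppose $p^t(1)\le y^\star$ where $y^\star$ is the size of the next state the greedy will create (which is at most every previously-created greedy state, by monotonicity of $\greed$). I claim the remaining mass in each original state $p(j)$ is at most the $j$-th summand of $\rmassadv_p(y^\star)$, namely $y^\star$ if $y^\star \le p(j)/2$, $p(j)/2$ if $p(j)/2 < y^\star < p(j)$, and $p(j)$ if $p(j)\le y^\star$. The $p(j)\le y^\star$ case is trivial (can't remove negative mass). For $y^\star < p(j)$: the remaining mass in $p(j)$ is at most $y^\star$ because it never exceeds the current largest state $p^t(1)\le y^\star$ — wait, that bounds the remaining mass in $p(j)$ by $y^\star$ only if $p(j)$ is still among the remaining states; but a state can only shrink, and remaining mass in $p(j)$ being $> y^\star$ would mean $p^t$ has a state $> y^\star \ge p^t(1)$, contradiction. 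So remaining mass in $p(j)$ is $\le y^\star$. Additionally, if $y^\star < p(j)$ then some greedy state must have removed mass from $p(j)$ (since $p(j)$ started above all subsequent greedy-state sizes and greedy always picks the global min of maxes), and that greedy state had size $\ge y^\star$ — hmm, I need it had size $\ge$ something; actually the removed chunk had size equal to some greedy state size, and all greedy states up to now have size $\ge y^\star$... no wait, they have size $\ge y^\star$ only because $\greed$ is non-increasing and $y^\star$ is the \emph{next} one. So at least $y^\star$ mass was removed from $p(j)$ at the moment $p(j)$ was first touched, leaving remaining mass $\le p(j) - y^\star$. Combining: remaining mass in $p(j)$ is $\le \min(y^\star, p(j)-y^\star)$ when $y^\star < p(j)$, and $\min(y^\star,p(j)-y^\star) = y^\star$ when $y^\star\le p(j)/2$ and $=p(j)-y^\star \le p(j)/2$ when $y^\star > p(j)/2$ — matching \cref{equation:advanced-rem-mass} exactly (the $p(j)/2$ bound is the worst case of $p(j)-y^\star$ over $p(j)/2<y^\star<p(j)$). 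Summing over $j$ gives $\mass(p^t)\le \rmassadv_p(y^\star)$, completing (ii).

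\textbf{Main obstacle.} I expect the delicate point to be the argument that "some greedy state of size $\ge y^\star$ already removed mass from $p(j)$ whenever $p(j) > y^\star$." This requires carefully invoking how \cref{alg:greedy} operates — specifically that greedy repeatedly peels off $\min_i p_i(1)$, so a state larger than the current global min-of-maxes gets reduced before the algorithm can create a state as small as $y^\star$ — together with the monotonicity $|\greed|$ facts that $\greed$ is non-increasing. One must be careful about ties and about the case where $p(j)$ is reduced across several greedy steps rather than one. The clean way is: at the last greedy step before the current one, the maximum of $p$ was some value $\ge y^\star$ (monotonicity), and it was reduced to $p^t(1)\le y^\star$ by subtracting a greedy-state of size $\ge y^\star$; more generally, by induction on greedy steps, once $p$'s running maximum drops below $p(j)$'s original value the total removed from (the descendant of) $p(j)$ is at least the size of the greedy state that caused the first such drop, which is $\ge y^\star$. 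Formalizing this cleanly — possibly by tracking, for each original state, the first greedy step that reduces it — is the part that needs the most care; everything else is routine piecewise-linear bookkeeping.
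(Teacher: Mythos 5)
Your argument is correct and is essentially the paper's own proof: you pick the same witness distribution (the one whose current maximum equals the next greedy state size $y^\star$), bound the mass remaining in each original state $p(j)$ by $\min(y^\star, p(j)-y^\star)$ when $p(j)>y^\star$ using the non-increasing greedy state sizes (the delicate point you flag is exactly the step the paper invokes, and your justification via the first greedy step that touches $p(j)$ is sound), and sum to get $\rmassadv_p(y^\star)\ge \mass(S^t)$. The only difference is that you also verify explicitly that $C_S$ is non-decreasing, which the paper leaves implicit.
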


\begin{proof}
Suppose that there is $x$ mass remaining while running the greedy algorithm, and then the greedy algorithm allocates a state of mass $y$. Then there exists $p_{i'}\in S$ such that the maximum mass remaining over all states of $p_{i'}$ is equal to $y$. Observe that showing $\rmassadv_{p_{i'}}(y)\ge x$ suffices to prove the lemma.

We can naively bound that the amount of mass remaining in the $j$-th state of $p_{i'}$ is at least $\min(y,p_{i'}(j)$, giving the bound $\sum_{j=1}^n\min(y, p_{i'}(j))\ge x$. However, this alone is not sufficient to obtain the above inequality. We can improve this bound by noting that the masses of the states allocated by the greedy algorithm are non-increasing. Hence for all $j$ with $p_{i'}(j)>y$, we must have previously subtracted at least $y$ from it. Therefore, the $j$-th state of $p_{i'}$ will have at least $p_{i'}(j)$ mass remaining if $p_{i'}(j) \le y$, and otherwise will have at least $\min(y,p_{i'}(j)-y)$ mass remaining. In total, this gives the bound
\begin{equation*}
\sum_{j=1}^n\begin{cases}
p_{i'}(j) & p_{i'}(j)\le y \\
\min(y, p_{i'}(j)-y) & p_{i'}(j)>y
\end{cases}\ge x.
\end{equation*}
We finish by observing that $\rmassadv_{p_{i'}}(x)$ is at least the LHS of this inequality.
\end{proof}

\begin{remark}
While $C_S$ corresponds to $\sketch_{\greed}$, $\rmassadv_{S}$ corresponds to $\isketch_{\greed}$.
\end{remark}

Finally, we prove \Cref{thm:1.22} by upper bounding the integral from \cref{lemma:integral} by $H(\iprof_S)$ plus a small constant.

\begin{lemma}
Let $C_S$ be as defined in \Cref{lemma:define-C}. Then
\begin{equation}
\int_0^1\lg \frac{1}{C_S(x)}\, dx\le H(\iprof_S)+\frac{1+\lg e}{2}.
\end{equation}
\end{lemma}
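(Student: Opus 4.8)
The plan is to first rewrite the left-hand integral in the ``inverse'' form used in the rest of the appendix, then reduce to a single clean inequality between \(\rmassadv_S\) and \(\iprof_S\), and finally prove that inequality by isolating a per-distribution contribution and controlling the interaction across distributions. Since \(C_S\) and \(\rmassadv_S\) are mutual generalized inverses (their graphs are transposes, exactly as \(\sketch_p\) relates to \(\isketch_p\)), an integration-by-parts computation identical to the one in \cref{lemma:entropy-consistent} gives \(\int_0^1 \lg(1/C_S(x))\,dx = \int_0^1 \rmassadv_S(y)/(y\ln 2)\,dy\). Combining this with \(H(\iprof_S)=\int_0^1 \iprof_S(y)/(y\ln 2)\,dy\) from \cref{def:inverse-entropy}, the lemma is equivalent to \(\int_0^1 (\rmassadv_S(y)-\iprof_S(y))/(y\ln 2)\,dy \le \tfrac{1+\lg e}{2}\,\mass(S)\); I take \(\mass(S)=1\), the general case being a rescaling.

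The constant \(\tfrac{1+\lg e}{2}\) comes from a single-distribution identity. For one distribution \(p\), the \(j\)-th state of mass \(a=p(j)\) contributes to \(\rmassadv_p(y)-\isketch_p(y)\) the function \(\min(y,a/2)\cdot[a>y]\), i.e. \(y\) on \((0,a/2]\), then \(a/2\) on \((a/2,a)\), then \(0\); this integrates against \(1/(y\ln 2)\) to \(\tfrac{a}{2\ln 2}+\tfrac a2=\tfrac a2(1+\lg e)\), so summing over the states of \(p\), \(\int_0^1 (\rmassadv_p(y)-\isketch_p(y))/(y\ln 2)\,dy=\tfrac{1+\lg e}{2}\,\mass(p)\). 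This both pins down the constant and shows it is already attained for a single distribution (e.g.\ \(p=[1]\) or \(p=[1/n,\dots,1/n]\)).

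For a general set \(S\), I would combine two complementary pointwise bounds on \(h(y):=\rmassadv_S(y)-\iprof_S(y)\ge 0\). \textbf{(i)} From \(\rmassadv_p(y)=\isketch_p(y)+\sum_{j:\,p(j)>y}\min(y,p(j)/2)\) and \(\isketch_p(y)\le\iprof_S(y)\), taking the maximum over \(p\in S\) gives \(h(y)\le \max_{p\in S}\sum_{j:\,p(j)>y}\min(y,p(j)/2)\). \textbf{(ii)} From \(\rmassadv_p(y)=\tfrac12\isketch_p(y)+\tfrac12\isketch_p(2y)+y\,|\{j:p(j)>2y\}|\) and \(2y\,|\{j:p(j)>2y\}|\le \mass(p)-\isketch_p(2y)\) (each such state has mass exceeding \(2y\)), we get \(\rmassadv_p(y)\le\tfrac12\isketch_p(y)+\tfrac12\mass(p)\); taking the maximum over \(p\) (all distributions share the mass \(\mass(S)\)) gives \(\rmassadv_S(y)\le \tfrac12\iprof_S(y)+\tfrac12\mass(S)\), hence \(h(y)\le\tfrac12(\mass(S)-\iprof_S(y))\). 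Bound (i) is accurate near \(y=0\), where \(\iprof_S\) is flat at \(0\) and bound (ii) is far too large; bound (ii) is accurate once \(y\) is large enough that some distribution's whole mass has been captured, forcing \(\iprof_S(y)=\mass(S)\) and \(h(y)=0\). The main obstacle, and where essentially all the work lies, is combining them: one cannot push the \(\max_p\) in (i) through the integral, since \(\int_0^1 \max_{p\in S}(\rmassadv_p(y)-\isketch_p(y))/(y\ln 2)\,dy\) can be much larger than \(\tfrac{1+\lg e}{2}\) when different distributions' contributions peak at different scales of \(y\). The proof must instead exploit the quantitative fact that whenever the bound in (i) is large, \(\iprof_S\) has already grown enough for (ii) to take over — the profile captures mass at precisely the rate that prevents double counting. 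I expect to make this precise by decomposing \((0,1]\) along the dyadic scales of \(y\) relative to the state masses (equivalently, along the jumps of \(\iprof_S\)), bounding the contribution on each scale by a localized version of the single-distribution identity, and summing.
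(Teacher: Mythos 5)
Your setup is sound and coincides with the opening of the paper's argument: the change of variables giving $\int_0^1\lg(1/C_S(x))\,dx=\int_0^1\rmassadv_S(y)/(y\ln 2)\,dy$, the reduction to $\int_0^1(\rmassadv_S(y)-\iprof_S(y))/(y\ln 2)\,dy\le\frac{1+\lg e}{2}$, and the single-distribution computation that pins down the constant are all correct. But the proof stops exactly where you say ``essentially all the work lies.'' The combination of your bounds (i) and (ii) is never carried out: the dyadic-scale decomposition is only announced as an expectation, no localized inequality is stated or proved, and it is far from clear that such a decomposition would preserve the exact constant $\frac{1+\lg e}{2}$ --- your own observation about distributions peaking at different scales shows that per-scale arguments with a $\max_p$ tend to accumulate constants. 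As written, the lemma is not proved.

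The ingredient you are missing is that no decomposition is needed. The exact identity you already use when deriving (ii), namely $\rmassadv_p(y)=\frac{\isketch_p(y)+\isketch_p(2y)}{2}+y\int_{2y}^1\isketch_p'(t)/t\,dt$, becomes, after integrating the last term by parts, $\frac{\isketch_p(y)}{2}+y\,\isketch_p(1)+y\int_{2y}^1\isketch_p(t)/t^2\,dt$, which is pointwise monotone in $\isketch_p$. So instead of bounding only the $\isketch_p(y)$ term and leaving a $p$-dependent remainder to be maximized (your (i)), or discarding the tail term lossily (your (ii)), you can replace $\isketch_p$ by $\iprof_S$ in \emph{all} terms simultaneously, obtaining $\rmassadv_S(y)\le\frac{\iprof_S(y)+\iprof_S(2y)}{2}+y\int_{2y}^1\iprof_S'(t)/t\,dt$. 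Integrating this against $1/(y\ln 2)$ then evaluates exactly (with the convention $\iprof_S(t)=\mass(S)=1$ for $t>1$): the first term gives $\frac12 H(\iprof_S)$, the second gives $\frac12 H(\iprof_S)+\frac12$ after the substitution $u=2y$, and the tail term gives $\frac{1}{2\ln 2}\int_0^1\iprof_S'(t)\,dt=\frac{\lg e}{2}$, summing to $H(\iprof_S)+\frac{1+\lg e}{2}$. This is the paper's proof; the cross-distribution interaction you were trying to control by scales never appears, because the two extra contributions integrate against the total mass rather than against the shape of the profile.
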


\begin{proof}
As $C_S$ is defined in terms of $\rmassadv_S$, we need to upper bound $\rmassadv_S(y)$ in terms of $\iprof_S$.  Defining $\isketch_p(y)\triangleq 1$ for $y>1$, we can check that
\begin{align}
\rmassadv_{p}(y)&=\isketch_{p}(y)+\frac{1}{2}(\isketch_{p}(2y)-\isketch_{p}(y))\nonumber\\
&\phantom{=}+y\int_{2y}^1\isketch'_{p}(t)/t\, dt  \nonumber\\
&=\frac{\isketch_{p}(y)+\isketch_{p}(2y)}{2}+ y\int_{2y}^1\isketch'_{p}(t)/t\, dt \label{eq:ds-rhs}
\end{align}

As \Cref{eq:ds-rhs} holds for all $p \in S$, it follows that $\rmassadv_S(x)$ is bounded above by the analogous expression with $\iprof_S$ in place of $\isketch_p$. Now we finish by computing the integral:
\begin{align}
\int_{0}^1 \lg \left(\frac{1}{C_{S}(x)}\right)\, dx&=\int_{0}^1{\rmassadv_S}'(y)\lg(1/y)\, dy \label{eq:multicouple-step1}\\
&=\rmassadv_S(y)\lg(1/y)\Big |_{y=0}^1+\int_0^1 \frac{\rmassadv_S(y)}{y\ln 2}\, dy\nonumber\\
&=\int_0^1 \frac{\rmassadv_S(y)}{y\ln 2}\, dy\nonumber\\
&\le \frac{1}{\ln 2}\int_0^1 \frac{\iprof_S(y)+\iprof_S(2y)}{2y}+\paren{\int_{2y}^1\iprof_S'(t)/t\, dt}\, dy\nonumber\\
&=\frac{1}{2}H(\iprof_S)+\frac{1}{2}H(\iprof_S)+\frac{1}{\ln 2}\int_{1/2}^1 \frac{1}{2y}\, dy+\frac{1}{2\ln 2}\int_0^1 \iprof'(y)\, dy\nonumber\\
&= H(\iprof_S)+\frac{1}{2}+\frac{1}{2\ln 2} \nonumber\\
&= H(\iprof_S)+\frac{1+\lg e}{2}.\nonumber
\end{align}

\end{proof}

\section{Coupling Guarantees for Small $m$}\label{section:small-m}

We need to generalize \cref{eq:coupling-2} to the setting of general $m$. Without loss of generality assume $p^t_1(1)\le p^t_2(1)\le \dots\le p^t_m(1)$. Then 
\begin{equation*}
\isketch_{p^{t+1}_i}(y)=\begin{cases}
    \isketch_{p^{t}_i}(y) & y < p^t_i(1)-p^t_1(1) \\
    \isketch_{p^{t}_i}(y)+p^t_i(1)-p^t_1(1) & p^t_i(1)-p^t_1(1) \le y < p^t_i(1) \\
    \isketch_{p^{t}_i}(y)-p^t_1(1) & p^t_i(1)\le y \\
    \end{cases}
\end{equation*}

and \cref{eq:coupling-2} becomes

\begin{equation*}
\iprof_{S^{t+1}}\:\begin{cases}
     \le \iprof_{S^t}(y)+p^t_{i}(1)-p^t_1(1) & p^t_{i}(1)-p^t_1(1)\le y < \min(p^t_{i+1}(1)-p^t_1(1), p^t_i(1)) \\
     =\iprof_{S^t}(y)-p^t(1)=\mass(S^{t+1}) & p^t_i(1)\le y \\
    \end{cases}
\end{equation*}

Letting $d_i\triangleq \frac{p_i^t(1)-p_1^t(1)}{p_1^t(1)}$ for $i\in [2,m]$ and $d_{m+1}=1$, we then find that the generalization of \cref{align:almost-done} is as follows:
\begin{align*}
M^{t+1}&\le M^t+\greed(t+1)\cdot \max_{0<d_2<d_3<\dots<d_{m+1}=1}\frac{1}{\ln 2}\sum_{i=2}^{m}d_i\ln(d_{i+1}/d_i)\\
&=M^t+\greed(t+1)\cdot \max_{0<d_2<d_3<\dots<d_{m+1}=1}\frac{1}{\ln 2}\sum_{i=2}^{m}d_i\paren{\ln(1/d_i)-\ln(1/d_{i+1})}.
\end{align*}
The quantity 
\begin{equation}
\sum_{i=2}^{m}d_i\paren{\ln(1/d_i)-\ln(1/d_{i+1})}\label{eq:general-m-expr}
\end{equation}
can be more intuitively interpreted as follows: ``Given $m-1$ rectangles indexed from $2\dots m$ all with lower-left corner at the origin, and the $i$-th has width $d_i$ and height $\ln(1/d_i)$, what is the maximum possible area of their union?''

\begin{remark}
As $m\to\infty$, \cref{eq:general-m-expr} approaches $\frac{1}{\ln 2}\int_0^1\ln (1/x)\, dx=\frac{1}{\ln 2}\paren{x-x\ln x}\Big|_0^1=\log e \approx 1.44$.
\end{remark}

\begin{table}[H] 
\renewcommand{\arraystretch}{2}
\centering
\caption{Best-Known Additive Approximation Guarantees} \label{table:precise-guarantees}
\footnotesize
\begin{tabular}{|c|c|c|}
\hline
\textbf{} & \textbf{Prior Work}& \textbf{This Work} \\
\hline
{$m=2$}& $1$ \cite{cicalese2019minimum,rossi2019greedy,li2021efficient} & ${\frac{1}{e}\log_2(e) \approx 0.53}$ \\
\hline
{$m=3$}& $\log_2(e) \approx 1.44$ \cite{compton2022tighter}  & ${\max_{0<d_{2}<d_{3}<d_{4}=1}\sum_{i=2}^{3}d_i\log_2(d_{i+1}/d_i) \approx 0.77}$ \\
\hline
{$m=4$}& $\log_2(e) \approx 1.44$ \cite{compton2022tighter}  & ${\max_{0<d_{2}<\dots<d_{5}=1}\sum_{i=2}^{4}d_i\log_2(d_{i+1}/d_i) \approx 0.90}$ \\
\hline
{$m=5$}& $\log_2(e) \approx 1.44$ \cite{compton2022tighter}  & ${\max_{0<d_{2}<\dots<d_{6}=1}\sum_{i=2}^{5}d_i\log_2(d_{i+1}/d_i) \approx 0.99}$ \\
\hline
{$m=6$}& $\log_2(e) \approx 1.44$ \cite{compton2022tighter}  & ${\max_{0<d_{2}<\dots<d_{7}=1}\sum_{i=2}^{6}d_i\log_2(d_{i+1}/d_i) \approx 1.06}$ \\
\hline
{$m=7$}& $\log_2(e) \approx 1.44$ \cite{compton2022tighter}  & ${\max_{0<d_{2}<\dots<d_{8}=1}\sum_{i=2}^{7}d_i\log_2(d_{i+1}/d_i) \approx 1.10}$ \\
\hline
{$m=8$}& $\log_2(e) \approx 1.44$ \cite{compton2022tighter}  & ${\max_{0<d_{2}<\dots<d_{9}=1}\sum_{i=2}^{8}d_i\log_2(d_{i+1}/d_i) \approx 1.14}$ \\
\hline
{$m=9$}& $\log_2(e) \approx 1.44$ \cite{compton2022tighter}  & ${\max_{0<d_{2}<\dots<d_{10}=1}\sum_{i=2}^{9}d_i\log_2(d_{i+1}/d_i) \approx 1.17}$ \\
\hline
{$m=10$}& $\log_2(e) \approx 1.44$ \cite{compton2022tighter}  & ${\max_{0<d_{2}<\dots<d_{11}=1}\sum_{i=2}^{10}d_i\log_2(d_{i+1}/d_i) \approx 1.19}$ \\
\hline
{$m=11$}& $\log_2(e) \approx 1.44$ \cite{compton2022tighter}  & ${\max_{0<d_{2}<\dots<d_{12}=1}\sum_{i=2}^{11}d_i\log_2(d_{i+1}/d_i) \approx 1.21}$ \\
\hline
{General $m$}& $\log_2(e) \approx 1.44$ \cite{compton2022tighter} & ${\frac{1}{2}(\log_2(e) + 1)\approx 1.22}$ \\
\hline
\end{tabular}
\label{tab4}
\end{table}

\section{Experiment Details}\label{section:experiment-details}
Experiments were run on a laptop with a 3.49GHz 8-Core Apple M2 processor. Each entry of \cref{table:experiment} corresponds to the average of 100 runs over the specified configuration. If any of the runs exceeded 120 seconds, the entry is marked as ``$>120$''. In \cref{tab5}, we provide a corresponding experiment when coupling two distributions with different numbers of states $n_1,n_2$. In \cref{tab3}, we provide the same data as \cref{table:experiment}, but accompanied with the standard deviation. Each input distribution was sampled from the Dirichlet distribution with parameter 1 (meaning, uniform over the simplex). The algorithm ``Naive Polytope Vertex Enumeration'' corresponds to naively enumerating over all vertices in the corresponding optimization polytope. We were particularly generous to this baseline, and implemented it using insights regarding induced spanning trees (that were not discussed before our work) to avoid requiring it to solve systems of linear equations. Essentially, this would run even slower if done extremely naively. Rows of the form ``Backtracking [X]'' denotes using the backtracking algorithm of \cref{app:back} with the lower bound $X$ for its subroutine that helps prune the search space. The algorithm ``Dynamic Programming'' corresponds to \cref{app:dp}. All code is implemented in C++.

\begin{table*}[htbp] 
\centering
\caption{Average Runtime and Standard Deviation (in Seconds) for Exactly Coupling Two Distributions} \label{table:experiment-with-stddev}
\tiny
\setlength\tabcolsep{1.5pt}
\begin{tabular}{|c|c|c|c|c|c|c|c|c|}
\hline
\textbf{}&\multicolumn{8}{|c|}{\pmb{$n$}} \\
\cline{2-9}
\textbf{Algorithm}& 4 & 5 & 6 & 7 & 8 & 9 & 10 & 11\\
\hline
{Naive Polytope Vertex Enumeration} & $0.002\pm 0.0006$& $0.189\pm 0.002$& $33.79\pm 0.240$& $>120$& $>120$& $>120$& $>120$& $>120$\\
\hline
{Backtracking [$0$]} & $0.0003\pm 0.0001$& $0.004\pm 0.002$& $0.106\pm 0.041$& $3.576\pm 1.839$& $>120$& $>120$& $>120$& $>120$\\
\hline
{Backtracking [$H(\andd S)]$} & $0.0001\pm 0.00006$& $0.001\pm 0.0005$& $0.013\pm 0.010$& $0.256\pm 0.196$& $4.907\pm 3.896$& $>120$& $>120$& $>120$\\
\hline
{Backtracking [$H(\profile_S)$]} & $0.0001\pm 0.00006$ & $0.0009\pm 0.0005$ & $0.009\pm 0.008$& $0.153\pm 0.133$& $2.206\pm 1.995$& $>120$& $>120$& $>120$\\
\hline
{Backtracking [$H(\majorpro_S)$]} & $0.00006\pm 0.00003$& $0.0004\pm 0.0002$& $0.003\pm 0.003$& $0.035\pm 0.035$& $0.344\pm 0.352$& $5.398\pm 7.761$& $>120$& $>120$\\
\hline
{Dynamic Programming} & $0.00007\pm 0.00002$& $0.0004\pm 0.00002$& $0.002\pm 0.00004$& $0.012\pm 0.0003$& $0.093\pm 0.001$& $0.802\pm 0.016$& $9.769\pm 0.278$& $>120$\\
\hline
\end{tabular}
\label{tab3}
\end{table*}

\begin{table*}[htbp] 
\centering
\caption{Average Runtime (in Seconds) for Exactly Coupling Two Distributions with Different Cardinalities} 
\scriptsize
\setlength\tabcolsep{1.5pt}
\begin{tabular}{|c|c|c|c|c|c|c|c|c|c|c|c|}
\hline
\textbf{}&\multicolumn{11}{|c|}{\pmb{$n_1,n_2$}} \\
\cline{2-12}
\textbf{Algorithm}& 6,3 & 7,3 & 8,3 & 9,3 & 10,4 & 11,4 & 12,4 & 13,5 & 14,5 & 15,5 & 16,6 \\
\hline
{Naive Polytope Vertex Enumeration} & $0.0003$ & $0.038$& $0.219$& $1.172$& $>120$ & $>120$& $>120$& $>120$& $>120$& $>120$ & $>120$\\
\hline
{Backtracking [$0$]} & $0.0001$ & $0.002$& $0.008$& $0.030$& $1.147$ & $4.538$& $19.841$& $>120$& $>120$& $>120$ & $>120$\\
\hline
{Backtracking [$H(\andd S)]$} & $0.00008$& $0.0008$& $0.003$& $0.010$& $0.157$ & $0.498$& $2.139$& $>120$& $>120$& $>120$ & $>120$\\
\hline
{Backtracking [$H(\profile_S)$]} & $0.00009$& $0.0008$& $0.003$& $0.008$& $0.104$ & $0.295$& $1.066$& $>120$& $>120$& $>120$ & $>120$\\
\hline
{Backtracking [$H(\majorpro_S)$]} & $0.00006$& $0.0005$& $0.001$& $0.005$& $0.046$ & $0.139$& $0.566$& $5.855$& $19.927$& $>120$ & $>120$\\
\hline
{Dynamic Programming} & $0.00005$& $0.0003$ & $0.0008$& $0.002$& $0.011$ & $0.031$& $0.084$& $0.724$& $2.509$& $8.984$ & $>120$ \\
\hline
\end{tabular}
\label{tab5}
\end{table*}

\section{Algorithms for Section \ref{section:exactly-optimal}: Exactly Coupling Two Distributions, and Efficient Lower Bounds}

\subsection{Dynamic Programming}\label{app:dp}

First, we note that by \cref{lemma:monotone-backtracking}, any minimum-entropy coupling of $\p$ can be associated with some spanning forest of the unweighted graph with vertex set $V=\{\ell_{1\dots n}, r_{1\dots n}\}$; then we can add 0-weight edges to change this forest into a spanning tree. 

Our dynamic programming solution searches over all possible spanning trees. It keeps track of $O(2^{2n}\cdot n)$ states: for every $S\subseteq V$ and $v \in S$, we store the minimum entropy over all partial couplings of $S$ such that $v$ is the only vertex in $S$ with any mass remaining in $dp[S][v]$. Note that the remaining mass of $v$ must equal $\sum_{w\in S}(-1)^{\text{side}(w)\neq \text{side}(v)}\text{orig\_mass}(w)$ across all such partial couplings. Effectively, the state $(S,v)$ corresponds to all partial couplings that induce a spanning tree on $S$ rooted at $v$.

There are two types of transitions between states: the first attaches a new root to a spanning tree (taking $O(2^{2n}\cdot n^2)$ time), while the second merges two spanning trees with the same root (taking $O(3^{2n}\cdot n)$ time). The minimum entropy over all couplings is $dp[V][v]$ for any $v\in V$.

\subsection{Backtracking}\label{app:back}

\begin{algorithm}[H]
\begin{small}
    \caption{Backtracking Algorithm}
  \label{alg:backtrack}
\begin{algorithmic}[1]
    \State {\bfseries Input:} Marginal distributions of $m=2$ variables each with $n$ states $\{\mat{p},\mat{q}\}$.
    \State Initialize $best\_entropy = \infty$.
    \State \textbf{Backtrack}($\{\mat{p},\mat{q}\},entropy\_so\_far = 0, last\_mass = \infty$).
    \State \Return $best\_entropy$.
	\Procedure{\bfseries Backtrack($\{\mat{p},\mat{q}\}, entropy\_so\_far, last\_mass$)}{}
	\If {$entropy\_so\_far + X(\{\mat{p},\mat{q}\}) \ge best\_entropy$} \label{line:lb}
	\State \Return
	\EndIf
	\If {Mass$(\{\mat{p},\mat{q}\})=0$}
	\State $best\_entropy = \min(best\_entropy, entropy\_so\_far)$.
	\State \Return
	\EndIf
	
	\For {$(i,j)$ s.t. $0 < \min(\mat{p}(i),\mat{q}(j)) \le last\_mass$}
	\State $m = \min(\mat{p}(i),\mat{q}(j))$. $\mat{p'} = \mat{p}$. $\mat{q'} = \mat{q}$.
	\State $\mat{p'}(i) -\!= m$. $\mat{q'}(j) -\!= m$.
	\State \textbf{Backtrack}($\{\mat{p'},\mat{q'}\},entropy\_so\_far + m \log(1/m), m$).
	\EndFor
	\State \Return
	\EndProcedure
\end{algorithmic}
\end{small}
\end{algorithm}

Now we argue that this algorithm is correct. First, note that line 6 does not affect the correctness of the algorithm as long as $X(\{p,q\})\le \textsc{OPT}(\{p,q\})$, so we can ignore it. Next, we show that as long as $\p(i,j)$ is some minimum entropy coupling, our algorithm will construct it. Let $(i,j)$ be some pair such that $\p(i,j)$ is maximized. Then by \cref{lemma:monotone-backtracking}, our algorithm will compute $m=\min(p(i),q(j))$ on the $(i,j)$-th iteration of step 4. Then by induction, our algorithm will construct the remainder of $\p$ after recursing into $\{p',q'\}$ on that loop iteration.

\subsection{Lower Bounds in Almost-Linear Time}\label{app:lower-algs}

\paragraph{Computing $\profile_S$.}

Recall from \Cref{def:sketch} that for all $p\in S$ and $i\le |p|$, we must have $\profile_S\paren{\sum_{j\ge i}p(j)}\le p(i)$. We can compute $\profile_S$ using the following procedure:

\begin{enumerate}
    \item Define $T=\{(0,0)\} \cup \{(\sum_{j\ge i}p(j), p(i)) \mid p\in S, 1\le i\le |p|\}$.
    
    \item For each pair $(x,y)\in T$ in decreasing order of $x$, if $y$ is greater or equal to the smallest $y$ seen so far, remove the pair from $T$. In other words, there exists another pair $(x',y')\in T$ such that $x'\ge x$ and $y'\le y$, so the pair $(x,y)$ is redundant and removing it does not affect the profile.
    
    \item Reverse $T$. Now the entries of $T$ are increasing in both $x$ and $y$.
\end{enumerate}

Now $H(\profile_S)$ is the sum of $(x_2-x_1)\lg(1/y_2)$ over all adjacent pairs $(x_1,y_1)$ and $(x_2,y_2)$ in $T$. The time complexity is $\tilde O(nm)$.

\paragraph{Computing $\majorpro_S$.} Assume we have already computed $T=[(x_1,y_1),\dots,(x_{|T|},y_{|T|})]$ as in the previous part. Then it suffices to describe how to implement the algorithm described in \Cref{section:major-pro-proof} in $O(|T|)$ additional time. We recall that the algorithm essentially asks us to answer the following query for at most $|T|$ values of $t$ in decreasing order: how large can the side length of a square with lower-right corner at $(t,0)$ be without the square exceeding the profile?

We first describe how to answer each query in $O(|T|)$ time, resulting in an $O(|T|^2)$ algorithm. If we are growing a square with lower-right corner at $(t,0)$, then its side length must be
\begin{equation}
\min_{(x,y)\in T}\max(y, t-x)=\min_{(x,y)\in T}\begin{cases}
t-x & x+y\le t\\
y & x+y > t
\end{cases},  \label{eq:maj-profile}  
\end{equation}
which can be evaluated in $O(|T|)$ time.

We can speed up the above algorithm by noting that $T$ is sorted in increasing order of $x+y$. Then define $j$ be the minimum index such that $x_j+y_j>t$. Now \Cref{eq:maj-profile} simplifies to: 
\begin{equation*}
\min_{(x,y)\in T}\begin{cases}
t-x & x+y\le t\\
y & x+y > t
\end{cases}=\min\paren{\min_{1\le i<j}t-x_i,\min_{j\le i\le |T|}y_j}=\min(t-x_{j-1},y_j).
\end{equation*}
As the queries come in decreasing order of $t$, $j$ must stay the same or decrease between queries. Thus the time required to answer all queries is proportional to the number of queries plus the number of times $j$ moves between queries, both of which are $O(|T|)$.

\section{Proofs for Section \ref{section:exactly-optimal}: Exactly Optimal Solutions}\label{section:exactly-optimal-proof}

\subsection{Proof of \cref{theorem:np-reduct}}

\npreduct*

By \cref{lemma:small-sup} it suffices to provide the coordinates of some vertex $v\in \BP$ at which the objective is minimized. We note that \cref{lemma:small-sup} shows that every vertex has at most $e\le nm-m+1$ nonzero coordinates. Furthermore, if the input probabilities are rational numbers with bounded bit complexity, then $v'$ (the vector consisting only of the nonzero entries of $v$) must be the unique solution to a linear system of the form $Av'=b$, where $A$ consists solely of zeros and ones, and $A$ has shape $e\times e$. The solution to this system can easily be seen to have bit complexity bounded above by the bit complexity of $\p$ times $\text{poly}(nm)$.

\subsection{Proof of \cref{lemma:monotone-backtracking}}

\monotoneback*

For the first part of the lemma, it suffices to check that $G$ cannot contain any cycles. Suppose for the sake of contradiction that $G$ contains a simple cycle $(i_1,j_1,i_2,j_2,\dots,i_k,j_k)$, and define $\delta>0$ to be the minimum weight along the cycle. Then because entropy is concave, we can obtain a smaller entropy by alternately adding and subtracting $\delta$ along the edges of the cycle.

For the second part, suppose for the sake of contradiction that there exists an edge $(i,j)$ of the maximum weight such that $i$ is connected to some other vertex $j'$ and $j$ is connected to some other vertex $i'$ in $G$. Let the weights $w(i,j)$, $w(i,j')$, and $w(i',j)$ be $a$, $b$, and $c$, respectively; WLOG $a\ge b\ge c$. Then we claim that the entropy would decrease if we perform the following four modifications: 
\begin{enumerate}
    \item increase $w(i,j)$ by $c$
    \item decrease $w(i,j')$ by $c$
    \item increase $w(i',j')$ by $c$
    \item set $w(i',j)=0$
\end{enumerate}

Indeed, the change in entropy is upper bounded by
\begin{align*}
&\left[(a+c)\lg(1/(a+c))+(b-c)\lg(1/(b-c))+c\lg(1/c)\right]-\left[a\lg(1/a)+b\lg(1/b)+c\lg(1/c)\right]\\
&=\left[(a+c)\lg(1/(a+c))+(b-c)\lg(1/(b-c))\right]-\left[a\lg(1/a)+b\lg(1/b)\right],
\end{align*}

which is negative because because $(a+c,b-c)$ majorizes $(a,b)$.

\subsection{Proof of \cref{theorem:major-pro}}
\label{section:major-pro-proof}

\majorprothm*

We first describe how to construct $\majorpro_S$. In general, we define $\majorpro_S(i)$ in terms of $\majorpro_S(1)$, $\majorpro_S(2)$, $\dots$, $\majorpro_S(i-1)$. Specifically, if $\sum_{j=1}^{i-1}\majorpro_S(j)<\mass(S)$, then $\majorpro_S(i)$ is defined to be the maximum positive real $r_i$ such that $\profile_S(x)\ge r_i$ for all $x>\mass(S)-\sum_{j=1}^{i-1}\majorpro_S(j)-r_i$. Intuitively, this corresponds to growing a square with side length $r_i\times r_i$ and lower-right corner at $(1-\sum_{j=1}^{i-1}\majorpro_S(j),0)$ until it touches $\profile_S$. Note that $r_{i+1}>r_i$ would lead to a contradiction, because setting $r_i=r_{i+1}$ would lead to the square associated with $r_i$ lying below the profile, meaning that $r_i$ could have been larger. Thus $r_i\ge r_{i+1}$, meaning that our construction produces the elements of a probability distribution in non-increasing order.

Next, we show that $\majorpro_S$ majorizes any distribution $p$ such that $\sketch_{p}(x)\le \profile_S(x)$ for all $x$ using induction. Suppose that we have already proven that $\sum_{j=1}^ip(j)\le \sum_{j=1}^i\majorpro_S(j)$; it remains to show this inequality for $i+1$. Indeed, the inequality for $i+1$ rearranges to
\begin{equation*}
r_i\ge \sum_{j=1}^{i+1}p(j)-\sum_{j=1}^{i}\majorpro_S(j).
\end{equation*}
This inequality is true because the entirety of the square with lower-right corner at $(1-\sum_{j=1}^{i}\majorpro_S(j),0)$ and side length $\max\paren{\sum_{j=1}^{i+1}p(j)-\sum_{j=1}^{i}\majorpro_S(j),0}$ is contained within the square with lower-right corner at $(\sum_{j=1}^{i}p(j),0)$ and side length $p(i+1)$. The entirety of this second square lies below $\profile_S$ by the assumption on $p$. 

Now we are ready to show the inequalities in \cref{theorem:major-pro}.
\begin{itemize}
    \item Because $\sketch_{\opt}$ lies below $\profile_S$, the work above shows that $\opt$ is majorized by $\majorpro_S$. Thus, $H(\opt)\ge H(\majorpro_S)$.
    \item Next, 
    \begin{align*}
    H(\majorpro_S)&=\int_{0}^1\lg\paren{1/\sketch_{\majorpro_S}(x)}\, dx\\
    &\ge \int_{0}^1\lg\paren{1/\profile_S(x)}\, dx\\
    &=H(\profile_S),
    \end{align*}
    where the inequality holds because $\sketch_{\majorpro_S}(x)\le \profile_S(x)$ for all $x$.
    \item Finally, we claim that $H(\majorpro_S)\ge H(\andd S)$ because $\andd S$ majorizes $\majorpro_S$. Suppose for the sake of contradiction that $\andd S$ does not majorize $\majorpro_S$; then there exists $p\in S$ such that $p$ does not majorize $\majorpro_S$. This in turn implies that there exists some $i$ such that $\sum_{j=1}^ip(j)\ge \sum_{j=1}^i\majorpro_S(j)$ and $\sum_{j=1}^{i+1}p(j)< \sum_{j=1}^{i+1}\majorpro_S(j)$. But then there exists some $x$ such that $\sketch_{\majorpro_S}(x)>\sketch_p(x)\ge \profile_S(x)$, contradicting the assumption that $\sketch_{\majorpro_S}(x)\le \profile_S(x)$ for all $x$. Thus, $\andd S$ must majorize $\majorpro_S$.
\end{itemize}

\begin{remark}

It is natural to ask whether 
we can in fact show that $H(\profile_S)\ge H(\andd_S)$, particularly given that this inequality often holds for randomly generated instances (e.g. all the runs included in \cref{tab3}). It turns out that this is not true in general; $S=\{[0.5, 0.5], [0.75, 0.05, 0.05, 0.05, 0.05, 0.05]\}$ is a counterexample.

\end{remark}

\section{Optimality Gaps: Bounds on Possible Improvement}\
\label{section:gaps}

In this section, we examine the open landscape of upper bounds for greedy coupling in reference to particular lower bounds. In \cref{tab:greed-ub}, we recall the best-known proven upper bounds on the entropy of the greedy coupling with respect to particular lower bounds. Whether or not these bounds can be improved (and if so, by how much) are important open questions. In the first column (``Lower Bound on $H(\greed)$'') of \cref{tab:greed-opt-lb}, we detail counter-examples that show limits for how much the upper bounds of the greedy algorithm in \cref{tab:greed-ub} can be improved. In the second column (``Lower Bound on $H(\opt)$'') of \cref{tab:greed-opt-lb}, we detail counter-examples that show limits for how much the upper bounds of \emph{any algorithm} can be improved. The counter-examples are a mixture of results from prior work, new hand-constructed instances, and instances we computationally discovered with a local search. Note that the entries in the first and fifth row of the second column of \cref{tab:greed-opt-lb} correspond to the majorization barrier. Finally, in \cref{sec:other-algs} we discuss counter-examples that show the algorithms of \cite{cicalese2019minimum,li2021efficient} for coupling $m=2$ distributions do not match the approximation guarantee for the greedy coupling we show in \cref{thm:0.53}.

We emphasize that for all upper bounds in \cref{tab:greed-opt-lb} it is open whether they can be improved (other than $H(\andd_S)$ for general $m$, which is tight). We note that this section shows how the gap for improving the upper bound of $H(\greed)-H(\opt)$ for $m=2$ is comparatively small, as we are able to provide an input where $H(\greed)-H(\opt)=0.40$, which is relatively close to the upper bound provided by \cref{thm:0.53} ($\approx 0.53$). On the other hand, for general $m$, we were unable to generate instances with $H(\greed)-H(\opt)>0.71$, which is relatively far away from the upper bound provided by \cref{thm:1.22} ($\approx 1.22$). We did not bound $H(\opt)-H(\profile_S)$ or $H(\opt)-H(\majorpro_S)$ for general $m$ due to how it is computationally infeasible to compute optimal solutions for general $m$ (making local search similarly infeasible).

\emph{Please note that all lower bounds in \cref{tab:greed-opt-lb} are simply the best counter-examples that are currently known (some by a fairly limited computational search), and do not correspond to any conjecture of the best lower bound.}

\begin{table}[H]
    \centering
    \begin{tabular}{cc|c}
       Reference Lower Bound & $m$ & Best Upper Bound on $H(\greed)$ \\
    \hline
      $H(\andd_S)$      & $m=2$  & $H(\greed)\le H(\andd_S)+1$ \cite{rossi2019greedy}  \\
      $H(\profile_S)$   & $m=2$ & $H(\greed)\le H(\profile_S)+0.53$ (\cref{thm:0.53}) \\
      $H(\majorpro_S)$  & $m=2$ & same as above \\
      $H(\opt)$ & $m=2$ & same as above \\
      $H(\andd_S)$      & general $m$ & $H(\greed)\le H(\andd_S)+1.44$ \cite{compton2022tighter} \\
      $H(\profile_S)$   & general $m$ & $H(\greed)\le H(\profile_S)+1.22$ (\cref{thm:1.22}) \\
      $H(\majorpro_S)$  & general $m$ & same as above \\
      $H(\opt)$ & general $m$ &  same as above \\
    \end{tabular}
    \caption{Upper Bounds on Greedy}
    \label{tab:greed-ub}
\end{table}

\begin{table}[H]
    \centering
    
\scriptsize
    \begin{tabular}{cc|cc}
       Reference Lower Bound & $m$ & Lower Bound on $H(\greed)$ & Lower Bound on $H(\opt)$ \\
    \hline
      $H(\andd_S)$      & $m=2$  & $H(\greed)\approx H(\andd_S)+0.66$  (\cref{sec:local-search})  & $H(\opt)\approx H(\andd_S)+0.66$  (\cref{sec:local-search}) \\
      $H(\profile_S)$   & $m=2$ & $H(\greed)\approx H(\profile_S)+0.46$ (\cref{sec:greed-profile}) & $H(\opt)\approx H(\profile_S)+0.39$ (\cref{sec:local-search}) \\
      $H(\majorpro_S)$  & $m=2$ & same as above & $H(\opt)\approx H(\majorpro_S)+0.35$ (\cref{sec:local-search}) \\
      $H(\opt)$ & $m=2$ &  $H(\greed)= H(\opt)+0.40$ (\cref{sec:local-search}) & $=$ \\
      $H(\andd_S)$      & general $m$ & $H(\greed)\approx H(\andd_S)+1.44$ \cite{compton2022tighter} & $H(\opt)\approx H(\andd_S)+1.44$ \cite{compton2022tighter} \\
      $H(\profile_S)$   & general $m$ & $H(\greed)\approx H(\profile_S)+0.89$ (\cref{sec:greed-profile}) & same as $m=2$ \\
      $H(\majorpro_S)$  & general $m$ & same as above & same as $m=2$  \\
      $H(\opt)$ & general $m$ &  $H(\greed)\approx H(\opt)+0.71$ (\Cref{sec:rand-gen}) & $=$ \\
    \end{tabular}
    \caption{Lower Bounds on Greedy and OPT}
    \label{tab:greed-opt-lb}
\end{table}

\subsection{Gaps Between $H(\greed)$ and $H(\profile_S)$}\label{sec:greed-profile}

Our lower bounds are achieved by letting $\profile_S=\sketch_{p^*}$, where $p^*\in S$ is a uniform distribution. We note that $H(\profile_S)=H(\majorpro_S)=H(p^*).$

Let $U_s$ denote the uniform distribution over $s$ states. 

\paragraph*{Case $m=2$.} We used $S=\{U_{F_t},U_{L_{t-1}}\}$. Here, $F_n=\text{round}\paren{\frac{\vphi^n}{\sqrt 5}}$ denotes the $n$-th Fibonacci number, while $L_n=\text{round}\paren{\vphi^n}$ denotes the $n$-th Lucas number. For $t=12$, we have $F_t=144$, $L_{t-1}=199$, and $H(\greed)-H(p^*)>0.457$.

\paragraph*{Case general $m$.} We started with $S=\{U_1,U_2,\dots, U_{n}\}$, where $n=2000$. This construction already gives $H(\greed)-H(p^*)>0.805$. To construct instances where $H(\greed)-H(p^*)$ is even greater, we first state an inequality characterizing the behavior of the greedy algorithm on $S$:
\begin{equation}
\greed(t+1) \ge \min\paren{\frac{1-\sum_{i=1}^t\greed(i)}{\min(t, n)}, \frac{1}{n}}.\label{eq:better-guarantee}
\end{equation}
The first expression in the right hand side of \Cref{eq:better-guarantee} comes from all $p\in S$ with $|p|\le t$, while the second expression comes from all $p\in S$ with $|p|>t$. Thus, $\greed$ majorizes the distribution $\greed'$ satisfying $\greed'(t+1)=\min\paren{\frac{1-\sum_{i=1}^t\greed'(i)}{\min(t, n)}, \frac{1}{n}}$ for all $t\ge 0$. If we add to $S$ a distribution $p_t$ with $|p_t|=t$ to make \Cref{eq:better-guarantee} tight for each $t\in (1000, 1414]$ , then we obtain $H(\greed)-H(p^*)>0.887$. It is likely possible to construct instances with even greater gaps between $\greed$ and $p^*$, as
\begin{equation}
H(\greed)-H(\profile_S)\le H(\greed')-H(\profile_S)\approx 1.082. \label{eq:greed-guarantee}
\end{equation}
However, \Cref{eq:better-guarantee} cannot be made tight for $t=1415$, so we cannot have equality in \Cref{eq:greed-guarantee}. It is unclear what the value of $H(\greed)-H(\profile)$ is if we let $S$ consist of \textit{all} distributions with sketches lying above $\sketch_{p^*}$. %

\subsection{Local Search for $m=2$}\label{sec:local-search}

In this section, we detail ``counter-examples'' found with a computational local search, i.e. example distributions achieving large gaps between various coupling entropy bounds. These example empirical gaps provide limits on any future theoretical bounds on the size of these gaps. At a high-level, this search works by repeatedly making random perturbations to the input distributions and accepting changes that increase the gap we are finding a lower bound for. %

\underline{$H(\greed)-H(\andd_S) \approx 0.662463$}\\
$p_1 = [0.3199940773, 0.3199844734, 0.1200540976, 0.1200022716, 0.1199650801]$\\
$p_2 = [0.2000218248, 0.2000211548, 0.2000202369, 0.1999737730, 0.1999630105]$

\underline{$H(\opt)-H(\andd_S) \approx 0.662405$.}\\
$p_1 = [0.3199940773, 0.3199844734, 0.1200540976, 0.1200022716, 0.1199650801]$\\
$p_2 = [0.2000218248, 0.2000211548, 0.2000202369, 0.1999737730, 0.1999630105]$

\underline{$H(\opt)-H(\profile_S) \approx 0.389941$.}\\
$p_1 = [0.2128275903, 0.2122898591, 0.2119627146, 0.2119384365, 0.1509813995]$\\
$p_2 = [0.2747693214,0.2739951951, 0.2739769942, 0.1161585898, 0.0610998995]$

\underline{$H(\opt)-H(\majorpro_S) \approx 0.354485$.}\\
$p_1 = [0.2128275903, 0.2122898591, 0.2119627146, 0.2119384365, 0.1509813995]$\\
$p_2 = [0.2747693214,0.2739951951, 0.2739769942, 0.1161585898, 0.0610998995]$

\underline{$H(\greed) 
 - H(\opt)\approx 0.395053$.}\\
$p_1 = [0.4081266587, 0.3060949942, 0.1530474970, 0.0765237476, 0.0382618746, 0.0179452279]$ \\
$p_2 = [0.3060949942, 0.2040633294, 0.2040633294, 0.1530474970, 0.0765237476, 0.0382618746, 0.0179452278]$

We can modify this last example to obtain $H(\greed)-H(\opt)=0.4$ exactly. Specifically, we can let\\
$p_1=[0.4,0.3,0.15,0.075,0.0375,\dots]$\\
$p_2=[0.3,0.2,0.2,0.15,0.075,0.0375,\dots]$

where the dots represent a geometric sequence with ratio $0.5$. Then $\opt=p_2$, but $\greed$ replaces one occurrence of $0.2$ in $\opt$ with $[0.1, 0.05, 0.025, \dots]$.

\subsection{Gap Between $H(\greed)$ and $H(\opt)$ for General $m$}\label{sec:rand-gen}

We can generate instances where $H(\greed)$ and $H(\opt)$ differ by starting with $S=\{\opt\}$ and repeatedly inserting into $S$ any probability distribution generated by combining states of $\opt$. We generated $2000$ instances with $|\opt|=10$ from the simplex and $m=4001$. The instance that resulted in the maximum difference between $H(\greed)$ and $H(\opt)$ gave $H(\greed)\approx H(\opt)+0.707147$.

\subsection{Counter-Examples for Other Algorithms}\label{sec:other-algs}
The works of \cite{cicalese2016approximating,li2021efficient} introduce algorithms for coupling $m=2$ distributions. Prior to our work, the algorithms of \cite{kocaoglu2017entropic,cicalese2019minimum,li2021efficient} all had the best additive guarantee of $1$ bit. By \cref{thm:0.53}, we show the greedy coupling algorithm of \cite{kocaoglu2017entropic} is always within $\log(e)/e \approx 0.53$ bits of the optimum. We provide a counter-example where the algorithms of \cite{cicalese2016approximating,li2021efficient} do not meet this approximation guarantee (found with local search). We do not conjecture whether these algorithms have counter-examples where their optimality gap is larger. Interestingly, for this counter-example, the algorithms of \cite{cicalese2016approximating,li2021efficient} both produce a coupling with the same output, where both algorithms have an optimality gap of $\approx 0.6265549682$ for: \\
$p_1 = [0.5540050843, 0.1984459548, 0.1288780486, 0.0396890356, 0.0789189118, 0.0000629649]$ \\
$p_2 = [0.2770967899, 0.2769100227, 0.1984729975, 0.1288783386, 0.0789194408, 0.0397224105]$ 

\section{Generalizing to Concave Minimum-Cost Couplings}\label{section:concave-costs}

We emphasize that our methods in all previous sections are not unique to the entropy function but broadly apply to concave functions. These methods also have the flexibility to yield multiplicative guarantees for functions where an additive approximation guarantee is unexpected. We will define the cost function for coupling with distribution $p$ as $\Fc(p) \triangleq \sum_i \fc(p(i))$. The necessary assumptions are that $\fc$ is non-negative and concave and $\fc(0)=0$. Note that entropy is a specific example of this where $\fc(x) = x \lg(1/x)$. The corresponding lower bound given by the profile is $\Fc(\profile_S)=\int_0^1 \frac{\fc(\profile_S(x))}{\profile_S(x)}\,dx$. %

For cost functions of the form $\fc(x)=x^c$ where $c\in (0,1)$, we can use the same methods as \cref{section:two-couple} and \cref{section:many-couple} to show that the cost achieved by the greedy algorithm is within a multiplicative factor of the optimum.

\begin{theorem}\label{thm:mult-approx-2}

For $m=2$, let $r\triangleq\max_{0<q<p\le 1}\frac{\fc(q)}{\fc(p)}-\frac{q}{p}$. If $r<1$, then the greedy algorithm achieves a $\frac{1}{1-r}$-multiplicative approximation.

\end{theorem}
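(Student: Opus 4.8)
The plan is to mirror the monovariant argument used for \cref{thm:0.53}, but replace additive slack by multiplicative slack. Define $\Fc(p)\triangleq\sum_i\fc(p(i))$ and recall the profile lower bound $\Fc(\opt)\ge\Fc(\profile_S)$, which holds for any non-negative concave $\fc$ with $\fc(0)=0$ by exactly the argument of \cref{theorem:profile-lb} (the ``inverse'' integral identity in \cref{lemma:entropy-consistent} and the monotonicity in \cref{lemma:hist-lb} only use concavity and $\fc(0)=0$). First I would set up, as in the $m=2$ proof, the quantity $M^t\triangleq \Fc(\profile_{S^t})+\Fc(\greed^t)$, with $M^0=\Fc(\profile_S)$ and $M^{|\greed|}=\Fc(\greed)$. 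The goal is the per-step bound
\begin{equation*}
M^{t+1}\le M^t + r\cdot \fc\bigl(\greed(t+1)\bigr),
\end{equation*}
since telescoping then gives $\Fc(\greed)=M^{|\greed|}\le M^0 + r\sum_{t}\fc(\greed(t)) = \Fc(\profile_S)+r\,\Fc(\greed)$, i.e. $(1-r)\Fc(\greed)\le \Fc(\profile_S)\le \Fc(\opt)$, which is the claimed $\tfrac{1}{1-r}$-multiplicative guarantee once $r<1$.

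To obtain the per-step bound I would reuse the profile bookkeeping from \cref{section:0.53-proof}. With $p^t(1)\le q^t(1)$, the greedy removes a state of size $p\triangleq p^t(1)=\greed(t+1)$; the change $\Fc(\greed^{t+1})-\Fc(\greed^t)=\fc(p)$, while the analysis of how $\iprof_{S^{t+1}}$ changes relative to $\iprof_{S^t}$ (exactly \cref{eq:coupling-2}, which is purely about the masses, not about the cost) shows that $\Fc(\profile_{S^{t+1}})-\Fc(\profile_{S^t})$ is at most the cost of converting a mass-$p$ region at ``height'' $\le p$ into one at height $q-p$ (nonzero only when $q<2p$, writing $q\triangleq q^t(1)$). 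Concretely this difference is bounded by $\fc(q-p)-\fc(p)$ when $q<2p$ and by $0$ otherwise (one replaces a box contributing $\fc(p)$ by a box contributing $\fc(q-p)$; here I need $\fc$ monotone on the relevant range, which follows from concavity plus $\fc(0)=0$, so $\fc$ is nondecreasing on $[0,1]$). Hence
\begin{equation*}
M^{t+1}-M^t \le \max\Bigl(0,\ \fc(q-p)-\fc(p)\Bigr) + \fc(p) - \fc(p) \le \max_{0<q-p<p}\bigl(\fc(q-p)\bigr),
\end{equation*}
and after normalizing by $\fc(p)$ and setting $s=q-p\in(0,p)$ this is at most $\fc(p)\cdot\max_{0<s<p}\tfrac{\fc(s)}{\fc(p)}$. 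The definition $r=\max_{0<q<p\le 1}\bigl(\tfrac{\fc(q)}{\fc(p)}-\tfrac{q}{p}\bigr)$ then needs to absorb the difference; I expect the correct accounting keeps the $\fc(p)$ term from greedy and the $-p\cdot(\text{something})$ from the profile's residual mass at height $q-p$, so that the surviving coefficient is exactly $\tfrac{\fc(q-p)}{\fc(p)}-\tfrac{q-p}{p}\le r$.

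The main obstacle is getting this last bookkeeping exactly right: one must carefully track that the profile, after the greedy step, retains mass $p$ at a height no larger than $q-p$ (contributing a term like $p\cdot\fc(q-p)/(q-p)$ in the ``entropy of the profile'', i.e.\ the integral form $\int \fc(\profile)/\profile$), and confirm this precisely cancels against the naive $\fc(q-p)$ to leave $r\,\fc(p)$. For $\fc(x)=x^c$ one can sanity-check: $\tfrac{q^c}{p^c}-\tfrac{q}{p}$ is maximized over $0<q<p$ at a fixed ratio $q/p$, giving $r=(1-c)c^{c/(1-c)}\in(0,1)$ for $c\in(0,1)$, so the hypothesis $r<1$ is automatically satisfied and the theorem yields a genuine multiplicative guarantee. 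I would also double-check the boundary/degenerate cases ($q\ge 2p$, and the normalization when $\fc(p)=0$, which cannot occur for $p>0$ since $\fc$ concave, nonnegative, $\fc(0)=0$ forces $\fc>0$ on $(0,1]$ unless $\fc\equiv 0$).
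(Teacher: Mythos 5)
Your overall strategy is the same as the paper's (extend the profile lower bound to general concave costs and rerun the monovariant argument of \cref{thm:0.53} with multiplicative slack; your telescoping step and the paper's monovariant $\Fc(\iprof_S)+(1-r)\Fc(\greed)$ are equivalent), but the heart of the proof --- the per-step inequality $M^{t+1}\le M^t+r\,\fc(\greed(t+1))$ --- is not actually established, and you say so yourself (``the main obstacle,'' ``I expect the correct accounting\ldots''). Concretely, your bound on the profile change, $\Fc(\profile_{S^{t+1}})-\Fc(\profile_{S^t})\le \fc(q-p)-\fc(p)$, is valid but too weak: it only yields a per-step coefficient $\fc(q-p)/\fc(p)$, whereas the theorem needs $\fc(q-p)/\fc(p)-(q-p)/p$, and the missing term $-\tfrac{q-p}{p}\fc(p)$ is exactly what makes $r<1$ possible (with your weaker bound, $\max_{0<q<p}\fc(q)/\fc(p)$ can exceed $1$, e.g.\ for entropy with $p$ near $1$). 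The correct accounting is that the mass $q-p$ which drops to height $q-p$ was previously counted at height at least $p$, so the increase is $(q-p)\bigl(\fu(q-p)-\fu(p)\bigr)$ with $\fu(x)=\fc(x)/x$, not $\fc(q-p)$; together with the loss of $p\,\fu(p)=\fc(p)$ from the coupled mass this gives precisely $-r\fc(p)$ after adding the greedy term. The paper gets this by extending the cost functional to inverse sketches, $\Fc(\isketch_p)=\isketch_p(1)\fu(1)+\int_0^1\isketch_p(y)(-\fu'(y))\,dy$, and applying \cref{eq:coupling-2} to this weighted integral; your sketched correction term $p\cdot\fc(q-p)/(q-p)$ is not the right quantity, so the cancellation you hope for would not come out as stated.

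Two smaller points: your claim that concavity, nonnegativity, and $\fc(0)=0$ force $\fc$ to be nondecreasing on $[0,1]$ is false (entropy $x\lg(1/x)$ decreases on $(1/e,1]$); what concavity actually gives, and what the correct argument uses, is that $\fu(x)=\fc(x)/x$ is nonincreasing. This monotonicity of $\fu$ (not of $\fc$) is also what makes your appeal to \cref{lemma:hist-lb} legitimate for the generalized lower bound $\Fc(\opt)\ge\Fc(\profile_S)$, which is fine. Your sanity check for $\fc(x)=x^c$ is correct and agrees with \cref{corollary:sqrt-approx}.
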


\begin{theorem}\label{thm:mult-approx-general}

For general $m$, when $\fc(x)=x^c$, the greedy algorithm obtains a $\paren{\frac{1}{2}+\frac{1}{c2^c}}$-multiplicative approximation.

\end{theorem}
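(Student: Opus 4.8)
The goal is to prove Theorem \ref{thm:mult-approx-general}: for general $m$ and cost function $\fc(x) = x^c$ with $c \in (0,1)$, the greedy algorithm achieves a $\paren{\frac{1}{2} + \frac{1}{c2^c}}$-multiplicative approximation. The natural strategy is to mirror the additive proof of \cref{thm:1.22}, replacing the entropy integrand $\lg(1/y)$ by the cost integrand and tracking multiplicative rather than additive slack. Recall that in \cref{lemma:integral,lemma:define-C} we showed $\Fc(\greed) \le \int_0^1 {\rmassadv_S}'(y) \cdot \fc'_{\textrm{unit}}(\cdots)$-style bounds; more precisely, for a concave cost, $\Fc(\greed) = \int_0^1 \tfrac{\fc(\sketch_{\greed}(x))}{\sketch_{\greed}(x)}\,dx \le \int_0^1 \tfrac{\fc(C_S(x))}{C_S(x)}\,dx$ since $\fc(y)/y$ is non-increasing in $y$ (concavity plus $\fc(0)=0$) and $C_S \le \sketch_{\greed}$. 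So the first step is to re-derive this cost analogue of \cref{lemma:integral}, with $C_S$ still defined from $\rmassadv_S$ exactly as in \cref{lemma:define-C} — that lemma's proof is purely combinatorial about remaining masses and does not involve the cost function, so it carries over verbatim.

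\textbf{Key steps.} First I would change variables in $\int_0^1 \tfrac{\fc(C_S(x))}{C_S(x)}\,dx$ to an integral against $d\rmassadv_S(y)$, obtaining $\int_0^1 \tfrac{\fc(y)}{y}\,d\rmassadv_S(y)$ (this is the cost analogue of \cref{eq:multicouple-step1}), then integrate by parts. Next, I would reuse the key decomposition \cref{eq:ds-rhs}: $\rmassadv_p(y) = \tfrac{\isketch_p(y) + \isketch_p(2y)}{2} + y\int_{2y}^1 \isketch'_p(t)/t\,dt$, which upper-bounds $\rmassadv_S(y)$ by the same expression with $\iprof_S$ in place of $\isketch_p$. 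Substituting this bound and the specific $\fc(x) = x^c$ into the integrated-by-parts expression, the integral splits into three pieces: the $\isketch_p(y)/2$ term contributes $\tfrac12 \Fc(\profile_S)$, the $\isketch_p(2y)/2$ term contributes another factor after a rescaling $y \mapsto 2y$ that produces exactly the $\tfrac{1}{c2^c}$-type constant (this is where the $2^c$ appears — rescaling the argument of a degree-$c$ homogeneous cost), and the tail integral $y\int_{2y}^1 \iprof_S'(t)/t\,dt$ needs to be bounded in terms of $\Fc(\profile_S)$ as well. The target bound $\Fc(\greed) \le \paren{\tfrac12 + \tfrac{1}{c2^c}} \Fc(\profile_S)$ should then fall out after collecting constants, using $\Fc(\opt) \ge \Fc(\profile_S)$ (the profile lower bound, valid for any concave cost as noted in the excerpt).

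\textbf{Main obstacle.} The delicate point is handling the tail term $y \int_{2y}^1 \iprof_S'(t)/t\,dt$ and showing that, when multiplied by $\fc(y)/y = y^{c-1}$ and integrated, it is dominated by a clean multiple of $\Fc(\profile_S) = \int_0^1 \iprof_S(t)/t \cdot \tfrac{\fc(t)}{t}\,dt$-style quantity — in the entropy case (\cref{thm:1.22}) this tail produced the additive constants $\tfrac12 + \tfrac{1}{2\ln 2}$ via $\int_{1/2}^1 \tfrac{1}{2y}\,dy$ and $\int_0^1 \iprof'(y)\,dy \le 1$, but for $\fc(x)=x^c$ the analogous manipulation requires swapping the order of integration in $\int_0^1 y^{c-1}\paren{\int_{2y}^1 \iprof_S'(t)/t\,dt}\,dy$ and carefully comparing the resulting kernel against $t^{c-2}\iprof_S(t)$; the homogeneity of $x^c$ is what makes the constants line up, but verifying the inequality direction (that the kernel is pointwise bounded, not just integrally bounded) is where I expect the real work to lie. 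An alternative, possibly cleaner route is to bound $\rmassadv_S(y)$ more crudely by $\min(2\iprof_S(y), \mass(S))$-type estimates and argue directly, but the sharp constant $\tfrac{1}{c2^c}$ strongly suggests the $\isketch_p(2y)$ rescaling is essential and the integration-by-parts route is the intended one.
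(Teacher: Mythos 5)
Your proposal follows essentially the same route as the paper's proof: the same $C_S$ and $\rmassadv_S$ from \cref{lemma:define-C}, the cost analogue of \cref{lemma:integral}, integration by parts, the decomposition \cref{eq:ds-rhs} with $\iprof_S$ in place of $\isketch_p$, and the resulting three-way split of the integral. The one step you flagged as delicate—the tail term—resolves exactly as you anticipated by swapping the order of integration, and in fact requires no kernel inequality at all: for $\fu(y)=y^{c-1}$ the inner integral evaluates in closed form as $\int_0^{t/2}(-\fu'(y))\,y\,dy=\frac{1-c}{c2^c}\fc(t)$, which together with the $2^{-c}\Fc(\iprof_S)$ contribution of the $\iprof_S(2y)$ term gives precisely $\frac{1}{c2^c}\Fc(\iprof_S)$.
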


\begin{corollary}\label{corollary:sqrt-approx}

For $\fc(x)=\sqrt x$, the greedy algorithm obtains a $4/3$-multiplicative approximation for $m=2$ and a $\approx 1.91$-multiplicative approximation for general $m$.

\end{corollary}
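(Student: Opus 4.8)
The plan is to obtain Corollary~\ref{corollary:sqrt-approx} by directly specializing Theorems~\ref{thm:mult-approx-2} and~\ref{thm:mult-approx-general} to the cost function $\fc(x)=\sqrt{x}$, i.e.\ the case $c=\tfrac12$ of $\fc(x)=x^c$. One first checks the standing hypotheses: $\fc(x)=\sqrt{x}$ is non-negative and concave on $[0,1]$ and satisfies $\fc(0)=0$, so both theorems apply, and $\sqrt{x}$ is literally of the form $x^c$ with $c=\tfrac12$, which is what Theorem~\ref{thm:mult-approx-general} assumes.

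For $m=2$, the task is to evaluate the constant $r\triangleq\max_{0<q<p\le 1}\frac{\fc(q)}{\fc(p)}-\frac{q}{p}$ from Theorem~\ref{thm:mult-approx-2}. Substituting $\fc(x)=\sqrt{x}$ and setting $t\triangleq q/p\in(0,1)$, the expression to maximize collapses to the single-variable function $g(t)=\sqrt{t}-t$ on $(0,1)$. Since $g'(t)=\tfrac{1}{2\sqrt{t}}-1$ vanishes only at $t=\tfrac14$ and $g$ is concave, the maximizer is $t=\tfrac14$ with value $g(\tfrac14)=\tfrac12-\tfrac14=\tfrac14$. Hence $r=\tfrac14<1$, and Theorem~\ref{thm:mult-approx-2} gives a $\frac{1}{1-r}=\frac{1}{1-1/4}=\tfrac{4}{3}$-multiplicative approximation, as claimed.

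For general $m$, I would simply substitute $c=\tfrac12$ into the bound $\tfrac12+\frac{1}{c2^{c}}$ supplied by Theorem~\ref{thm:mult-approx-general}, obtaining $\tfrac12+\frac{1}{(1/2)\cdot 2^{1/2}}=\tfrac12+\sqrt{2}\approx 1.914$, which rounds to the stated $\approx 1.91$. There is no substantive obstacle in this proof: all the real work lies in Theorems~\ref{thm:mult-approx-2} and~\ref{thm:mult-approx-general}, and what remains here is the elementary optimization of $\sqrt{t}-t$ together with two arithmetic substitutions. The only point warranting care is confirming that $t=\tfrac14$ indeed lies in the open interval $(0,1)$ over which the maximum in the definition of $r$ is taken, which it does.
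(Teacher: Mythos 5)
Your proposal is correct and matches the paper's own proof: both specialize \cref{thm:mult-approx-2} by maximizing $\sqrt{t}-t$ at $t=1/4$ to get $r=1/4$ and hence the $4/3$ factor, and both substitute $c=1/2$ into \cref{thm:mult-approx-general} to obtain $\tfrac12+\sqrt{2}\approx 1.91$. Your extra checks (concavity, $\fc(0)=0$, and $t=1/4\in(0,1)$) are fine but add nothing beyond the paper's argument.
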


\begin{proof}

For $m=2$, substituting $\fc$ into \cref{thm:mult-approx-2}, we obtain $r=(q/p)^{0.5}-q/p$. This is maximized when $q/p=1/4$, which gives us $r=1/4$. Thus, the greedy algorithm gives us a $\frac{1}{1-1/4}=\frac{4}{3}$-multiplicative approximation. For general $m$, we simply substitute $c=0.5$ into \cref{thm:mult-approx-general}, giving a $(\frac{1}{2}+\sqrt 2)$-multiplicative approximation.

\end{proof}

We note that the part of \Cref{corollary:sqrt-approx} corresponding to $m=2$ can easily be generalized to other cost functions of the form $\fc(x)=x^c$ for $c\in (0,1)$.   For general $m$, one could obtain guarantees for general concave costs by integrating over $\int_0^1 \rmassadv_S(y) \cdot \frac{\fc(y)}{y} \, dy$, analogous to \cref{equation:rmass-greed-bound}. %

\subsection{Proof of \cref{thm:mult-approx-2}}

We first extend the definition of $\Fc$ to inverse sketches (and profiles):
\begin{align}
\Fc(\isketch_p)&\triangleq \isketch_p(1)\fu(1)+\int_{0}^1\isketch_p(y)(-\fu'(y))\,dy \label{eq:fc-def-1} \\
&=\int_0^1 \isketch_p'(y)\fu(y)\, dy. \label{eq:fc-def-2}
\end{align}
We can verify that \cref{eq:fc-def-1} is consistent with $\Fc(p)$:
\begin{align*}
\Fc(\isketch_p)&=\sum_{j=1}^np_j\paren{\fu(1)+\int_{p_j}^1(-\fu'(y))\,dy}\\
&=\sum_{j=1}^np_j\fu(p_j)=\sum_{j=1}^n\fc(p_j)=\Fc(p).
\end{align*}
The proof is very similar to that of \Cref{thm:0.53} after replacing all occurrences of $H$, $x\ln(1/x)$, and $\ln(1/x)$ with $\Fc$, $\fc(x)$, and $\fu(x)$, but this time we declare the monovariant to be $M=\Fc(\iprof_S)+(1-r)\Fc(\greed)$, where $r\in (0,1)$ will be chosen such that the monovariant is nonincreasing. After the greedy step, $\Fc(\greed)$ increases by $\fc(p(1))$ and $\Fc(\iprof_S)$ increases by at most 
\begin{equation*}
-\fc(p(1))+\begin{cases}
(q(1)-p(1))\paren{\fu(q(1)-p(1))-\fu(p(1)} & q(1)<2p(1) \\
0 & \text{otherwise}
\end{cases}.
\end{equation*}

So $M$ increases by at most
\begin{align*}
&\fc(p(1))+\max_{p(1)<q(1)<2p(1)}(q(1)-p(1))(\fu(q(1)-p(1))-\fu(p)) + (1-r) \fc(p(1))\\
&= -r\fc(p(1))+\max_{0<a<p(1)}a(\fu(a)-\fu(p(1))).
\end{align*}
Now it is clear that we must choose
\begin{equation*}
r\ge \max_{0<a<p(1)}\frac{a(\fu(a)-\fu(p(1)))}{\fc(p(1))}=\max_{0<a<p(1)}\frac{a\fu(a)-a\fu(p(1))}{p\fu(p)}=\max_{0<a<p(1)}\left[\frac{\fc(a)}{\fc(p)}-\frac{a}{p}\right],    
\end{equation*}
as desired.

\subsection{Generalization of \cref{corollary:sqrt-approx}}

\begin{corollary}

When $m=2$ and$\fc(x)=x^c$, the greedy algorithm obtains a $\paren{1-c^{1/(1-c)}(1/c-1)}^{-1}$-multiplicative approximation. 

\end{corollary}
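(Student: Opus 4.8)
The plan is to obtain this corollary as an immediate specialization of \cref{thm:mult-approx-2}. That theorem reduces everything to computing the single scalar
\begin{equation*}
r \;=\; \max_{0 < q < p \le 1}\left(\frac{\fc(q)}{\fc(p)} - \frac{q}{p}\right),
\end{equation*}
and guarantees a $\tfrac{1}{1-r}$-multiplicative approximation provided $r < 1$. So the entire task is to evaluate $r$ for $\fc(x) = x^c$ and check $r<1$; no new coupling-theoretic argument is needed beyond \cref{thm:mult-approx-2}.

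First I would reduce to a one-variable problem: since $\fc(q)/\fc(p) = (q/p)^c$ and the ratio $q/p$ ranges over all of $(0,1)$, substituting $t = q/p$ gives $r = \max_{0<t<1}\left(t^c - t\right)$. Next I would locate the maximizer by differentiating: $\tfrac{d}{dt}(t^c - t) = c\,t^{c-1} - 1$, which vanishes exactly at $t^\star = c^{1/(1-c)}$. Because $c \in (0,1)$ we have $1-c>0$ and $0 < c^{1/(1-c)} < 1$, so $t^\star$ is an interior point of $(0,1)$; it is the unique interior critical point, and it is a maximum since $t^c - t \to 0$ as $t \to 0^+$ and as $t \to 1^-$. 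Plugging $t^\star$ back in, $r = (t^\star)^c - t^\star = c^{c/(1-c)} - c^{1/(1-c)} = c^{c/(1-c)}(1-c)$; rewriting the exponent via $c^{c/(1-c)} = c^{1/(1-c)}\cdot c^{-1} \cdot c^{\,1/(1-c)\cdot 0}$ — more directly, $\tfrac{c}{1-c} = \tfrac{1}{1-c} - 1$, so $c^{c/(1-c)}(1-c) = c^{1/(1-c)}\!\left(\tfrac1c - 1\right)$, matching the claimed formula.

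Finally I would verify the hypothesis $r<1$: since $t^\star \in (0,1)$ we have $(t^\star)^c < 1$ and $t^\star > 0$, hence $r = (t^\star)^c - t^\star < 1$. Therefore \cref{thm:mult-approx-2} applies and gives the $\left(1 - c^{1/(1-c)}(1/c-1)\right)^{-1}$-multiplicative approximation. As a sanity check, $c = \tfrac12$ yields $t^\star = 1/4$, $r = (1/4)^{1/2} - 1/4 = 1/4$, and ratio $4/3$, recovering the $m=2$ part of \cref{corollary:sqrt-approx}. There is no substantial obstacle here: the only points requiring care are confirming that the maximizer of $t^c - t$ is interior to $(0,1)$ (so that the unconstrained critical point is the true optimum) and the routine exponent bookkeeping showing $c^{c/(1-c)}(1-c) = c^{1/(1-c)}(1/c-1)$.
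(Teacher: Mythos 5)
Your proposal is correct and follows essentially the same route as the paper: specialize \cref{thm:mult-approx-2}, substitute $t=q/p$, find the critical point $t^\star=c^{1/(1-c)}$ of $t^c-t$, and simplify to $r=c^{1/(1-c)}(1/c-1)$. Your extra checks (that the maximizer is interior, that $r<1$ so the theorem's hypothesis holds, and the $c=\tfrac12$ sanity check) are minor additions the paper leaves implicit, not a different argument.
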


Note that for $c=\frac{1}{2}$ this is consistent with \cref{corollary:sqrt-approx}.

\begin{proof}
To apply \cref{thm:mult-approx-2} we need to compute $r=\max_{0<q<p\le 1}(q/p)^c-q/p$. Letting $t=q/p\in (0,1)$, we find 

\begin{equation*}
r=t^c-t 
\end{equation*}
\begin{equation*}
\frac{dr}{dt}=ct^{c-1}-1.
\end{equation*}
As $\frac{dr}{dt}$ is a decreasing function of $t$, we find that $r$ is maximized when $t=c^{1/(1-c)}$. Then $r=t(t^{c-1}-1)=c^{1/(1-c)}(1/c-1)$. By \cref{thm:mult-approx-2}, the greedy algorithm obtains a $(1-r)^{-1}$-multiplicative approximation, done.

\end{proof}

\subsection{Proof of \cref{thm:mult-approx-general}}

We use the same $C_S$ and $\rmassadv_S$ as in the proof of \Cref{thm:1.22}. Mirroring the steps of the proof starting from \cref{eq:multicouple-step1}, the cost of the solution found by the greedy algorithm is bounded above by
\begin{align}
\int_0^1&\fu(C_S(x))\, dx=\int_0^1 {\rmassadv_S}'(y) \fu(y)\, dy\nonumber\\
&= {\rmassadv_S}(y)\fu(y)\Big|_{y=0}^1+\int_0^1\rmassadv_S(y)(-\fu'(y))\,dy\nonumber\\
&=\fc(1)+\int_0^1\rmassadv_S(y)(-\fu'(y))\,dy\\
&=\fc(1)+\Big(\int_0^1 \frac{\iprof_S(y)+\iprof_S(2y)}{2}(-\fu'(y))+(-\fu'(y))y\int_{2y}^1\iprof'_S(t)/t\, dt \, dy\Big) \nonumber\\
&=\frac{1}{2}\paren{\fc(1)+\int_0^1 \iprof_S(y)(-\fu'(y))\, dy}+\frac{1}{2}\paren{\fc(1)+\int_0^{1}\iprof_S(2y)(-\fu'(y))\,dy}\nonumber\\
&\phantom{=}+\int_{0}^1\frac{\iprof_S'(t)}{t}\cdot \int_{0}^{t/2}-\fu'(y)y\, dy\,dt.\label{align:last-summation}
\end{align}

It remains to show that each of the three summands is bounded above by a multiple of $\Fc(\iprof_S)$.

\begin{enumerate}
    \item Using \cref{eq:fc-def-1}, twice the first summand equals $\Fc(\iprof_S)$.
    
    \item Twice the second summand is
    \begin{align*}
    \fc(1)+\int_0^{1}\iprof_S(2y)(-\fu'(y))
    &=\int_{0}^1\iprof_S'(y)\fu(y/2)\, dy\\
    &=2^{1-c}\int_{0}^1\iprof_S'(y)\fu(y)\, dy\\
    &=2^{1-c}\Fc(\iprof_S),
    \end{align*}
    where the first equality holds using integration by parts, the second by assuming $\fu(y)=y^{c-1}$, and the third by \cref{eq:fc-def-2}.

    \item First we can write 
    \begin{equation*}
        \int_0^{t/2}-\fu'(y)y\, dy=\int_{0}^{t/2}(1-c)y^{c-1}\,dy=\frac{1-c}{c}y^c\Big |_0^{t/2}=\frac{1-c}{c2^c}\fc(t).
    \end{equation*}
    So using \cref{eq:fc-def-2}, the third summand simplifies to 
    \begin{equation*}
    \frac{1-c}{c2^c}\int_0^1\iprof'(t)\fu(t)\, dt=\frac{1-c}{c2^c}\Fc(\iprof_S).
    \end{equation*}
\end{enumerate}

Putting these three summands together gives 
\begin{equation*}
(\ref{align:last-summation})=\paren{\frac{1}{2}+\frac{1}{2^c}+\frac{1-c}{c2^c}}\Fc(\iprof_S)=\paren{\frac{1}{2}+\frac{1}{c2^c}}\Fc(\iprof_S).
\end{equation*}

\begin{remark}

For $\fc(x)=x\lg(1/x)$, we were able to show an additive rather than a multiplicative guarantee because twice the second summand was bounded above by $\Fc(\iprof_S)$ plus a constant, and the third summand was bounded above by a constant.

\end{remark}

\begin{corollary}\label{corollary:general-mult-guarantee}

For general $\fc$, we can use the method in the above proof to show that the greedy algorithm achieves a $(1+a/2+b)$ multiplicative guarantee if $a$ and $b$ are constants such that $\frac{\fu(y/2)}{\fu(y)}\le a$ and $\frac{\int_0^{t/2}-\fu'(y)y\, dy}{\fc(t)}\le b$ for all values of $y\in (0,1]$ and $t\in (0,1]$, respectively.

\end{corollary}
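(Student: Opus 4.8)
The plan is to re-run the proof of \cref{thm:mult-approx-general} essentially verbatim, isolating the only two places where the specific choice $\fc(x)=x^c$ (equivalently $\fu(x)=x^{c-1}$) is used and replacing each with the corresponding hypothesis, $a$ for the first use and $b$ for the second. Everything upstream of these two uses is insensitive to the cost function: the bound $\Fc(\greed)\le\int_0^1\fu(C_S(x))\,dx$ (the cost analog of \cref{lemma:integral}, valid because $C_S\le\sketch_{\greed}$ pointwise by \cref{lemma:define-C} and $\fu$ is non-increasing, which holds for any concave $\fc$ with $\fc(0)=0$); the change of variables $\int_0^1\fu(C_S(x))\,dx=\int_0^1{\rmassadv_S}'(y)\fu(y)\,dy$; the integration by parts turning this into $\fc(1)+\int_0^1\rmassadv_S(y)(-\fu'(y))\,dy$, where the boundary term at $y=1$ equals $\rmassadv_S(1)\fu(1)=\fc(1)$ and the one at $y=0$ vanishes since $\rmassadv_S(y)\fu(y)\le n\,\fc(y)\to 0$; and the mass-only inequality $\rmassadv_S(y)\le\frac{\iprof_S(y)+\iprof_S(2y)}{2}+y\int_{2y}^1\iprof_S'(t)/t\,dt$ from \cref{eq:ds-rhs}, which is purely combinatorial and carries over unchanged.

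Substituting that last inequality splits the upper bound on $\Fc(\greed)$ into three summands, exactly as in the proof of \cref{thm:mult-approx-general}. The first, $\frac12\bigl(\fc(1)+\int_0^1\iprof_S(y)(-\fu'(y))\,dy\bigr)$, equals $\frac12\Fc(\iprof_S)$ by the definition \cref{eq:fc-def-1} (using $\iprof_S(1)=\mass(S)=1$). The second, $\frac12\bigl(\fc(1)+\int_0^1\iprof_S(2y)(-\fu'(y))\,dy\bigr)$, becomes $\frac12\int_0^1\fu(s/2)\,\iprof_S'(s)\,ds$ after an integration by parts and the substitution $s=2y$ (the part of $\iprof_S$ beyond $1$ drops out since $\iprof_S'=0$ there), and here the hypothesis $\fu(s/2)\le a\,\fu(s)$ bounds it by $\frac a2\int_0^1\fu(s)\,\iprof_S'(s)\,ds=\frac a2\Fc(\iprof_S)$ via \cref{eq:fc-def-2}. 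The third, $\int_0^1 y(-\fu'(y))\int_{2y}^1\iprof_S'(t)/t\,dt\,dy$, becomes $\int_0^1\frac{\iprof_S'(t)}{t}\bigl(\int_0^{t/2}-\fu'(y)y\,dy\bigr)\,dt$ after swapping the order of integration, and the hypothesis $\int_0^{t/2}-\fu'(y)y\,dy\le b\,\fc(t)$ together with $\fc(t)/t=\fu(t)$ bounds it by $b\int_0^1\fu(t)\,\iprof_S'(t)\,dt=b\,\Fc(\iprof_S)$.

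Adding the three pieces gives $\Fc(\greed)\le\bigl(\frac12+\frac a2+b\bigr)\Fc(\iprof_S)\le\bigl(1+\frac a2+b\bigr)\Fc(\iprof_S)$, and the concave-cost version of the profile lower bound — $\Fc(\opt)\ge\Fc(\iprof_S)$, proved exactly as \cref{theorem:profile-lb} from $\isketch_{\opt}(y)\ge\iprof_S(y)$ pointwise together with $-\fu'\ge 0$ — then yields the stated $(1+a/2+b)$-multiplicative guarantee. I expect the only genuine obstacle to be measure-theoretic hygiene when $\fc$ is merely concave rather than smooth: $\fu$ can blow up at $0$ and $\fu'$ may exist only as a negative measure with atoms at the kinks of $\fc$, so the integration-by-parts steps, the vanishing of the boundary term at $0$, and the order-swapping must be phrased in Stieltjes terms — precisely the care the main text already exercises for the Dirac-delta behavior of $\iprof_S'$. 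Once that bookkeeping is set up, each cost-dependent step collapses to a one-line invocation of its hypothesis.
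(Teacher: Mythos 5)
Your proposal is correct and takes essentially the same approach the paper intends: the corollary is meant to follow by re-running the three-summand decomposition in the proof of \cref{thm:mult-approx-general}, replacing the two uses of $\fu(y)=y^{c-1}$ (the factor $2^{1-c}$ in the second summand and the closed-form $\frac{1-c}{c2^c}\fc(t)$ in the third) with the hypotheses on $a$ and $b$, exactly as you do. You even obtain the slightly sharper constant $\tfrac12+\tfrac a2+b$, which implies the stated $(1+a/2+b)$ guarantee.
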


\begin{corollary}
For any $a<2$ such that $\frac{\fu(y/2)}{\fu(y)}\le a$ for all $y\in (0,1]$, the greedy algorithm achieves an $O\paren{\frac{1}{1-a/2}}$-multiplicative guarantee.
\end{corollary}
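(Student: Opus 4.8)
The plan is to deduce the corollary from \cref{corollary:general-mult-guarantee}, which has already reduced the problem to producing a constant $b$ satisfying $\int_0^{t/2}\bigl(-\fu'(y)\bigr)\,y\,dy\le b\,\fc(t)$ for all $t\in(0,1]$; given such a $b$, the greedy algorithm is a $(1+a/2+b)$-approximation, so it only remains to choose $b$ as a function of $a$ and simplify. Thus the whole argument is just a bound on $\int_0^{t/2}\bigl(-\fu'(y)\bigr)\,y\,dy$ in terms of the hypothesis $\fu(y/2)\le a\,\fu(y)$.

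First I would integrate by parts (exactly as in \cref{eq:fc-def-1,eq:fc-def-2}, where the boundary term at $0$ vanishes) to write $\int_0^{t/2}\bigl(-\fu'(y)\bigr)\,y\,dy=\int_0^{t/2}\fu(y)\,dy-\fc(t/2)\le\int_0^{t/2}\fu(y)\,dy$, using $\fc\ge0$. Next, since $\fu$ is non-increasing (immediate from $\fc$ concave with $\fc(0)=0$), I would split $(0,t/2]$ into the dyadic pieces $\bigl(t/2^{k+1},\,t/2^{k}\bigr]$ for $k\ge1$, bound $\fu$ on each piece by its largest value $\fu(t/2^{k+1})$, and iterate the hypothesis to get $\fu(t/2^{k+1})\le a^{k+1}\fu(t)$. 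Each piece has length $t/2^{k+1}$, so $\int_{t/2^{k+1}}^{t/2^{k}}\fu(y)\,dy\le (a/2)^{k+1}\,t\,\fu(t)=(a/2)^{k+1}\fc(t)$; summing the geometric series --- which converges precisely because $a<2$ --- gives $\int_0^{t/2}\fu(y)\,dy\le\fc(t)\sum_{k\ge1}(a/2)^{k+1}=\frac{(a/2)^2}{1-a/2}\,\fc(t)$. Hence $b:=\frac{(a/2)^2}{1-a/2}$ is admissible.

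Plugging this $b$ into \cref{corollary:general-mult-guarantee} and setting $\beta:=a/2<1$, the approximation factor collapses: $1+\beta+\frac{\beta^2}{1-\beta}=\frac{1-\beta^2}{1-\beta}+\frac{\beta^2}{1-\beta}=\frac{1}{1-\beta}=\frac{1}{1-a/2}=O\!\left(\frac{1}{1-a/2}\right)$, which proves the corollary (in fact with the explicit constant $\frac{1}{1-a/2}$).

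The only genuinely non-routine step is the geometric estimate on $\int_0^{t/2}\fu(y)\,dy$: one must check that the dyadic pieces exactly partition $(0,t/2]$, that monotonicity of $\fu$ is used in the right direction (upper-bounding each piece by its left endpoint), and that iterating the hypothesis contributes precisely the factor $a^{k+1}$. The integration by parts and telescoping also require $\fu$ to be integrable near $0$, but $\fu(y/2)\le a\,\fu(y)$ with $a<2$ forces $\fu(y)=O\bigl(y^{-\log_2 a}\bigr)$ with $\log_2 a<1$, so this holds automatically; everything else is routine.
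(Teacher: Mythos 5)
Your proof is correct and follows essentially the same route as the paper's: integration by parts to reduce to $\int_0^{t/2}\fu(y)\,dy$, a dyadic decomposition bounded by a geometric series with ratio $a/2$ via iterating $\fu(y/2)\le a\,\fu(y)$, and then invoking \cref{corollary:general-mult-guarantee}. Your added touches (making the admissible constant $b=\frac{(a/2)^2}{1-a/2}$ explicit and noting the telescoping to exactly $\frac{1}{1-a/2}$) only sharpen the paper's $O(\cdot)$ statement.
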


\begin{proof}
First, using integration by parts, $\int_0^{t/2}-\fu'(y)y\, dy=-\fc(t/2)+\int_0^{t/2}\fu(y)\, dy$. The latter summand can be rewritten as $\sum_{j=1}^{\infty}\int_{t/2^{j+1}}^{t/2^j}\fu(y)\, dy$, which in turn is bounded above by a geometric series with ratio $a/2$ and leading term $\int_{t/4}^{t/2}\fu(y)\, dy\le a^2t/4\cdot \fu(t)\le O\paren{\fc(t)}$. Thus we can take $b\le O\paren{\frac{1}{1-a/2}}$ in \cref{corollary:general-mult-guarantee}.
\end{proof}

\vfill

\end{document}